\newtheorem{theorem}{Theorem}[section]
\newtheorem{conjecture}[theorem]{Conjecture}
\newtheorem{lemma}[theorem]{Lemma}
\newdefinition{definition}[theorem]{Definition }
\newdefinition{example}[theorem]{Example }
\newcommand{\summary}[1]{\textrm{\textbf{\textup{#1}}}}
\newcommand{\bigO}{\mathop{\mathrm{O}}\nolimits}
\newproof{proof}{Proof}
\providecommand*{\Nset}{\mathbb{N}}            % Naturals
\providecommand*{\Zset}{\mathbb{Z}}            % Integers
\providecommand*{\Qset}{\mathbb{Q}}            % Rationals
\providecommand*{\Rset}{\mathbb{R}}            % Reals
\providecommand*{\nonnegRset}{\mathbb{R}_{\scriptscriptstyle{+}}}
\providecommand*{\CPset}{\mathbb{CP}}          % Closed polyhedra
\providecommand*{\Pset}{\mathbb{P}}            % (NNC) polyhedra
\providecommand*{\NNCBox}{\mathbb{B}}          % (NNC) boxes
\providecommand*{\NNCInterval}{\mathbb{I}}     % (NNC) intervals
\providecommand*{\Aset}{\mathbb{A}}
\providecommand*{\Aiset}[1]{\mathbb{A}(#1)}
\newcommand*{\cA}{\ensuremath{\mathcal{A}}}
\newcommand*{\cB}{\ensuremath{\mathcal{B}}}
\newcommand*{\cC}{\ensuremath{\mathcal{C}}}
\newcommand*{\cG}{\ensuremath{\mathcal{G}}}
\newcommand*{\cH}{\ensuremath{\mathcal{H}}}
\newcommand*{\cN}{\ensuremath{\mathcal{N}}}
\newcommand*{\cP}{\ensuremath{\mathcal{P}}}
\newcommand*{\cQ}{\ensuremath{\mathcal{Q}}}
\newcommand*{\rc}{\ensuremath{\mathrm{c}}}
\newcommand*{\rt}{\ensuremath{\mathrm{t}}}
\renewcommand{\emptyset}{\mathord{\varnothing}}
\newcommand*{\sseq}{\subseteq}
\newcommand*{\sslt}{\subset}
\newcommand*{\Sseq}{\supseteq}
\newcommand{\Nsseq}{\nsubseteq}
\newcommand*{\sqsseq}{\sqsubseteq}
\newcommand*{\union}{\cup}
\newcommand*{\bigunion}{\bigcup}
\newcommand*{\biginters}{\bigcap}
\newcommand*{\inters}{\cap}
\newcommand*{\setdiff}{\setminus}
\newcommand{\sset}[2]{{\renewcommand{\arraystretch}{1.2}
                      \left\{\,#1 \,\left|\,
                               \begin{array}{@{}l@{}}#2\end{array}
                      \right.   \,\right\}}}
\newcommand*{\fund}[3]{\mathord{#1}\colon#2\rightarrow#3}
\newcommand{\st}{\mathrel{.}}
\newcommand{\itc}{\mathrel{:}}
\newcommand*{\vect}[1]{\mathbf{#1}}
\newcommand*{\widen}{\mathbin{\nabla}}
\newcommand*{\bigland}{\mathop{\bigwedge}}
\newcommand{\defeq}{\mathrel{\mathord{:}\mathord{=}}}
\newcommand*{\proj}{\mathop{\pi}\nolimits}
\newcommand*{\octproj}{\mathop{\tilde\pi}\nolimits}
\newcommand{\card}{\mathop{\#}\nolimits}
\newcommand*{\Vpm}{\cN}
\newcommand*{\con}{\mathop{\mathrm{con}}\nolimits}
\newcommand*{\gen}{\mathop{\mathrm{gen}}\nolimits}
\newcommand*{\genleq}{\sqsubseteq}
\newcommand*{\genlt}{\sqsubset}
\renewcommand*{\bar}[1]{\overline{#1}}
\newcommand*{\bari}{\overline{\imath}}
\newcommand*{\barj}{\overline{\jmath}}
\newcommand*{\bark}{\overline{k}}
\newcommand*{\barl}{\overline{\ell}}
\newcommand*{\Graphs}{\mathbb{G}}
\newcommand*{\graphleq}{\unlhd}
\newcommand*{\graphglb}{\sqcap}
\newcommand*{\graphlub}{\sqcup}
\newcommand*{\biggraphlub}{\bigsqcup}
\newcommand*{\Octgraphs}{\mathbb{O}}
\renewcommand*{\path}{\theta}
\newcommand*{\pathconc}{\mathrel{::}}
\newcommand*{\BDshapes}{\mathbb{BD}}
\newcommand*{\IBDshapes}{\mathbb{BD}^\Zset}
\newcommand*{\bd}{\mathrm{bd}}
\newcommand*{\Octshapes}{\mathbb{OCT}}
\newcommand*{\IOctshapes}{\mathbb{OCT}^\Zset}
\newcommand*{\oct}{\mathrm{oct}}
\newcommand*{\closure}{\mathop{\mathrm{closure}}\nolimits}
\newcommand*{\strongclosure}{\mathop{\text{\rm S-closure}}\nolimits}
\newcommand*{\tightclosure}{\mathop{\text{\rm T-closure}}\nolimits}
\newcommand*{\rays}{\mathop{\mathrm{rays}}\nolimits}
\newcommand*{\transpose}{{\scriptscriptstyle\mathrm{T}}}
\newcommand*{\topclosure}{\mathop{\mathbb{C}}\nolimits}
\newcommand*{\relop}{\mathrel{\bowtie}}
\newcommand*{\relops}[1]{\mathrel{\bowtie_{#1}}}
\newcommand{\just}[1]{\text{[#1]}}
\begin{document}

\begin{frontmatter}

\title{Exact Join Detection
       for Convex Polyhedra \\
       and Other Numerical Abstractions\tnoteref{th}}

\tnotetext[th]{This work has been partly supported by PRIN project
``AIDA2007 --- Abstract Interpretation Design and Applications,''
and by EPSRC project
``EP/G00109X/1 --- Static Analysis Tools for Certifying and Debugging Programs.''
}

\author{Roberto Bagnara}
\ead{bagnara@cs.unipr.it}
\author{Patricia M. Hill}
\ead{hill@cs.unipr.it}
\author{Enea Zaffanella}
\ead{zaffanella@cs.unipr.it}

\address{Department of Mathematics, University of Parma, Italy}
%\address[Leeds]{School of Computing, University of Leeds, UK}

\begin{abstract}
Deciding whether the union of two convex polyhedra
is itself a convex polyhedron is a basic problem in polyhedral computations;
having important applications in the field of constrained control
and in the synthesis, analysis, verification and optimization
of hardware and software systems.  In such application fields
though, general convex polyhedra are just one among many, so-called,
\emph{numerical abstractions}, which range from restricted families
of (not necessarily closed) convex polyhedra to non-convex geometrical
objects.  We thus tackle the problem from an abstract point of view:
for a wide range of numerical abstractions that can be modeled
as bounded join-semilattices ---that is, partial orders where any
finite set of elements has a least upper bound---,
we show necessary and
sufficient conditions for the equivalence between the lattice-theoretic join
and the set-theoretic union.  For the case of closed convex
polyhedra ---which, as far as we know, is the only one already
studied in the literature--- we improve upon the state-of-the-art
by providing a new algorithm with a better worst-case complexity.
The results and algorithms presented for the other numerical
abstractions are new to this paper.  All the algorithms have
been implemented, experimentally validated, and made available
in the Parma Polyhedra Library.
\end{abstract}

\begin{keyword}
polyhedron, union, convexity, abstract interpretation,
numerical abstraction, powerset domain.
\end{keyword}

\end{frontmatter}

\section{Introduction}
\label{sec:introduction}

For $n \in \Nset$, let $\mathbb{D}_n \sslt \wp(\Rset^n)$ be a set of
finitely-representable sets such that $(\mathbb{D}_n, \mathord{\sseq})$
is a bounded join-semilattice, that is, a minimum element exists
as well as the least upper bound
for all $D_1, D_2 \in \mathbb{D}_n$.
Such a least upper bound ---let us denote it by $D_1 \uplus D_2$
and call it the \emph{join} of $D_1$ and $D_2$---
is, of course, not guaranteed to be equal to $D_1 \union D_2$.
More generally, we refer to the problem of deciding, for each
finite set $\{ D_1, \ldots, D_k \} \sseq \mathbb{D}_n$,
whether $\biguplus_{i=1}^k D_i = \bigunion_{i=1}^k D_i$
as the \emph{exact join detection} problem.

Examples of $\mathbb{D}_n$ include $n$-dimensional convex polyhedra,
either topologically closed or not necessarily so, restricted families
of polyhedra characterized by interesting algorithmic complexities
---such as \emph{bounded-difference} and \emph{octagonal shapes}---,
Cartesian products of some families of intervals, and other ``box-like''
geometric objects where the intervals can have ``holes'' (for instance,
Cartesian products of \emph{modulo intervals}
\cite{NakanishiF01,NakanishiJPF99} fall in this category).
All these \emph{numerical abstractions} allow to conveniently represent
or approximate the constraints arising in constrained control
(see, e.g., \cite{Jones05th})
and, more generally, in the synthesis, analysis, verification and optimization
of hardware and software systems (see, e.g., \cite{BagnaraHZ09TCS}).

The restrictions implied by convexity and/or by the ``shapes'' of the
geometric objects in $\mathbb{D}_n$ are sometimes inappropriate for the
application at hand.  In these cases, one possibility is to consider
finite sets of elements of $\mathbb{D}_n$.  For instance, many applications
in the field of hardware/software verification use constructions like
the \emph{finite powerset domain} of \cite{Bagnara98SCP}:
this is a special case of \emph{disjunctive completion}~\cite{CousotC79},
where disjunctions are implemented by maintaining an explicit (hence finite)
and \emph{non-redundant} collection of elements of $\mathbb{D}_n$.
Non-redundancy means that a collection is made of maximal elements
with respect to subset inclusion, so that no element is contained in
another element in the collection.
The finite powerset and similar constructions are such that
$Q_1 = \{ D_1, \dots, D_{h-1}, D_h, \dots, D_k \}$
and $Q_2 = \{ D_1, \dots, D_{h-1}, D \}$ are two different representations
for the same set, if $\bigunion_{i=h}^k D_i = \biguplus_{i=h}^k D_i = D$.
The latter representation is clearly more desirable, and not just
because ---being more compact--- it results in a better efficiency of
all the involved algorithms.
In the field of control engineering, the ability of efficiently simplifying
$Q_1$ into $Q_2$ can be used to reduce the complexity of the solution
to optimal control problems, thus allowing for the synthesis of cheaper
control hardware \cite{BemporadMDP02,Torrisi03th}.
Similarly, the simplification of $Q_1$ into $Q_2$ can lead to improvements
in loop optimizations obtained by automatic code generators such as
CLooG \cite{Bastoul04}.  In the same application area, this simplification
allows for a reduction in the complexity of array data-flow analysis and
for a simplification of
\emph{quasi-affine selection trees} (QUASTs).  In loop optimization,
dependencies between program statements are modeled by parametric linear
systems, whose solutions can be represented by QUASTs and computed by
tools like PIP \cite{Feautrier88}, which, however, can generate non-simplified
QUASTs.  These can be simplified efficiently provided there is an efficient
procedure for deciding the exact join property.
Another application of exact join detection is the computation of
under-approximations, which are useful, in particular, for the
approximation of contra-variant operators such as set-theoretic difference.
In fact, when the join is exact it is a safe under-approximation of the
union.
The exact join detection procedure can also be used as a preprocessing
step for the \emph{extended convex hull} problem\footnote{This is the problem
of computing a minimal set of constraints describing the convex hull
of the union of $k$ polytopes, each described by a set of
constraints.} \cite{FukudaLL01}.
Another important application of exact join detection
comes from the field of static analysis
via \emph{abstract interpretation} \cite{CousotC77,CousotC79}.
In abstract interpretation, static analysis is usually conducted by performing
a fixpoint computation.  Suppose we use the finite powerset domain
$\bigl(\wp_\mathrm{fn}(\mathbb{D}_n), \sqsseq, \emptyset, \sqcup\bigr)$:
this is the bounded join-semilattice of the \emph{finite} and
\emph{non-redundant}
subsets of $\mathbb{D}_n$ ordered by the relation given,
for each $Q_1, Q_2 \in \wp_\mathrm{fn}(\mathbb{D}_n)$, by
\[
  Q_1 \sqsseq Q_2
    \quad\iff\quad
      \forall D_1 \in Q_1 \itc
        \exists D_2 \in Q_2 \st
          D_1 \sseq D_2,
\]
and `$\mathord{\sqcup}$' is the least upper bound (join) operator induced
by `$\mathord{\sqsseq}$' \cite{BagnaraHZ06STTT}.
The system under analysis is approximated by a monotonic
(so called) \emph{abstract semantic function}
$\fund{\cA}{\wp_\mathrm{fn}(\mathbb{D}_n)}{\wp_\mathrm{fn}(\mathbb{D}_n)}$,
and the limit of the ascending chain given by $\cA$'s iterates,
\begin{equation}
\label{eq:unwidened-sequence}
  \cA^0(\emptyset), \cA^1(\emptyset), \cA^2(\emptyset), \dots,
\end{equation}
is, by construction, a sound approximation of the analyzed system's behavior.
Since $\wp_\mathrm{fn}(\mathbb{D}_n)$ has infinite ascending chains,
the standard abstract iteration sequence \eqref{eq:unwidened-sequence}
may converge very slowly or fail to converge altogether.
For this reason, a \emph{widening operator}
$\fund{\widen}{\wp_\mathrm{fn}(\mathbb{D}_n)^2}{\wp_\mathrm{fn}(\mathbb{D}_n)}$
is introduced.  This ensures that the sequence
\begin{equation}
\label{eq:widened-sequence}
  \cB^0(\emptyset), \cB^1(\emptyset), \cB^2(\emptyset), \dots.
\end{equation}
where, for each $Q \in \wp_\mathrm{fn}(\mathbb{D}_n)$,
$\cB(Q) \defeq Q \widen \bigl(Q \sqcup \cA(Q)\bigr)$,
is ultimately stationary and that the (finitely computable) fixpoint
of $\cB$ is a post-fixpoint of $\cA$, i.e., a sound approximation
of the behavior of the system under consideration.
In \cite{BagnaraHZ06STTT} three generic widening methodologies
are presented for finite powerset abstract domains.
A common trait of these methodologies is given by the fact that the
precision/efficiency trade-off of the resulting widening can be greatly
improved if domain elements are ``pairwise merged'' or even
``fully merged.''  Let the cardinality of a finite set $S$ be denoted by
$\card S$.  An element $Q = \{ D_1, \dots, D_h \}$ of
$\wp_\mathrm{fn}(\mathbb{D}_n)$ is said to be \emph{pairwise merged}
if, for each $R \sseq Q$, $\card R = 2$ implies
$\bigunion R \neq \biguplus R$;
the notion of being \emph{fully merged} is obtained by replacing
$\card R = 2$ with $\card R \geq 2$ in the above.

In this paper, we tackle the problem of exact join detection
for all the numerical abstractions that are in widespread use at the
time of writing.\footnote{Since numerical abstractions are so critical
in the field of hardware and software analysis and verification,
new ones are proposed on a regular basis.}
This problem has been studied for convex polyhedra in~\cite{BemporadFT01}.
We are not aware of any work that addresses the problem for other numerical
abstractions.

In \cite{BemporadFT01} the authors provide theoretical results and
algorithms for the exact join detection problem applied to
a pair of topologically closed convex polyhedra.
Three different specializations of the problem are considered,
depending on the chosen representation for the input polyhedra:
H-polyhedra, described by constraints (half-spaces);
V-polyhedra, described by generators (vertices); and
VH-polyhedra, described by both constraints and generators.%
\footnote{The algorithms in~\cite{BemporadFT01} for the V and VH
representations only consider the case of \emph{bounded} polyhedra,
i.e., polytopes; the extension to the unbounded case can be found
in~\cite{BemporadFT00TR}.}
The algorithms for the H and V representations, which are based on
Linear Programming techniques, enjoy a polynomial worst-case complexity
bound; the algorithm for VH-polyhedra achieves a better,
strongly polynomial bound.
For the H-polyhedra case only, it is also shown how the algorithm
can be generalized to more than two input polyhedra.
An improved theoretical result for the case of
more than two V-polytopes is stated in~\cite{BaranyF05}.

The first contribution of the present paper is a theoretical result
for the VH-polyhedra case, leading to the specification of a
new algorithm improving upon the worst-case complexity bound
of~\cite{BemporadFT00TR}.

The second contribution is constituted by original results and
algorithms concerning the exact join detection problem for
the other numerical abstractions.
For those that are restricted classes of topologically closed convex
polyhedra, one could of course use the same algorithms used for the
general case, but the efficiency would be poor.
Consider that the applications of finite powersets of numerical abstractions
range between two extremes:
\begin{itemize}
\item
those using small-cardinality powersets of complex abstractions
such as general polyhedra (see, for instance \cite{BultanGP99});
\item
those using large-cardinality powersets of simple abstractions
(for instance, verification tasks like the one described
in \cite{FrehseKR06}, can be tackled this way).
\end{itemize}
So, in general, the simplicity of the abstractions is countered by
their average number in the powersets.  It is thus clear that
specialized, efficient algorithms are needed for all numerical abstractions.
In this paper we present algorithms, each backed with the corresponding
correctness result, for the following numerical abstractions:
not necessarily closed convex polyhedra,
``box-like'' geometric objects;
rational (resp., integer) bounded difference shapes; and
rational (resp., integer) octagonal shapes.

The plan of the paper is as follows.
In Section~\ref{sec:preliminaries}, we introduce the required
technical notation and terminology.
In Section~\ref{sec:polyhedra},
we discuss the results and algorithms for convex polyhedra.
The specialized results for boxes, bounded difference shapes and
octagonal shapes are provided in Sections~\ref{sec:boxes},
\ref{sec:bd-shapes} and~\ref{sec:oct-shapes}, respectively.
Section~\ref{sec:conclusion} concludes.

\section{Preliminaries}
\label{sec:preliminaries}

The set of non-negative reals is denoted by $\nonnegRset$.
In the present paper, all topological arguments refer to the Euclidean
topological space $\Rset^n$, for any positive integer $n$.
If $S \sseq \Rset^n$, then
the \emph{topological closure} of $S$ is defined as
\(
  \topclosure(S)
    \defeq
      \biginters
        \{\,
          C \sseq \Rset^n
        \mid
          \text{$S \sseq C$ and $C$ is closed}
        \,\}
\).

For each $i \in \{1, \ldots, n\}$, $v_i$ denotes the $i$-th component
of the (column) vector $\vect{v} \in \Rset^n$; the projection on
space dimension $i$ for a set $S \sseq \Rset^n$ is denoted by
\(
  \proj_i(S) \defeq \{\, v_i \in \Rset \mid \vect{v} \in S \,\}
\).
We denote by $\vect{0}$ the vector of $\Rset^n$
having all components equal to zero.
A vector $\vect{v} \in \Rset^n$ can also be interpreted as a
matrix in $\Rset^{n \times 1}$ and manipulated accordingly with the
usual definitions for addition, multiplication (both by a scalar and
by another matrix), and transposition, which is denoted by
$\vect{v}^\transpose$.  The \emph{scalar product} of
$\vect{v},\vect{w} \in \Rset^n$, denoted
$\langle \vect{v}, \vect{w}\rangle$, is the real number
\(
  \vect{v}^\transpose \vect{w} = \sum_{i=1}^{n} v_i w_i
\).

For any relational operator
$\mathord{\relop} \in \{ =, \leq, \geq, <, > \}$, we write
$\vect{v} \relop \vect{w}$ to denote the conjunctive
proposition $\bigland_{i=1}^{n} (v_i \relop w_i)$. Moreover,
$\vect{v} \neq \vect{w}$ denotes the proposition
$\neg (\vect{v} = \vect{w})$.
We occasionally use the convenient notation
$a \relops{1} b \relops{2} c$ to denote the conjunction
$a \relops{1} b \land b \relops{2} c$ and do not distinguish
conjunctions of propositions from sets of propositions.

\subsection{Topologically Closed Convex Polyhedra}

For each vector $\vect{a} \in \Rset^n$
and scalar $b \in \Rset$, where $\vect{a} \neq \vect{0}$,
the linear non-strict inequality constraint
$\beta = \bigl( \langle \vect{a}, \vect{x} \rangle \leq b \bigr)$
defines a topologically closed affine half-space of $\Rset^n$.
The linear equality constraint
$\langle \vect{a}, \vect{x} \rangle = b$
defines an affine hyperplane.
A topologically closed convex polyhedron is usually described
as a finite system of linear equality and non-strict inequality constraints.
Theoretically speaking, it is simpler to express
each equality constraint as the intersection of the two half-spaces
%% $\beta^{+} = \bigl( \langle \vect{a}, \vect{x} \rangle \leq b \bigr)$ and
%% $\beta^{-} = \bigl( \langle -\vect{a}, \vect{x} \rangle \leq -b \bigr)$;
%% in such a case, we say that $\beta^{+}$ and $\beta^{-}$
%% are \emph{singular} constraints for polyhedron $\cP$
%% and write $\{ \beta^{+}, \beta^{-} \} \sseq \eq(\cP)$.
$\langle \vect{a}, \vect{x} \rangle \leq b$ and
$\langle -\vect{a}, \vect{x} \rangle \leq -b$.
We do not distinguish between syntactically different constraints
defining the same affine half-space so that, e.g.,
$x \leq 2$ and $2x \leq 4$ are considered to be the same constraint.

%% \begin{definition}
%% \summary{(Closed polyhedron.)}
%% The set $\cP \sseq \Rset^n$ is a \emph{closed polyhedron}
%% if and only if $\cP$ can be expressed as the intersection of
%% a finite number of closed affine half-spaces of $\Rset^n$.
%% \end{definition}

We write $\con(\cC)$ to denote the polyhedron $\cP \sseq \Rset^n$
described by the finite \emph{constraint system} $\cC$.
Formally, we define
\[
  \con(\cC)
    \defeq
      \Bigl\{\,
        \vect{p} \in \Rset^n
      \Bigm|
        \forall \beta = \bigl(
                          \langle \vect{a}, \vect{x} \rangle \leq b
                        \bigr) \in \cC
          \itc
            \langle \vect{a}, \vect{p} \rangle \leq b
      \,\Bigr\}.
\]
The function `$\mathord{\con}$' enjoys an anti-monotonicity property,
meaning that $\cC_1 \sseq \cC_2$ implies $\con(\cC_1) \Sseq \con(\cC_2)$.

Alternatively, the definition of a topologically closed convex polyhedron
can be based on some of its geometric features.
A vector $\vect{r} \in \Rset^n$ such that $\vect{r} \neq \vect{0}$
is a \emph{ray} (or \emph{direction of infinity}) of a non-empty polyhedron
$\cP \sseq \Rset^n$ if, for every point $\vect{p} \in \cP$
and every non-negative scalar $\rho \in \nonnegRset$,
we have $\vect{p} + \rho \vect{r} \in \cP$;
the set of all the rays of a polyhedron $\cP$ is denoted by $\rays(\cP)$.
A vector $\vect{l} \in \Rset^n$ is a \emph{line} of $\cP$
if both $\vect{l}$ and $-\vect{l}$ are rays of $\cP$.
%% In such a case, we say that $\vect{l}$ and $-\vect{l}$ are
%% \emph{singular} rays for polyhedron $\cP$
%% and write $\{ \vect{l}, -\vect{l} \} \sseq \lines(\cP)$.
The empty polyhedron has no rays and no lines.
As was the case for equality constraints, the theory
can dispense with the use of lines by using the corresponding
pair of rays.
Moreover, when vectors are used to denote rays, no distinction is
made between different vectors having the same direction so that,
e.g., $\vect{r}_1 = (1, 3)^\transpose$ and $\vect{r}_2 = (2, 6)^\transpose$
are considered to be the same ray in $\Rset^2$.
The following theorem is a simple consequence of
well-known theorems by Minkowski and Weyl~\cite{StoerW70}.

\begin{theorem}
\label{thm:minkowski-weyl}
The set $\cP \sseq \Rset^n$ is a closed polyhedron if and only if
there exist finite sets $R, P \sseq \Rset^n$
of cardinality $r$ and $p$, respectively,
such that $\vect{0} \notin R$ and
\[
  \cP = \gen\bigl( (R, P) \bigr)
      \defeq
        \biggl\{\,
          R \vect{\rho} + P \vect{\sigma} \in \Rset^n
        \biggm|
          \vect{\rho} \in \nonnegRset^r,
          \vect{\sigma} \in \nonnegRset^p,
          \sum_{i=1}^p \sigma_i = 1
        \,\biggr\}.
\]
\end{theorem}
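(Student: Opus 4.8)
The plan is to deduce Theorem~\ref{thm:minkowski-weyl} from the classical Minkowski--Weyl theorem for \emph{cones}~\cite{StoerW70}: a set $C \sseq \Rset^{m}$ is the solution set of a finite homogeneous system of non-strict inequalities $\bigl\{\, \langle \vect{c}_j, \vect{y}\rangle \leq 0 \,\bigr\}$ if and only if $C$ is the set of all non-negative combinations of the elements of a finite set $G \sseq \Rset^m$. The bridge between the two formulations is \emph{homogenization}: I would add an extra coordinate $x_{n+1}$, work in $\Rset^{n+1}$, and identify $\cP$ with the hyperplane section $x_{n+1} = 1$ of an auxiliary cone.

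For the ``only if'' direction, write $\cP = \con(\cC)$ for a finite constraint system $\cC$ (equalities split into pairs of non-strict inequalities, as in the text), and set
\[
  C \defeq
    \Bigl\{\,
      (\vect{x}, x_{n+1}) \in \Rset^{n+1}
    \Bigm|
      x_{n+1} \geq 0
      \text{ and, for each } \bigl(\langle \vect{a},\vect{x}\rangle \leq b\bigr) \in \cC,\;
      \langle \vect{a}, \vect{x}\rangle \leq b\, x_{n+1}
    \,\Bigr\}.
\]
This $C$ is cut out by finitely many non-strict homogeneous inequalities, and by construction $\vect{p} \in \cP$ iff $(\vect{p},1) \in C$. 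The cone theorem then supplies a finite $G \sseq \Rset^{n+1}$ whose non-negative combinations are exactly $C$; after discarding the null vector, every $\vect{g} = (\vect{h},t) \in G$ has $t \geq 0$, since $x_{n+1} \geq 0$ on all of $C$. I would collect into $P$ the rescaled vectors $\vect{h}/t$ for the generators with $t > 0$, and into $R$ the vectors $\vect{h}$ for the generators with $t = 0$ (for these $\vect{h} \neq \vect{0}$, so $\vect{0} \notin R$). Writing $(\vect{p},1) \in C$ as a non-negative combination of $G$ and inspecting the last coordinate gives precisely $\vect{p} = R\vect{\rho} + P\vect{\sigma}$ with $\vect{\rho} \in \nonnegRset^r$, $\vect{\sigma} \in \nonnegRset^p$, $\sum_{i=1}^p \sigma_i = 1$; conversely, any such combination lands in $C$ at height $1$. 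Hence $\cP = \gen\bigl((R,P)\bigr)$. (If $\cP = \emptyset$ then no generator can have $t > 0$, so $P = \emptyset$ and the right-hand side is empty too.)

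For the ``if'' direction, given finite $R, P \sseq \Rset^n$ with $\vect{0} \notin R$, let $C \sseq \Rset^{n+1}$ be the set of non-negative combinations of $\{\,(\vect{r},0) \mid \vect{r} \in R\,\} \union \{\,(\vect{p},1) \mid \vect{p} \in P\,\}$. By the cone theorem $C = \{\, \vect{y} \in \Rset^{n+1} \mid M \vect{y} \leq \vect{0}\,\}$ for some matrix $M$; splitting $M$ into its first $n$ columns $A$ and its last column $-\vect{b}$, the section at $x_{n+1} = 1$ is $\cP = \{\, \vect{x} \in \Rset^n \mid A\vect{x} \leq \vect{b}\,\}$, a finite system of non-strict inequality constraints. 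Rows of this system with vanishing $\vect{x}$-coefficients are either trivially satisfied (drop them) or unsatisfiable, in which case $\cP = \emptyset$, which is itself a closed polyhedron (e.g.\ described by the constraint system $\{\, x_1 \leq 0,\ -x_1 \leq -1 \,\}$); the remaining rows are genuine constraints. Thus $\cP$ is a closed polyhedron.

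The mathematical content is entirely carried by the cited cone version of Minkowski--Weyl; the only thing to check carefully is the homogenization bookkeeping --- the correspondence $\vect{p} \mapsto (\vect{p},1)$, the rescaling of the generators with $t > 0$, and the degenerate cases ($\cP = \emptyset$, and deconified rows whose $\vect{x}$-part is $\vect{0}$). I expect this routine bookkeeping, rather than any genuine difficulty, to be the main ``obstacle.''
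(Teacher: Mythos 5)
The paper gives no proof of this statement: it is presented as ``a simple consequence of well-known theorems by Minkowski and Weyl'' with a citation to Stoer and Witzgall, so there is nothing in the text to diverge from. Your homogenization argument (lifting to the cone $\{x_{n+1}\geq 0,\ \langle\vect{a},\vect{x}\rangle\leq b\,x_{n+1}\}$ in $\Rset^{n+1}$, invoking the cone form of Minkowski--Weyl, and rescaling generators with positive last coordinate) is correct, including the bookkeeping for the degenerate cases $\cP=\emptyset$, $P=\emptyset$, and deconified rows with vanishing $\vect{x}$-part, and it is exactly the standard derivation the paper's citation alludes to.
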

When $\cP \neq \emptyset$, we say that $\cP$ is described by
the \emph{generator system} $\cG = (R, P)$.
In particular, the vectors of $R$ and $P$ are rays and points
of $\cP$, respectively.
Thus, each point of the generated polyhedron is obtained by
adding a non-negative combination of the rays in $R$ and a convex
combination of the points in $P$.
Informally speaking, if no ``supporting point''
is provided then an empty polyhedron is obtained;
formally, $\cP = \emptyset$ if and only if $P = \emptyset$.
By convention, the empty system
(i.e., the system with $R = \emptyset$ and $P = \emptyset$)
is the only generator system for the empty polyhedron.
We define a partial order relation `$\genleq$' on generator systems,
which is the component-wise extension of set inclusion.
Namely, for any generator systems
$\cG_1 = (R_1, P_1)$ and $\cG_2 = (R_2, P_2)$, we have
$\cG_1 \genleq \cG_2$ if and only if $R_1 \sseq R_2$ and
$P_1 \sseq P_2$;
if, in addition, $\cG_1 \neq \cG_2$,
we write $\cG_1 \genlt \cG_2$.
The function `$\mathord{\gen}$' enjoys a monotonicity property,
as $\cG_1 \genleq \cG_2$ implies $\gen(\cG_1) \sseq \gen(\cG_2)$.

The vector $\vect{v} \in \cP$ is an \emph{extreme point}
(or \emph{vertex}) of the polyhedron $\cP$ if it cannot be expressed
as a convex combination of some other points of $\cP$.
Similarly, $\vect{r} \in \rays(\cP)$ is an \emph{extreme ray} of $\cP$
if it cannot be expressed as a non-negative combination of some other
rays of $\cP$.
It is worth stressing that, in general, the vectors in $R$ and $P$
are not the extreme rays and the vertices
of the polyhedron:
for instance, any half-space of $\Rset^2$ has
two extreme rays and no vertices, but any generator system
describing it will contain at least three rays and one point.

The combination of the two approaches outlined above is the basis
of the double description method due to Motzkin et al.~\cite{MotzkinRTT53},
which exploits the duality principle to compute each representation
starting from the other one, possibly minimizing both descriptions.
Clever implementations of this \emph{conversion} procedure,
such as those based on the extension by Le~Verge~\cite{LeVerge92} of
Chernikova's algorithms~\cite{Chernikova64,Chernikova65,Chernikova68},
are the starting points for the development of software libraries
based on the double description method.
While being characterized by a worst-case computational cost
that is exponential in the size of the input, these algorithms
turn out to be practically useful for the purposes of many
applications in the context of static analysis.

We denote by $\CPset_n$ the set of all topologically closed polyhedra
in $\Rset^n$, which is partially ordered by subset inclusion to form
a non-complete lattice; the finitary greatest lower bound operator
corresponds to intersection; the finitary least upper bound operator,
denoted by `$\mathord{\uplus}$', corresponds to the convex polyhedral hull.
Observe that if, for each $i \in \{1, 2\}$,
$\cP_i = \gen\bigl((R_i, P_i)\bigr)$,
then the convex polyhedral hull is
$\cP_1 \uplus \cP_2 = \gen\bigl((R_1 \union R_2, P_1 \union P_2)\bigr)$.

\subsection{Not Necessarily Closed Convex Polyhedra}

The linear strict inequality constraint
$\beta = \bigl( \langle \vect{a}, \vect{x} \rangle > b \bigr)$
defines a topologically open affine half-space of $\Rset^n$.
A not necessarily closed (NNC) convex polyhedron is defined by
a finite system of strict and non-strict inequality constraints.
Since by using lines, rays and points we can only represent
topologically closed polyhedra, the key step for a parametric description
of NNC polyhedra is the introduction of a new kind of generator
called a \emph{closure point}~\cite{BagnaraHZ05FAC}.
\begin{definition}
\label{def:closure-point}
\summary{(Closure point.)}
A vector $\vect{c} \in \Rset^n$ is a \emph{closure point}
of $S \sseq \Rset^n$ if and only if $\vect{c} \in \topclosure(S)$.
\end{definition}
For a non-empty NNC polyhedron $\cP \sseq \Rset^n$, a vector
$\vect{c} \in \Rset^n$ is a closure point of $\cP$ if and only if
\(
  \sigma \vect{p} + (1 - \sigma) \vect{c} \in \cP
\)
for every point $\vect{p} \in \cP$
and every $\sigma \in \Rset$ such that $0 < \sigma < 1$.
By excluding the case when $\sigma = 0$, $\vect{c}$ is not forced
to be in $\cP$.

The following theorem taken from~\cite{BagnaraHZ05FAC} is a generalisation of
Theorem~\ref{thm:minkowski-weyl} to NNC polyhedra.
\begin{theorem}
\label{thm:NNC-minkowski-weyl}
The set $\cP \sseq \Rset^n$ is an NNC polyhedron if and only if
there exist finite sets $R, P, C \sseq \Rset^n$
of cardinality $r$, $p$ and $c$, respectively,
such that $\vect{0} \notin R$ and
\[
  \cP
    = \gen\bigl( (R, P, C) \bigr)
    \defeq
      \sset{
        R \vect{\rho} + P \vect{\sigma} + C \vect{\tau} \in \Rset^n
      }{
        \vect{\rho} \in \nonnegRset^r,
        \vect{\sigma} \in \nonnegRset^p, \vect{\sigma} \neq \vect{0}, \\
        \vect{\tau} \in \nonnegRset^c, \\
        \sum_{i=1}^p \sigma_i + \sum_{i=1}^c \tau_i = 1
      }.
\]
\end{theorem}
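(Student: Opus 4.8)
The plan is to reduce everything to the Minkowski--Weyl theorem for topologically closed polyhedra (Theorem~\ref{thm:minkowski-weyl}) via the standard ``$\epsilon$-lifting'' into $\Rset^{n+1}$, where a fresh coordinate $\epsilon$ records the ``slack'' of the strict constraints. Both directions of the equivalence are obtained by moving a representation up to dimension $n+1$, invoking the closed case there, and projecting back onto the first $n$ coordinates while retaining only the part with $\epsilon > 0$.

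For the ``only if'' direction, suppose $\cP$ is described by non-strict constraints $\langle\vect a_i,\vect x\rangle\le b_i$ ($i\in I$) and strict constraints $\langle\vect a_j,\vect x\rangle > b_j$ ($j\in J$); the case $\cP=\emptyset$ is handled by taking $R=P=C=\emptyset$, so assume $\cP\neq\emptyset$. Let $\cP'\sseq\Rset^{n+1}$ be the closed polyhedron obtained by keeping every non-strict constraint, replacing each $\langle\vect a_j,\vect x\rangle > b_j$ by $\langle\vect a_j,\vect x\rangle\ge b_j+\epsilon$, and adding $0\le\epsilon\le1$. A short check shows $\cP=\{\,\vect x\mid\exists\epsilon>0\st(\vect x,\epsilon)\in\cP'\,\}$, i.e.\ $\cP$ is the projection onto the first $n$ coordinates of $\cP'\cap\{\epsilon>0\}$. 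By Theorem~\ref{thm:minkowski-weyl}, $\cP'=\gen\bigl((R',P')\bigr)$ for finite $R',P'\sseq\Rset^{n+1}$. Since $0\le\epsilon\le1$ on $\cP'$, every ray of $\cP'$ has last component $0$ and every point of $\cP'$ has last component $\ge0$. Define $R$ to be the set of projections of the vectors in $R'$ (these are nonzero, so $\vect0\notin R$), $P$ the set of projections of those points of $P'$ whose last component is positive, and $C$ the set of projections of those points of $P'$ whose last component is $0$. A segment argument --- joining such a projection to a point of $\cP$ lifted with positive $\epsilon$ --- shows each element of $C$ lies in $\topclosure(\cP)$, hence is a closure point. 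The identity $\cP=\gen\bigl((R,P,C)\bigr)$ then follows by tracking the last component of a conic/convex combination of generators of $\cP'$: it is positive exactly when some point of $P$ (not merely of $C$) receives positive weight, which is precisely the condition $\vect\sigma\neq\vect0$ in the statement.

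For the ``if'' direction, given finite $R,P,C\sseq\Rset^n$ with $\vect0\notin R$, lift rays to $(\vect r,0)$, points to $(\vect p,1)$, and closure points to $(\vect c,0)$ regarded as points; let $R''$ and $P''$ collect these and put $\cP''\defeq\gen\bigl((R'',P'')\bigr)$, a closed polyhedron in $\Rset^{n+1}$ by Theorem~\ref{thm:minkowski-weyl}. Computing the last component of a generic element of $\cP''$ shows that $\gen\bigl((R,P,C)\bigr)$ --- hence, by hypothesis, $\cP$ --- coincides with $\{\,\vect x\mid\exists\epsilon>0\st(\vect x,\epsilon)\in\cP''\,\}$, the constraint $\epsilon>0$ corresponding again to $\vect\sigma\neq\vect0$. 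Write $\cP''=\con(\cC'')$ for a finite constraint system $\cC''$ (which, as above, may be assumed to contain $0\le\epsilon\le1$), and eliminate $\epsilon$ from $\cC''\cup\{\epsilon>0\}$ by Fourier--Motzkin: constraints not mentioning $\epsilon$ are retained; pairing an upper bound on $\epsilon$ with a non-strict lower bound yields a non-strict constraint on $\vect x$; pairing an upper bound on $\epsilon$ with the \emph{strict} lower bound $\epsilon>0$ yields a \emph{strict} constraint on $\vect x$. Because $\epsilon\le1$ guarantees that an upper bound always exists, no degenerate case arises, and the outcome is a finite system of strict and non-strict inequalities defining $\cP$; hence $\cP$ is an NNC polyhedron.

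I expect the routine but error-prone work to be the two ``projection identities'' $\cP=\proj_{1\dots n}(\cP'\cap\{\epsilon>0\})$ and its analogue for $\cP''$, together with the bookkeeping that splits $P'$ according to the sign of the last component and re-assembles it as $P\cup C$. The one genuinely delicate point is verifying that the vectors placed in $C$ really are closure points of $\cP$ (rather than of some larger set): this relies on $\cP\neq\emptyset$ supplying a lifted point with $\epsilon>0$ to form the required segments, so the empty polyhedron must be dispatched separately. On the algebraic side the only subtlety is the standard fact that single-variable Fourier--Motzkin elimination from a mixed strict/non-strict system again produces a mixed strict/non-strict system, with strictness propagating from any strict bound involved in a combination --- here this is exactly what turns the auxiliary constraint $\epsilon>0$ into the strict inequalities of the output.
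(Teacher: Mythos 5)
The paper itself offers no proof of this theorem---it is quoted from \cite{BagnaraHZ05FAC}---but your argument is correct and is essentially the standard $\epsilon$-representation technique used in that reference, reducing NNC polyhedra to closed polyhedra in $\Rset^{n+1}$ via a bounded slack dimension and invoking Theorem~\ref{thm:minkowski-weyl} there. Both of your directions check out: the two projection identities, the splitting of the lifted generator points according to the sign of the $\epsilon$-component (with $\vect{\sigma} \neq \vect{0}$ corresponding exactly to $\epsilon > 0$), and the strictness-propagating Fourier--Motzkin step are all sound, with only routine caveats about the empty polyhedron and trivially true or inconsistent constraints produced by the elimination.
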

When $\cP \neq \emptyset$, we say that $\cP$ is described by
the \emph{extended generator system} $\cG = (R, P, C)$.
As was the case for closed polyhedra, the vectors in $R$ and $P$
are rays and points of $\cP$, respectively.
The condition $\vect{\sigma} \neq \vect{0}$ ensures that
at least one of the points of $P$ plays an active role
in any convex combination of the vectors of $P$ and $C$.
The vectors of $C$ are closure points of $\cP$.
Since both rays and closure points need a supporting point, we have
$\cP = \emptyset$ if and only if $P = \emptyset$.
The partial order relation `$\genleq$' on generator systems is
easily extended to also take into account the closure points component,
so that the overloading of the function `$\mathord{\gen}$' still satisfies
the monotonicity property.

The set of all NNC polyhedra in $\Rset^n$, denoted $\Pset_n$,
is again a non-complete lattice partially ordered by subset inclusion,
having $\CPset_n$ as a sublattice.
As for the set of closed polyhedra $\CPset_n$,
the finitary greatest lower bound operator
corresponds to intersection; the finitary least upper bound operator,
again denoted by `$\mathord{\uplus}$', corresponds to the
not necessarily closed convex polyhedral hull.
Observe that if, for each $i \in \{1, 2\}$,
$\cP_i = \gen\bigl((R_i, P_i, C_i)\bigr)$,
then the convex polyhedral hull is
\(
  \cP_1 \uplus \cP_2
    = \gen\bigl((R_1 \union R_2, P_1 \union P_2, C_1 \union C_2)\bigr)
\).

\subsection{Subsumption and Saturation}

A point (resp., ray, closure point) $\vect{v} \in \Rset^n$
is said to be \emph{subsumed} by a polyhedron $\cP$ if and only if
$\vect{v}$ is a point (resp., ray, closure point) of $\cP$.

A (closure) point $\vect{p} \in \Rset^n$ is said to \emph{saturate}
a constraint
\(
  \beta = \bigl(
            \langle \vect{a}, \vect{x} \rangle \relop b
          \bigr)
\),
where $\mathord{\relop} \in \{ =, \leq, \geq, <, > \}$,
if and only if
$\langle \vect{a}, \vect{p} \rangle = b$;
a ray $\vect{r} \in \Rset^n$ is said to saturate
the same constraint $\beta$ if and only if
$\langle \vect{a}, \vect{r} \rangle = 0$.

\section{Exact Join Detection for Convex Polyhedra}
\label{sec:polyhedra}
In this section, we provide results for the exact join detection problem
for convex polyhedra.
Here we just consider the case when a double description representation is
available; that is, in the proposed methods, we exploit both the
constraint and the generator descriptions of the polyhedra.

\subsection{Exact Join Detection for Topologically Closed Polyhedra}

First we consider the exact join detection problem for closed
polyhedra since, in this case, given any two closed polyhedra
$\cP_1, \cP_2 \in \CPset_n$,
we have that $\cP_1 \union \cP_2$ is convex if and only
if $\cP_1 \uplus \cP_2 = \cP_1 \union \cP_2$.
Before stating and proving the main result for this section, we
present the following lemma that establishes some simple conditions
that will ensure the union of two closed polyhedra is not convex.

\begin{lemma}
\label{lem:union-of-closed-polyhedra-is-not-convex}
Let $\cP_1, \cP_2 \in \CPset_n$ be
topologically closed non-empty polyhedra.
Suppose there exist a constraint $\beta$ and
a vector $\vect{p}$ such that
\textup{(1)} $\vect{p}$ saturates $\beta$,
\textup{(2)} $\beta$ is satisfied by $\cP_1$ but violated by $\cP_2$, and
\textup{(3)} $\vect{p} \in \cP_1 \setdiff \cP_2$.
Then, $\cP_1 \union \cP_2$ is not convex.
\end{lemma}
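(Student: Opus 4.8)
The plan is to derive a contradiction from the assumption that $\cP_1 \union \cP_2$ is convex, by exhibiting a point of the convex hull that lies outside $\cP_1 \union \cP_2$. Write $\beta = \bigl( \langle \vect{a}, \vect{x} \rangle \leq b \bigr)$ (recall that each non-strict inequality defines a closed half-space, and we may take $\beta$ in this normalized form). By hypothesis (2), $\cP_1 \sseq \con(\{\beta\})$, so $\langle \vect{a}, \vect{x} \rangle \leq b$ holds throughout $\cP_1$; also by (2) there is a point $\vect{q} \in \cP_2$ with $\langle \vect{a}, \vect{q} \rangle > b$. By hypothesis (1), the point $\vect{p}$ satisfies $\langle \vect{a}, \vect{p} \rangle = b$, and by hypothesis (3), $\vect{p} \in \cP_1$ but $\vect{p} \notin \cP_2$.

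The key step is to look at the open segment between $\vect{p}$ and $\vect{q}$, i.e.\ the points $\vect{x}_\lambda \defeq (1-\lambda)\vect{p} + \lambda \vect{q}$ for $0 < \lambda < 1$. Since both $\vect{p}$ and $\vect{q}$ belong to $\cP_1 \union \cP_2$, if that union were convex then every $\vect{x}_\lambda$ would lie in $\cP_1 \union \cP_2$. I then compute $\langle \vect{a}, \vect{x}_\lambda \rangle = (1-\lambda) b + \lambda \langle \vect{a}, \vect{q}\rangle = b + \lambda\bigl(\langle \vect{a}, \vect{q}\rangle - b\bigr) > b$ for every $\lambda \in (0,1)$, because $\langle \vect{a}, \vect{q}\rangle - b > 0$. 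Hence no $\vect{x}_\lambda$ with $0 < \lambda < 1$ can lie in $\cP_1$, as $\cP_1$ satisfies $\beta$. Therefore every such $\vect{x}_\lambda$ must lie in $\cP_2$.

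Now I push $\lambda \to 0^{+}$: the points $\vect{x}_\lambda \in \cP_2$ converge to $\vect{p}$, so $\vect{p} \in \topclosure(\cP_2) = \cP_2$, using that $\cP_2$ is topologically closed. This contradicts hypothesis (3), which says $\vect{p} \notin \cP_2$. Consequently $\cP_1 \union \cP_2$ is not convex, as claimed.

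The only real subtlety — the ``hard part'', such as it is — is making sure the limit argument is valid, which is exactly where the closedness assumption on $\cP_2$ is used; without it one could only conclude $\vect{p} \in \topclosure(\cP_2)$, which is consistent with $\vect{p} \notin \cP_2$ and the statement would fail (this is why the NNC case, treated later, needs a different formulation). Everything else is a one-line affine computation. One should also note the degenerate-looking case where $\cP_1 \subseteq$ the hyperplane $\langle \vect{a},\vect{x}\rangle = b$ is already covered: the argument never uses that $\cP_1$ has points strictly satisfying $\beta$, only that it does not violate it.
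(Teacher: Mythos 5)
Your proof is correct, and while it starts from the same geometric configuration as the paper's proof---the segment joining $\vect{p}$ to a point $\vect{p}_2 \in \cP_2$ that violates $\beta$, together with the observation that every point strictly beyond $\vect{p}$ on that segment violates $\beta$ and so cannot lie in $\cP_1$---the way you close the argument is genuinely different. You argue by contradiction: if the union were convex, all those points would have to lie in $\cP_2$, and letting them tend to $\vect{p}$ forces $\vect{p} \in \topclosure(\cP_2) = \cP_2$, contradicting hypothesis~(3). The paper instead stays constructive: since $\vect{p} \notin \cP_2$ and $\cP_2$ is closed, it picks a non-strict constraint $\beta_2$ satisfied by $\cP_2$ but violated by $\vect{p}$, takes the point $\vect{q} \neq \vect{p}$ of the segment that saturates $\beta_2$, and observes that the non-empty open segment $(\vect{p}, \vect{q})$ violates both $\beta$ and $\beta_2$, hence is disjoint from $\cP_1 \union \cP_2$ while being contained in $\cP_1 \uplus \cP_2$. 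Your limit argument is more elementary (no second constraint is needed) and you correctly identify that it hinges on the closedness of $\cP_2$; your normalization of $\beta$ to the form $\langle \vect{a}, \vect{x} \rangle \leq b$ is also harmless. What the paper's construction buys is an explicit witness segment lying in the join but outside the union, and it is precisely this construction that transfers (with $\topclosure$ replacing the polyhedra in condition~(3)) to the NNC analogue, Lemma~\ref{lem:union-of-nnc-polyhedra-is-not-convex}, where your closure-based contradiction would no longer suffice---as you yourself observe.
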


\begin{proof}
(See also Figure~\ref{subfig:lemmas:closed}.)
By (2), there exists a point $\vect{p}_2 \in \cP_2$ that violates $\beta$.
Consider the closed line segment $s \defeq [\vect{p}, \vect{p}_2]$;
by (1), the one and only point on this segment
that satisfies $\beta$ is $\vect{p}$;
by (3),  $\vect{p} \in \cP_1$ so that $s \sseq \cP_1 \uplus \cP_2$.
Also by (3), $\vect{p} \notin \cP_2$, so that
there exists a non-strict constraint $\beta_2$
that is satisfied by $\cP_2$ but violated by $\vect{p}$.
Since $\vect{p}_2 \in \cP_2$,
there exists a vector $\vect{q} \in s$ that saturates $\beta_2$
and $\vect{q} \neq \vect{p}$.
It follows that the open line segment
$s_1 \defeq (\vect{p}, \vect{q})$ is non-empty
and every point in $s_1$ violates both $\beta$ and $\beta_2$;
hence $s_1 \inters \cP_1 = s_1 \inters \cP_2 = \emptyset$.
However, by construction,
\[
  (\vect{p}, \vect{q})
    \sslt [\vect{p}, \vect{p}_2]
      \sseq \cP_1 \uplus \cP_2,
\]
so that $\cP_1 \uplus \cP_2 \neq \cP_1 \union \cP_2$.
Therefore $\cP_1 \union \cP_2$ is not convex.
\qed
\end{proof}

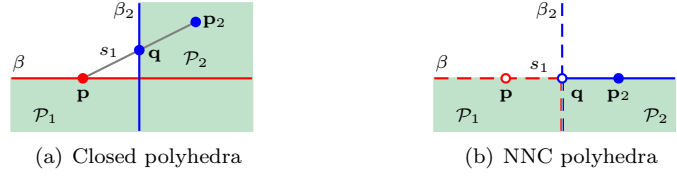
\begin{figure}
\centering
\mbox{
\scriptsize{
\subfigure[Closed polyhedra]{
\label{subfig:lemmas:closed}
\begin{picture}(120,55)(-15,-10)
\put(0,0){
\setlength{\unitlength}{0.7pt}%
\psset{xunit=1cm,yunit=1cm,runit=1cm}
\psset{origin={0,0}}
\pspolygon[linecolor=lightgreen,fillcolor=lightgreen,fillstyle=solid](-0.2,-0.2)(3,-0.2)(3,0.5)(-0.2,0.5)
\pspolygon[linecolor=lightgreen,fillcolor=lightgreen,fillstyle=solid](1.5,-0.2)(3,-0.2)(3,1.5)(1.5,1.5)
\psline[linecolor=red,linestyle=solid](-0.2,0.5)(3,0.5)
\rput(-0.1,0.7){$\beta$}
\psline[linecolor=blue,linestyle=solid](1.5,-0.2)(1.5,1.5)
\rput(1.3,1.4){$\beta_2$}
\psline[linecolor=dgray](0.75,0.5)(2.25,1.25) %% s
%%\psline[linecolor=dgreen](0.75,0.5)(1.5,0.875) %% s1
\rput(1.1,0.85){$s_1$}
\pscircle*[linecolor=red](0.75,0.5){2pt}
\rput(0.75,0.27){$\vect{p}$}
\pscircle*[linecolor=blue](2.25,1.25){2pt}
\rput(2.5,1.25){$\vect{p}_2$}
\pscircle*[linecolor=blue](1.5,0.875){2pt}
\rput(1.7,0.8){$\vect{q}$}
\rput(0.25,0){$\cP_1$}
\rput(2.25,0.75){$\cP_2$}
}
\end{picture}
}
\qquad\qquad
\subfigure[NNC polyhedra]{
\label{subfig:lemmas:nnc}
\begin{picture}(120,55)(-15,-10)
\put(0,0){
\setlength{\unitlength}{0.7pt}%
\psset{xunit=1cm,yunit=1cm,runit=1cm}
\psset{origin={0,0}}
\pspolygon[linecolor=lightgreen,fillcolor=lightgreen,fillstyle=solid](-0.2,-0.2)(1.5,-0.2)(1.5,0.5)(-0.2,0.5)
\pspolygon[linecolor=lightgreen,fillcolor=lightgreen,fillstyle=solid](1.5,-0.2)(3,-0.2)(3,0.5)(1.5,0.5)
\psline[linecolor=red,linestyle=dashed,linewidth=0.5pt](1.485,-0.2)(1.485,0.5)
\psline[linecolor=red,linestyle=dashed](-0.2,0.5)(1.5,0.5)
\rput(-0.1,0.7){$\beta$}
\psline[linecolor=blue,linestyle=solid](1.5,0.5)(3,0.5)
\psline[linecolor=blue,linestyle=dashed,linewidth=0.5pt](1.515,-0.2)(1.515,0.5)
\psline[linecolor=blue,linestyle=dashed](1.5,0.5)(1.5,1.5)
\rput(1.3,1.4){$\beta_2$}
\rput(1.2,0.65){$s_1$}
\pscircle*[linecolor=red](0.75,0.5){2pt}
\pscircle*[linecolor=white](0.75,0.5){1.2pt}
\rput(0.75,0.27){$\vect{p}$}
\pscircle*[linecolor=blue](2.25,0.5){2pt}
\rput(2.25,0.27){$\vect{p}_2$}
\pscircle*[linecolor=blue](1.5,0.5){2pt}
\pscircle*[linecolor=white](1.5,0.5){1.2pt}
\rput(1.7,0.27){$\vect{q}$}
\rput(0.25,0){$\cP_1$}
\rput(2.75,0){$\cP_2$}
}
\end{picture}
}
}
}
\caption{Pictorial representations for
Lemmas~\ref{lem:union-of-closed-polyhedra-is-not-convex}
and~\ref{lem:union-of-nnc-polyhedra-is-not-convex}}
\label{fig:lemmas}
\end{figure}

\begin{theorem}
\label{thm:union-of-closed-polyhedra-is-convex}
Let $\cP_1, \cP_2 \in \CPset_n$ be
topologically closed non-empty polyhedra,
where $\cP_1 = \con(\cC_1) = \gen(\cG_1)$.
Then $\cP_1 \uplus \cP_2 \neq \cP_1 \union \cP_2$
if and only if there exist
a constraint $\beta_1 \in \cC_1$ and
a generator $g_1$ in $\cG_1$ such that
\textup{(1)} $g_1$ saturates $\beta_1$,
\textup{(2)} $\cP_2$ violates $\beta_1$, and
\textup{(3)} $\cP_2$ does not subsume $g_1$.
\end{theorem}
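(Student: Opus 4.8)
The statement is an equivalence, and my plan is to reduce the ``if'' direction to Lemma~\ref{lem:union-of-closed-polyhedra-is-not-convex} and to prove the ``only-if'' direction by a direct argument about the generators of $\cP_1$ lying on a face. Throughout I will use that $\beta_1 \in \cC_1$ forces $\cP_1$ to satisfy $\beta_1$, that a point of $\cG_1$ belongs to $\cP_1$ (Theorem~\ref{thm:minkowski-weyl}), and that, for closed polyhedra, $\cP_1 \uplus \cP_2 \neq \cP_1 \union \cP_2$ amounts to $\cP_1 \union \cP_2$ being non-convex. For the ``if'' direction, suppose $\beta_1 \in \cC_1$ and $g_1 \in \cG_1$ meet conditions (1)--(3). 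When $g_1$ is a point $\vect{p}$ the reduction is immediate: $\vect{p} \in \cP_1$, $\vect{p}$ saturates $\beta_1$ by (1), $\vect{p} \notin \cP_2$ by (3), and $\cP_2$ violates $\beta_1$ by (2), so Lemma~\ref{lem:union-of-closed-polyhedra-is-not-convex} applied to $(\cP_1,\cP_2)$ with constraint $\beta_1$ and vector $\vect{p}$ yields the claim.

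The delicate case is $g_1$ a ray $\vect{r}$: writing $\beta_1 = \bigl(\langle\vect{a},\vect{x}\rangle \leq b\bigr)$ we have $\langle\vect{a},\vect{r}\rangle = 0$, $\vect{r} \in \rays(\cP_1)$, but $\vect{r} \notin \rays(\cP_2)$; here one cannot feed $\beta_1$ to the lemma directly, since there need be no point of $\cP_1$ saturating $\beta_1$. Instead I would take a point $\vect{p}_2 \in \cP_2$ violating $\beta_1$ and slide it along $\vect{r}$: since $\vect{r}$ is not a ray of the closed polyhedron $\cP_2$, the half-line $\{\,\vect{p}_2 + \rho\vect{r} \mid \rho \in \nonnegRset\,\}$ is not contained in $\cP_2$, and, $\cP_2$ being closed, its farthest point $\vect{q}$ in $\cP_2$ saturates some non-strict constraint $\beta_2 = \bigl(\langle\vect{b},\vect{x}\rangle \leq c\bigr)$ describing $\cP_2$ with $\langle\vect{b},\vect{r}\rangle > 0$ (a pigeonhole-and-limit argument over the finitely many constraints of $\cP_2$, in the style of the proof of Lemma~\ref{lem:union-of-closed-polyhedra-is-not-convex}). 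Because $\langle\vect{a},\vect{q}\rangle = \langle\vect{a},\vect{p}_2\rangle > b$, the vector $\vect{q}$ lies outside $\cP_1$; and because $\langle\vect{b},\vect{p}_1 + \rho\vect{r}\rangle \to +\infty$ for any $\vect{p}_1 \in \cP_1$, the polyhedron $\cP_1$ violates $\beta_2$. Applying Lemma~\ref{lem:union-of-closed-polyhedra-is-not-convex} to the pair $(\cP_2,\cP_1)$ with constraint $\beta_2$ and vector $\vect{q}$ then completes this direction.

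For the ``only-if'' direction, assume $\cP_1 \union \cP_2$ is non-convex. By convexity of $\cP_1$ and $\cP_2$, a witnessing segment can be chosen with endpoints $\vect{x} \in \cP_1$ and $\vect{y} \in \cP_2$, and then the sets $\{\,t \in [0,1] \mid \vect{x} + t(\vect{y}-\vect{x}) \in \cP_i\,\}$ are closed subintervals $[0,t_1]$ and $[t_2,1]$ of $[0,1]$ with $t_1 < t_2$. Let $\vect{u} \defeq \vect{x} + t_1(\vect{y}-\vect{x})$ be the exit point from $\cP_1$. Arguing as in the sliding step above, some $\beta_1 = \bigl(\langle\vect{a},\vect{x}\rangle \leq b\bigr) \in \cC_1$ is saturated by $\vect{u}$ and has $\langle\vect{a},\vect{y}-\vect{x}\rangle > 0$, and a short computation from $\langle\vect{a},\vect{u}\rangle = b$ and $\langle\vect{a},\vect{x}\rangle \leq b$ gives $\langle\vect{a},\vect{y}\rangle > b$, so $\cP_2$ violates $\beta_1$ (item (2)). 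To extract the generator, write $\cG_1 = (R_1,P_1)$ and set $R_1' \defeq \{\,\vect{r} \in R_1 \mid \langle\vect{a},\vect{r}\rangle = 0\,\}$ and $P_1' \defeq \{\,\vect{p} \in P_1 \mid \langle\vect{a},\vect{p}\rangle = b\,\}$, the generators of $\cG_1$ saturating $\beta_1$. Since $\langle\vect{a},\vect{r}\rangle \leq 0$ for every $\vect{r} \in R_1$ and $\langle\vect{a},\vect{p}\rangle \leq b$ for every $\vect{p} \in P_1$, one checks that $\gen\bigl((R_1',P_1')\bigr)$ is exactly the face $F_1 \defeq \{\,\vect{v} \in \cP_1 \mid \langle\vect{a},\vect{v}\rangle = b\,\}$, which contains $\vect{u}$. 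If every generator in $R_1' \union P_1'$ were subsumed by $\cP_2$, then by convexity and closedness of $\cP_2$ we would get $F_1 \sseq \cP_2$, hence $\vect{u} \in \cP_2$, hence the whole subsegment with parameters in $[t_1,1]$ would lie in $\cP_2$, contradicting $t_1 < t_2$. So some $g_1 \in R_1' \union P_1'$ is not subsumed by $\cP_2$, and it saturates $\beta_1$ by construction, giving (1) and (3).

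The routine parts are the small linear-algebra checks (that $\cP_1$ satisfies $\beta_1$, the sign computation yielding $\langle\vect{a},\vect{y}\rangle > b$, and the identity $\gen\bigl((R_1',P_1')\bigr) = F_1$) and the bookkeeping on the parameter intervals. I expect the two genuinely non-trivial points to be: in the ``if'' direction, the ray case, where $\beta_1$ by itself does not suffice and one must fabricate the auxiliary separating constraint $\beta_2$ of $\cP_2$ and invoke Lemma~\ref{lem:union-of-closed-polyhedra-is-not-convex} with the roles of $\cP_1$ and $\cP_2$ interchanged; and, in the ``only-if'' direction, the fact that the generators of $\cG_1$ saturating $\beta_1$ generate the corresponding face of $\cP_1$ --- which is precisely the step that exploits the availability of both a constraint and a generator description of $\cP_1$.
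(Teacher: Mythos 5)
Your proof is correct, and its overall architecture coincides with the paper's: the ``if'' direction is reduced to Lemma~\ref{lem:union-of-closed-polyhedra-is-not-convex}, and the ``only if'' direction follows a witnessing segment to the exit point of $\cP_1$, picks the blocking constraint $\beta_1 \in \cC_1$ saturated there, and extracts a generator saturating $\beta_1$ that $\cP_2$ cannot subsume (your identification of $\gen\bigl((R_1',P_1')\bigr)$ with the face is exactly the paper's appeal to Theorem~\ref{thm:minkowski-weyl}, and your pigeonhole-and-limit choice of the blocking constraint just makes explicit a step the paper leaves terse). The one genuine divergence is the ray case of the ``if'' direction. The paper stays on the $\cP_1$ side: it slides a point of $\cP_1$ saturating $\beta_1$ along the ray until it leaves $\cP_2$ and applies the lemma to $(\cP_1,\cP_2)$; because $\cC_1$ is not assumed minimal, this forces a sub-case in which no point of $\cP_1$ saturates $\beta_1$, handled by replacing $\beta_1$ with a parallel supported constraint $\beta'_1$. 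You instead swap roles: slide a point $\vect{p}_2 \in \cP_2$ violating $\beta_1$ along $\vect{r}$ to its exit point $\vect{q}$ from $\cP_2$, take the blocking constraint $\beta_2$ of (any constraint description of) $\cP_2$ with $\langle\vect{b},\vect{r}\rangle > 0$, note that $\vect{q}$ still violates $\beta_1$ (so $\vect{q} \notin \cP_1$) and that $\cP_1$ violates $\beta_2$ since $\vect{r} \in \rays(\cP_1)$, and invoke the lemma for the pair $(\cP_2,\cP_1)$. This is sound and buys a uniform treatment of possibly redundant constraints $\beta_1$, at the modest cost of invoking a constraint representation of $\cP_2$ (always available) and, exactly as the paper does implicitly, the standard fact that a direction whose half-line from one point of a closed polyhedron stays inside must be a ray of that polyhedron.
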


\begin{proof}
Suppose first that $\cP_1 \uplus \cP_2 \neq \cP_1 \union \cP_2$.
As  `$\mathord{\uplus}$' is the least upper bound operator for closed polyhedra,
there exist points $\vect{p}_1 \in \cP_1 \setdiff \cP_2$ and
$\vect{p}_2 \in \cP_2 \setdiff \cP_1$ such that
\[
    [\vect{p}_1, \vect{p}_2] \Nsseq (\cP_1 \union \cP_2).
\]
As $\vect{p}_1 \in \cP_1$, there exists a point
\[
  \vect{p}
    \defeq
      (1 - \sigma)\vect{p}_1 + \sigma \vect{p}_2
        \in [\vect{p}_1, \vect{p}_2] \inters \cP_1
\]
such that $\sigma \in \nonnegRset$ is maximal
(note that, by convexity, $\sigma \leq 1$);
then $\vect{p}$ must saturate a constraint $\beta_1 \in \cC_1$.
Moreover $\vect{p} \notin \cP_2$ since, otherwise,
we would have $[\vect{p}_1, \vect{p}] \sseq \cP_1$ and
$[\vect{p}, \vect{p}_2] \sseq \cP_2$, contradicting
$[\vect{p}_1, \vect{p}_2] \Nsseq \cP_1 \union \cP_2$.
Hence $\vect{p}_2$ does not satisfy $\beta_1$
so that $\cP_2$ violates $\beta_1$.
Let $\cG'_1$ be the generator system containing
all the points and rays in $\cG_1$ that saturate $\beta_1$.
Then $\vect{p} \in \gen(\cG'_1)$.
By Theorem~\ref {thm:minkowski-weyl}, as $\vect{p} \notin \cP_2$,
there is a point or ray $g_1$ in $\cG'_1$ that is not subsumed by $\cP_2$.
Hence conditions~(1), (2) and~(3) are all satisfied.

Suppose now that there exist
a constraint $\beta_1 \in \cC_1$ and
a generator $g_1$ in $\cG_1$ such that
conditions~(1), (2) and~(3) hold.
Then, as $\cP_1 = \con(\cC_1)$, $\beta_1$ is satisfied by $\cP_1$.
If $g_1 \defeq \vect{p}_1$ is a point,
then, by letting $\beta \defeq \beta_1$ and $\vect{p} \defeq \vect{p}_1$
in Lemma~\ref{lem:union-of-closed-polyhedra-is-not-convex},
the required three conditions hold so that $\cP_1 \union \cP_2$ is not convex.
Now suppose that $g_1 \defeq \vect{r}_1$ is a ray for $\cP_1$.
Suppose there exists a point $\vect{p}'_1 \in \cP_1$
that saturates the constraint $\beta_1$.
By condition~(3), $\vect{r}_1$ is not a ray for $\cP_2$;
hence for some $\rho \in \nonnegRset$ there exists a point
$\vect{p}_1 \defeq \vect{p}'_1 + \rho \vect{r}_1 \in \cP_1 \setdiff \cP_2$
that also saturates $\beta_1$.
Hence, letting $\beta \defeq \beta_1$ and $\vect{p} \defeq \vect{p}_1$
in Lemma~\ref{lem:union-of-closed-polyhedra-is-not-convex},
the required three conditions hold so that $\cP_1 \union \cP_2$ is not convex.
Otherwise, no point in $\cP_1$ saturates $\beta_1$.%
\footnote{This may happen because we made no minimality assumption
on the constraint system $\cC_1$, so that $\beta_1$ may be redundant.}
Suppose, for some $\vect{a} \in \Rset^n$ and $b \in \Rset$,
$\beta_1 = \bigl( \langle \vect{a}, \vect{x} \rangle \relop b \bigr)$;
then, since $\cP_1 \neq \emptyset$,
there exist a point $\vect{p}'_1 \in \cP_1$ and a constraint
$\beta'_1 \defeq \bigl( \langle \vect{a}, \vect{x} \rangle \relop b' \bigr)$
such that $\cP_1$ satisfies $\beta'_1$ and
$\vect{p}'_1$ saturates $\beta'_1$;
note that $\beta'_1$ is also saturated by ray $\vect{r}_1$.
Thus we can construct, as done above, a point
$\vect{p}_1 \defeq \vect{p}'_1 + \rho \vect{r}_1 \in \cP_1 \setdiff \cP_2$
that saturates $\beta'_1$.
Hence, letting $\beta \defeq \beta'_1$ and $\vect{p} \defeq \vect{p}_1$
in Lemma~\ref{lem:union-of-closed-polyhedra-is-not-convex},
the required three conditions hold so that $\cP_1 \union \cP_2$ is not convex.
Therefore, in all cases,
$\cP_1 \uplus \cP_2 \neq \cP_1 \union \cP_2$.
\qed
\end{proof}

\begin{example}
\label{ex:inexact-closed-polyhedra}
Consider the closed polyhedra
\begin{align*}
  \cP_1 &= \con(\cC_1)
        = \con\bigl(\{ x_1 \geq 0, x_2 \geq 0, x_1 + x_2 \leq 2\}\bigr)\\
        &= \gen(\cG_1)
        = \gen\bigl( (\emptyset, P) \bigr), \\
  \cP_2 &= \con(\cC_2)
        = \con\bigl(\{ x_1 \leq 2, x_2 \geq 0, x_1 - x_2 \geq 0\}\bigr),
\intertext{%
where
\(
  P = \bigl\{
        (0,0)^\transpose, (2,0)^\transpose, (0,2)^\transpose
      \bigr\}
\).
Then
}
  \cP_1 \uplus \cP_2
        &= \con\bigl(\{ x_1 \geq 0, x_2 \geq 0, x_1 \leq 2, x_2 \leq 2\}\bigr)
\end{align*}
so that
\(
  (1,1)^\transpose
    \in (\cP_1 \uplus \cP_2) \setdiff (\cP_1 \union \cP_2)
\)
and, hence, $\cP_1 \uplus \cP_2 \neq \cP_1 \union \cP_2$.
In Theorem~\ref{thm:union-of-closed-polyhedra-is-convex},
let $\beta_1 = (x_1 + x_2 \leq 2)$ and $g_1 = (0,2)^\transpose$.
Then conditions~(1), (2) and~(3) are all satisfied.
\end{example}

For each $i \in \{1, 2\}$, let $l_i$ and $m_i$ denote
the number of constraints in $\cC_i$ and generators in $\cG_i$,
respectively.
Then, the worst-case complexity of an algorithm based on
Theorem~\ref{thm:union-of-closed-polyhedra-is-convex},
computed by summing the complexities for checking
each of the conditions~(1), (2) and~(3),
is in $\bigO\bigl( n (l_1m_1 + l_1m_2 + l_2m_1) \bigr)$.
Notice that the complexity bound is not symmetric so that,
if $l_1 m_1 \gg l_2 m_2$ holds, then an efficiency improvement
can be obtained by exchanging the roles of $\cP_1$ and $\cP_2$
in the theorem.
In all cases, an improvement is obtained with respect to
the $\bigO\bigl( n (l_1 + l_2) m_1 m_2) \bigr)$ complexity bound
of Algorithm~7.1 in~\cite{BemporadFT01}.

\subsection{Exact Join Detection for Not Necessarily Closed Polyhedra}

We now consider the exact join detection problem for two NNC
polyhedra $\cP_1, \cP_2 \in \Pset_n$; in this case,
it can happen that $\cP_1 \uplus \cP_2 \neq \cP_1 \union \cP_2$
although the union $\cP_1 \union \cP_2$ is convex.

\begin{figure}
\centering
\mbox{
\scriptsize{
\subfigure[$\cP, \cQ \in \Pset_2$]{
\label{subfig:nnc-example-P1-P2}
\begin{picture}(100,90)(0,-6)
%% First pair of NNC polyhedron
\put(0,0){
\setlength{\unitlength}{0.7pt}%
\psset{xunit=1cm,yunit=1cm,runit=1cm}
\psset{origin={0,0}}
%% \pspicture*[](-0.6,-0.3)(3,2.8)
\psline{->}(-0.5,0)(3,0)
\psline{->}(0,-0.3)(0,2.8)
\rput(-0.2,-0.2){O}
\rput(2.7,-0.2){$x_1$}
\rput(-0.2,2.4){$x_2$}
\pspolygon[linecolor=red,linestyle=dashed,fillcolor=lightgreen,fillstyle=solid](1,0.5)(2,0.5)(2,2)(1,2)
\pscircle*[linecolor=white](1,0.5){2pt}
\pscircle[linecolor=red](1,0.5){2pt}
\rput(0.8,0.3){$A$}
\pscircle*[linecolor=white](2,0.5){2pt}
\pscircle[linecolor=red](2,0.5){2pt}
\rput(2.2,0.3){$B$}
\pscircle*[linecolor=white](2,2){2pt}
\pscircle[linecolor=red](2,2){2pt}
\rput(2.2,2.2){$C$}
\pscircle*[linecolor=white](1,2){2pt}
\pscircle[linecolor=red](1,2){2pt}
\rput(0.8,2.2){$D$}
\pscircle*[linecolor=blue](2,1.25){2pt}
\rput(2.3,1.25){$E$}
\rput(1.55,1.25){$\cP$}
\rput(2.6,1.75){$\cQ$}
\pscurve[linewidth=0.5pt]{->}(2.4,1.75)(2.25,1.6)(2.05,1.32)
%% \endpspicture
}
\end{picture}
}
\
\subfigure[$\cP, \cQ' \in \Pset_2$]{
\label{subfig:nnc-example-P1-P3}
\begin{picture}(100,90)(0,-6)
%% Second pair of NNC polyhedra
\put(0,0){
\setlength{\unitlength}{0.7pt}%
\psset{xunit=1cm,yunit=1cm,runit=1cm}
\psset{origin={0,0}}
%% \pspicture*[](-0.6,-0.3)(3,2.8)
\psline{->}(-0.5,0)(3,0)
\psline{->}(0,-0.3)(0,2.8)
\rput(-0.2,-0.2){O}
\rput(2.7,-0.2){$x_1$}
\rput(-0.2,2.4){$x_2$}
\pspolygon[linecolor=red,linestyle=dashed,fillcolor=lightgreen,fillstyle=solid](1,0.5)(2,0.5)(2,2)(1,2)
\psline[linecolor=blue,linestyle=solid](2,0.5)(2,2)
\pscircle*[linecolor=white](1,0.5){2pt}
\pscircle[linecolor=red](1,0.5){2pt}
\rput(0.8,0.3){$A$}
\pscircle*[linecolor=white](2,0.5){2pt}
\pscircle[linecolor=red](2,0.5){2pt}
\rput(2.2,0.3){$B$}
\pscircle*[linecolor=white](2,2){2pt}
\pscircle[linecolor=red](2,2){2pt}
\rput(2.2,2.2){$C$}
\pscircle*[linecolor=white](1,2){2pt}
\pscircle[linecolor=red](1,2){2pt}
\rput(0.8,2.2){$D$}
\rput(1.55,1.25){$\cP$}
\rput(2.6,1.75){$\cQ'$}
\pscurve[linewidth=0.5pt]{->}(2.4,1.75)(2.25,1.6)(2.05,1.32)
%% \pscircle*[linecolor=blue](2,1.25){2pt}
}
\end{picture}
}
\
\subfigure[$\cP \uplus \cQ = \cP \uplus \cQ'$]{
\label{subfig:nnc-example-poly-hull}
\begin{picture}(100,90)(0,-6)
%% Second pair of NNC polyhedra
\put(0,0){
\setlength{\unitlength}{0.7pt}%
\psset{xunit=1cm,yunit=1cm,runit=1cm}
\psset{origin={0,0}}
%% \pspicture*[](-0.6,-0.3)(3,2.8)
\psline{->}(-0.5,0)(3,0)
\psline{->}(0,-0.3)(0,2.8)
\rput(-0.2,-0.2){O}
\rput(2.7,-0.2){$x_1$}
\rput(-0.2,2.4){$x_2$}
\pspolygon[linecolor=dgreen,linestyle=dashed,fillcolor=lightgreen,fillstyle=solid](1,0.5)(2,0.5)(2,2)(1,2)
\psline[linecolor=dgreen,linestyle=solid](2,0.5)(2,2)
\pscircle*[linecolor=white](1,0.5){2pt}
\pscircle[linecolor=dgreen](1,0.5){2pt}
\rput(0.8,0.3){$A$}
\pscircle*[linecolor=white](2,0.5){2pt}
\pscircle[linecolor=dgreen](2,0.5){2pt}
\rput(2.2,0.3){$B$}
\pscircle*[linecolor=white](2,2){2pt}
\pscircle[linecolor=dgreen](2,2){2pt}
\rput(2.2,2.2){$C$}
\pscircle*[linecolor=white](1,2){2pt}
\pscircle[linecolor=dgreen](1,2){2pt}
\rput(0.8,2.2){$D$}
}
\end{picture}
}
}
}
\caption{The convex polyhedral hull of NNC polyhedra}
\label{fig:nnc-examples}
\end{figure}
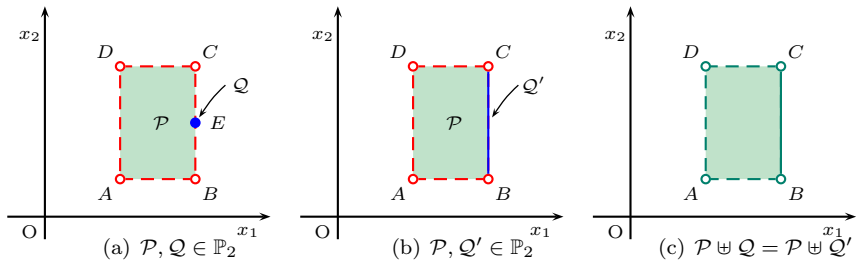

\begin{example}
\label{ex:nnc-polyhedra-convex-union-is-not-the-hull}
Consider the NNC polyhedra $\cP$ and $\cQ$ in
Figure~\ref{subfig:nnc-example-P1-P2}, where $\cP$ is the open
rectangle $ABCD$ and $\cQ$ is the single point $E$.
The union $\cP \union \cQ$ is convex but it is not an NNC polyhedron:
the convex polyhedral hull $\cP \uplus \cQ$
(see Figure~\ref{subfig:nnc-example-poly-hull})
also contains the line segment $(B,C)$ and hence
$\cP \uplus \cQ \neq \cP \union \cQ$.
On the other hand,
if we now consider $\cP$ and $\cQ'$, as shown in
Figure~\ref{subfig:nnc-example-P1-P3},
where $\cQ'$ is the line segment $(B,C)$,
then the convex polyhedral hull $\cP \uplus \cQ'$
is such that $\cP \uplus \cQ' = \cP \uplus \cQ = \cP \union \cQ'$.
\end{example}

Before stating and proving the main result for this section, we
present a lemma similar to
Lemma~\ref{lem:union-of-closed-polyhedra-is-not-convex}
but generalized so as to apply to NNC polyhedra.

\begin{lemma}
\label{lem:union-of-nnc-polyhedra-is-not-convex}
Let $\cP_1, \cP_2 \in \Pset_n$ be non-empty polyhedra.
Suppose that there exist
a constraint $\beta$ and a vector $\vect{p}$ such that
\textup{(1)} $\vect{p}$ saturates $\beta$,
\textup{(2)} $\beta$ is satisfied by $\cP_1$ but violated by $\cP_2$, and
\textup{(3)}  $\vect{p} \in \topclosure(\cP_1) \setdiff \topclosure(\cP_2)$.
Then $\cP_1 \uplus \cP_2 \neq \cP_1 \union \cP_2$.
\end{lemma}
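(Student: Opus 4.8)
The plan is to adapt the geometric argument of Lemma~\ref{lem:union-of-closed-polyhedra-is-not-convex}: we shall exhibit a non-empty open line segment contained in $\cP_1 \uplus \cP_2$ but disjoint from both $\cP_1$ and $\cP_2$, which forces $\cP_1 \uplus \cP_2 \neq \cP_1 \union \cP_2$. The only genuinely new issue, compared with the closed case, is that $\vect{p}$ is merely a \emph{closure point} of $\cP_1$, so it need not belong to $\cP_1$ nor to $\cP_1 \uplus \cP_2$; the witnessing segment must therefore stay clear of $\vect{p}$ itself.

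First, by condition~(2) we fix a point $\vect{p}_2 \in \cP_2$ that violates $\beta$. Writing $\beta$ as a (strict, non-strict, or equality) constraint on $\langle \vect{a}, \vect{x}\rangle$ and $b$, condition~(1) gives $\langle \vect{a}, \vect{p}\rangle = b$, and since $\vect{p}_2$ violates $\beta$ the value $\langle \vect{a}, \vect{p}_2\rangle$ lies on the forbidden side of $b$ (or merely differs from $b$, if $\beta$ is an equality). Setting $\vect{y}(t) \defeq (1-t)\vect{p} + t\vect{p}_2$, affinity gives $\langle \vect{a}, \vect{y}(t)\rangle = b + t\bigl(\langle \vect{a}, \vect{p}_2\rangle - b\bigr)$, and a short case analysis on the type of $\beta$ shows that $\vect{y}(t)$ violates $\beta$ for every $t \in (0,1]$. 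Since $\cP_1$ satisfies $\beta$, no such $\vect{y}(t)$ lies in $\cP_1$.

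Next we separate $\vect{p}$ from $\cP_2$. Since $\topclosure(\cP_2)$ is a topologically closed convex polyhedron with $\vect{p} \notin \topclosure(\cP_2)$ by condition~(3), it is described by a finite system of non-strict constraints, at least one of which, say $\beta_2 = \bigl(\langle \vect{a}_2, \vect{x}\rangle \leq b_2\bigr)$, is violated by $\vect{p}$; this $\beta_2$ is satisfied by $\topclosure(\cP_2)$, hence by $\cP_2$, and in particular by $\vect{p}_2$. (This is exactly the step performed for $\cP_2$ in the proof of Lemma~\ref{lem:union-of-closed-polyhedra-is-not-convex}, now applied to the closed polyhedron $\topclosure(\cP_2)$.) The affine map $t \mapsto \langle \vect{a}_2, \vect{y}(t)\rangle$ equals $\langle \vect{a}_2, \vect{p}\rangle > b_2$ at $t=0$ and is $\leq b_2$ at $t=1$, so there is $t^\ast \in (0,1]$ with $\langle \vect{a}_2, \vect{y}(t^\ast)\rangle = b_2$ and $\langle \vect{a}_2, \vect{y}(t)\rangle > b_2$ for all $t \in [0, t^\ast)$. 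Put $\vect{q} \defeq \vect{y}(t^\ast)$ and $s_1 \defeq (\vect{p}, \vect{q})$; then $\vect{q} \neq \vect{p}$, so $s_1$ is non-empty, and every point of $s_1$ violates $\beta_2$ and hence lies outside $\topclosure(\cP_2)$, so outside $\cP_2$. Since also $s_1 = \{\, \vect{y}(t) \mid t \in (0, t^\ast) \,\} \sseq \{\, \vect{y}(t) \mid t \in (0,1] \,\}$, the previous paragraph yields $s_1 \inters \cP_1 = s_1 \inters \cP_2 = \emptyset$.

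It remains to check that $s_1 \sseq \cP_1 \uplus \cP_2$. From $\cP_1 \sseq \cP_1 \uplus \cP_2$ we get $\topclosure(\cP_1) \sseq \topclosure(\cP_1 \uplus \cP_2)$, so by Definition~\ref{def:closure-point} the vector $\vect{p}$ is a closure point of the non-empty NNC polyhedron $\cP_1 \uplus \cP_2$. Applying the closure-point characterisation stated just after that definition, with the point $\vect{p}_2 \in \cP_2 \sseq \cP_1 \uplus \cP_2$, gives $\vect{y}(t) \in \cP_1 \uplus \cP_2$ for every $t \in (0,1)$; since $t^\ast \leq 1$ we have $s_1 \sseq \{\, \vect{y}(t) \mid t \in (0,1) \,\} \sseq \cP_1 \uplus \cP_2$. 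As $s_1$ is non-empty and disjoint from $\cP_1 \union \cP_2$, this proves $\cP_1 \uplus \cP_2 \neq \cP_1 \union \cP_2$. The crux ---the one place where the closed-case proof does not transfer verbatim--- is precisely this last step: because $\vect{p}$ may be missing from $\cP_1 \uplus \cP_2$, we cannot invoke convexity to put the closed segment $[\vect{p}, \vect{p}_2]$ inside the hull, and must instead use the closure-point characterisation to place the half-open segment $(\vect{p}, \vect{p}_2]$ there; the remainder is a routine transcription of Lemma~\ref{lem:union-of-closed-polyhedra-is-not-convex}, with $\topclosure(\cP_2)$ in the role of $\cP_2$.
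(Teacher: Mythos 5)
Your proposal is correct and follows essentially the same route as the paper's proof: fix $\vect{p}_2 \in \cP_2$ violating $\beta$, use the closure-point characterisation to place the half-open segment $(\vect{p}, \vect{p}_2]$ inside $\cP_1 \uplus \cP_2$, separate $\vect{p}$ from $\topclosure(\cP_2)$ by a constraint $\beta_2$, and exhibit the non-empty open segment $(\vect{p}, \vect{q})$ lying in the hull but disjoint from $\cP_1 \union \cP_2$. The only difference is that you make explicit (via the first-crossing parameter $t^\ast$ and the case analysis on the relational operator of $\beta$) details that the paper leaves implicit.
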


\begin{proof}
(See also Figures~\ref{subfig:lemmas:closed} and~\ref{subfig:lemmas:nnc}.)
By (2), there exists a point $\vect{p}_2 \in \cP_2$ that violates $\beta$.
Consider the line segment $s \defeq (\vect{p}, \vect{p}_2]$;
by (1), no point on $s$ satisfies $\beta$;
by (3), $\vect{p} \in \topclosure(\cP_1)$ so that $s \sseq \cP_1 \uplus \cP_2$.
Also, by (3), $\vect{p} \notin \topclosure(\cP_2)$ so that
there exists a constraint $\beta_2$
that is satisfied by $\topclosure(\cP_2)$ but violated by $\vect{p}$.
Since $\vect{p} \notin \cP_2$ and $\vect{p}_2 \in \cP_2$,
there exists a vector $\vect{q} \in s$ that
saturates $\beta_2$.
It follows that, as $\vect{q} \neq \vect{p}$, the open line segment
$s_1 \defeq (\vect{p}, \vect{q})$ is non-empty
and every point in $s_1$ violates both $\beta$ and $\beta_2$;
hence $s_1 \inters \cP_1 = s_1 \inters \cP_2 = \emptyset$.
However, by construction,
\[
  (\vect{p}, \vect{q})
    \sslt (\vect{p}, \vect{p}_2]
      \sseq (\cP_1 \uplus \cP_2),
\]
so that $\cP_1 \uplus \cP_2 \neq \cP_1 \union \cP_2$.
\qed
\end{proof}

\begin{theorem}
\label{thm:union-of-nnc-polyhedra-iff-hull}
For $i \in \{1,2\}$, let $\cP_i = \con(\cC_i) = \gen(\cG_i) \in \Pset_n$
be non-empty polyhedra.
Then $\cP_1 \uplus \cP_2 \neq \cP_1 \union \cP_2$
if and only if, for some $i,j \in \{1,2\}$, $i \neq j$,
there exists a generator $g_i$ in $\cG_i$
that saturates a constraint $\beta_i \in \cC_i$ violated by $\cP_j$
and at least one of the following hold:
\begin{itemize}
\item[\textup{(1)}]
 $g_i$ is a ray or closure point in $\cG_i$ that is not subsumed by $\cP_j$;
\item[\textup{(2)}]
 $g_i$ is a point in $\cG_i$, $\beta_i$ is non-strict
 and $g_i \notin \topclosure(\cP_j)$;
\item[\textup{(3)}]
 $\beta_i$ is strict and saturated by a point
 $\vect{p} \in (\cP_1 \uplus \cP_2) \setdiff \cP_j$.
\end{itemize}
\end{theorem}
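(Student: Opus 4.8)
The statement is an equivalence whose right-hand side is symmetric in the two indices, so I would prove the two implications separately, using in the ``if'' direction the freedom to rename $\{i,j\}$ so that $i=1$ and $j=2$, and in the ``only if'' direction producing a witness for some (unordered) pair.

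For the ``if'' direction, assume the right-hand side holds with $i=1$, $j=2$, and recall that $\cP_1=\con(\cC_1)$ satisfies $\beta_1$. Case~(3) needs no lemma: were $\cP_1\uplus\cP_2$ equal to $\cP_1\union\cP_2$, the point $\vect p$ would lie in $\cP_1$ (it is not in $\cP_2$), yet $\vect p$ saturates the strict constraint $\beta_1\in\cC_1$, contradicting $\vect p\in\con(\cC_1)$. In cases~(1) and~(2) I would reduce to Lemma~\ref{lem:union-of-nnc-polyhedra-is-not-convex} by exhibiting a constraint $\beta$ and a vector $\vect p$ such that $\vect p$ saturates $\beta$, $\beta$ is satisfied by $\cP_1$ and violated by $\cP_2$, and $\vect p\in\topclosure(\cP_1)\setdiff\topclosure(\cP_2)$: if $g_1$ is a closure point or a genuine point, take $\vect p:=g_1$ and $\beta:=\beta_1$ directly (a genuine point of $\cP_1$ can only saturate a non-strict constraint of $\cC_1$, which is why case~(2) stipulates this); if $g_1$ is a ray, first replace $\beta_1$ by its tightest parallel relaxation $\beta_1'$ (to cope with the absence of a minimality hypothesis on $\cC_1$, exactly as in the ray case of the proof of Theorem~\ref{thm:union-of-closed-polyhedra-is-convex}), which is still violated by $\cP_2$ and is saturated by $g_1$ and by some point or closure point $\vect p'$ of $\topclosure(\cP_1)$; since $g_1$ is not a ray of $\cP_2$ (equivalently, of $\topclosure(\cP_2)$), for $\rho$ large enough $\vect p:=\vect p'+\rho\,g_1$ lies in $\topclosure(\cP_1)\setdiff\topclosure(\cP_2)$ and still saturates $\beta_1'$. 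Then Lemma~\ref{lem:union-of-nnc-polyhedra-is-not-convex} applies.

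For the ``only if'' direction I would fix $\vect w\in(\cP_1\uplus\cP_2)\setdiff(\cP_1\union\cP_2)$ and use two observations repeatedly. First, any inequality satisfied by both $\cP_1$ and $\cP_2$ defines a half-space, hence an NNC polyhedron containing both, hence is satisfied by the least such, $\cP_1\uplus\cP_2$; consequently, since $\vect w$ violates some constraint of $\cC_1$ (resp.\ of $\cC_2$) and $\cP_1=\con(\cC_1)$, that constraint is violated by $\cP_2$ (resp.\ by $\cP_1$). Second, if a point $\vect p$ of the closed polyhedron $\topclosure(\cP_i)=\gen\bigl((R_i,P_i\union C_i)\bigr)$ saturates a constraint satisfied by $\topclosure(\cP_i)$, then comparing the values of the defining functional shows that every generator occurring with a positive coefficient in a representation of $\vect p$ saturates that constraint; in particular some point or closure point of $\cG_i$ does, and if moreover $\vect p\notin\topclosure(\cP_j)$ then some such generator is not subsumed by $\cP_j$, since otherwise $\vect p$, being built from rays of $\cP_j$ and points of the closed convex set $\topclosure(\cP_j)$, would lie in $\topclosure(\cP_j)$. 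Now I split into three cases. If $\vect w\in\topclosure(\cP_i)$ for some $i$ (say $i=1$), then $\vect w\in\topclosure(\cP_1)\setdiff\cP_1$, so $\vect w$ saturates some strict $\beta_1\in\cC_1$; the first observation gives that $\cP_2$ violates $\beta_1$, the second gives a generator $g_1\in\cG_1$ saturating $\beta_1$, and since $\vect w\in(\cP_1\uplus\cP_2)\setdiff\cP_2$, case~(3) holds. Otherwise $\vect w$ is in neither closure; writing it through the generator system $(R_1\union R_2,P_1\union P_2,C_1\union C_2)$ of $\cP_1\uplus\cP_2$ (Theorem~\ref{thm:NNC-minkowski-weyl}) and grouping contributions by origin, let $\mu_1,\mu_2$ be the total weights of the point-and-closure-point part coming from $\cG_1$ and $\cG_2$, with $\mu_1+\mu_2=1$. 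If both $\mu_k>0$, then $\vect w=\mu_1\vect u_1+\mu_2\vect u_2$ with $\vect u_k\in\topclosure(\cP_k)$ and $\vect w$ strictly between $\vect u_1$ and $\vect u_2$; walking along $[\vect u_1,\vect u_2]$ to the last point $\vect p$ lying in $\topclosure(\cP_1)$, the point $\vect p$ saturates some $\beta_1\in\cC_1$ (just past $\vect p$ the segment leaves $\con(\cC_1)$), $\vect w$ still violates $\beta_1$ so $\cP_2$ does too, and $\vect p\notin\topclosure(\cP_2)$ because $\topclosure(\cP_2)$ meets the segment in an interval ending at $\vect u_2$ but not reaching $\vect w$; the second observation then yields a generator of $\cG_1$ saturating $\beta_1$ and not subsumed by $\cP_2$, which is case~(1) or case~(2). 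If instead $\mu_1=0$ (the case $\mu_2=0$ being symmetric), then $\vect w=\vect q_2+\vect z_1$ with $\vect q_2\in\cP_2$ and $\vect z_1$ a nonzero ray of $\cP_1$; here $\vect z_1$ is not a ray of $\cP_2$ and $\vect q_2\notin\topclosure(\cP_1)$, for otherwise $\vect w$ would be in $\cP_2$ or in $\topclosure(\cP_1)$. I would walk along the half-line $h(t):=\vect q_2+t\,\vect z_1$: since $\vect z_1$ is a recession direction of $\topclosure(\cP_1)$, the set $\{t:h(t)\in\topclosure(\cP_1)\}$ is upward closed, hence (as $h(1)=\vect w\notin\topclosure(\cP_1)$) disjoint from $[0,1]$; since $\vect z_1$ is not a recession direction of $\topclosure(\cP_2)$, the set $\{t\ge 0:h(t)\in\topclosure(\cP_2)\}$ is a bounded interval $[0,\tau^*]$ with $\tau^*<1$; thus $\vect p:=h(\tau^*)$ lies in $\topclosure(\cP_2)\setdiff\topclosure(\cP_1)$, saturates some $\beta_2\in\cC_2$, and $\vect w$ violates $\beta_2$, so $\cP_1$ does too, and the second observation produces a generator of $\cG_2$ saturating $\beta_2$ and not subsumed by $\cP_1$, which is case~(1) or case~(2) with the indices exchanged.

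The hard part is this last case, where $\vect w$ is a combination of a point of one polyhedron and a pure recession direction of the other that escapes both closures: the naïve walk between two points fails there, and the device of walking along the recession direction—exploiting that it is a ray of one polyhedron but not of the other, so the two relevant intersection intervals cannot overlap around $\tau^*$—is what makes the argument close. The remaining delicate but routine points are the absence of a minimality hypothesis on the constraint systems (handled in the ``if'' direction by passing to tightest parallel relaxations, and automatically avoided in the ``only if'' direction because each walk lands on a constraint whose hyperplane it touches), and keeping straight the distinction between a generator being not subsumed by $\cP_j$ and being outside $\topclosure(\cP_j)$, which is exactly what separates case~(1) from case~(2).
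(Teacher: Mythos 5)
Your ``if'' direction coincides with the paper's: condition~(3) is discharged directly, and conditions~(1)--(2) are reduced to Lemma~\ref{lem:union-of-nnc-polyhedra-is-not-convex}, with the same tightest-parallel-relaxation device in the ray sub-case. (You even share with the paper the one small wrinkle there: if $\beta_1$ is strict and $\sup_{\cP_1}\langle \vect{a},\vect{x}\rangle$ already equals the bound $b$, the relaxation $\beta'_1$ need not be violated by $\cP_2$; in that degenerate sub-case one simply applies the lemma to $\beta_1$ itself, which $\vect{p}_1$ then saturates.) The ``only if'' direction is where you genuinely diverge. The paper first asserts---appealing to the remark after Definition~\ref{def:closure-point}---that inexactness is always witnessed by a half-open segment $(\vect{p}_i, \vect{p}_j]$ from a closure point of one polyhedron to a point of the other, then walks along that segment to the last point of $\topclosure(\cP_1)$ and reads off which of (1), (2), (3) holds according to whether the saturated constraint is strict. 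You instead take an arbitrary witness $\vect{w} \in (\cP_1 \uplus \cP_2) \setdiff (\cP_1 \union \cP_2)$, expand it in the generator system of the hull, and split into three cases: $\vect{w}$ in one of the closures (giving condition~(3) directly), a proper convex combination of closure points of both, or a point of one plus a pure recession direction of the other. This buys self-containedness---you never rely on the segment-witness fact, which the paper leaves essentially unproved---at the price of handling the unbounded case explicitly; your observation that the two intersection intervals along the half-line cannot overlap, because the direction recedes for one closure but not the other, is precisely what the paper's closure-point formulation hides. Both arguments then converge on the same extraction step: a point of $\topclosure(\cP_i)$ saturating a constraint forces every positively-weighted generator in its representation to saturate it, and non-membership in $\topclosure(\cP_j)$ forces one of them to be non-subsumed. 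Do make explicit in the write-up the remark you currently leave implicit: when that non-subsumed generator is a genuine point of $P_i$, it lies in $\cP_i$ and hence can only saturate a non-strict constraint of $\cC_i$, and your argument actually yields $g_i \notin \topclosure(\cP_j)$ rather than merely $g_i \notin \cP_j$; both facts are needed to land the conclusion in condition~(2) rather than leaving it stranded between (1) and (2).
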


\begin{proof}
Suppose first that $\cP_1 \uplus \cP_2 \neq \cP_1 \union \cP_2$.
As `$\mathord{\uplus}$' is the least upper bound operator for NNC polyhedra,
it follows from the note following Definition~\ref{def:closure-point} that,
for some $i,j \in \{1,2\}$, $i \neq j$,
there exists a closure point $\vect{p}_i$ of  $\cP_i$
and a point $\vect{p}_j \in \cP_j$ such that
\[
  (\vect{p}_i, \vect{p}_j] \Nsseq \cP_1 \union \cP_2.
\]
For ease of notation, we will assume that $i = 1$ and $j = 2$;
the other case follows by a symmetrical argument.
As $\vect{p}_1 \in \topclosure(\cP_1)$, there exists a point
\[
  \vect{p}
    \defeq
      (1 - \sigma)\vect{p}_1 + \sigma \vect{p}_2
        \in [\vect{p}_1, \vect{p}_2] \inters \topclosure(\cP_1)
\]
such that $\sigma \in \nonnegRset$ is maximal
(note that, by convexity, $\sigma < 1$);
then $\vect{p} \in \cP_1 \uplus \cP_2$ and
saturates a constraint $\beta_1 \in \cC_1$
where $\beta_1$ is strict if $\vect{p} \notin \cP_1$.
Note that $\vect{p} \notin \cP_2$ since, otherwise,
we would have $(\vect{p}_1, \vect{p}) \sseq \cP_1$ and
$[\vect{p}, \vect{p}_2] \sseq \cP_2$, contradicting
$(\vect{p}_1, \vect{p}_2] \Nsseq \cP_1 \union \cP_2$.
Moreover, if $\vect{p} \in \cP_1$, $\vect{p} \notin \topclosure(\cP_2)$
since, otherwise, we would have $(\vect{p}_1, \vect{p}] \sseq \cP_1$ and
$(\vect{p}, \vect{p}_2] \sseq \cP_2$, again contradicting
$(\vect{p}_1, \vect{p}_2] \Nsseq \cP_1 \union \cP_2$.

Let $\cG'_1 = (R'_1, P'_1, C'_1)$ be
the system of all the generators in $\cG_1$ that saturate $\beta_1$
so that $\vect{p} \in \gen\bigl((R'_1, P'_1 \union C'_1, \emptyset)\bigr)$.
Suppose condition~(1) does not hold;
that is, suppose that all the rays in $R'_1$ are subsumed by $\cP_2$
and $C'_1 \sseq \topclosure(\cP_2)$.
If $\beta_1$ is non-strict,
$\vect{p} \in \cP_1$ so that $\vect{p} \notin \topclosure(\cP_2)$;
hence, by Theorem~\ref{thm:NNC-minkowski-weyl}, there must exist a generator
point $g_1 \in P'_1 \setdiff \topclosure(\cP_2)$ and condition~(2) holds.
If instead, $\beta_1$ is strict, then, since
$\vect{p} \in \cP_1 \uplus \cP_2$, $\vect{p} \notin \cP_2$
and $\vect{p}$ saturates $\beta_1$, condition~(3) holds.

Suppose now that, for some $i, j \in \{1, 2\}$ $i \neq j$,
there exists a generator $g_i$ in $\cG_i$
that saturates a constraint $\beta_i \in \cC_i$ violated by $\cP_j$
and condition~(1), (2) or (3) holds.
As before, we assume that $i = 1$ and $j = 2$,
since the other case follows by a symmetrical argument.
Let
\(
  \beta_1
    \defeq
      \bigl(
        \langle \vect{a}, \vect{x} \rangle \relop b
      \bigr)
\),
where $\mathord{\relop} \in \{<, \leq\}$.
Suppose condition~(1) holds;
so that $g_1$ is a closure point or ray that is not subsumed by $\cP_2$,
Consider first the case when $g_1$ is a closure point in $\cG_1$
so that $g_1 \notin \topclosure(\cP_2)$.
Then, by letting $\beta \defeq \beta_1$ and $\vect{p} \defeq g_1$
in Lemma~\ref{lem:union-of-nnc-polyhedra-is-not-convex}, it follows that
$\cP_1 \uplus \cP_2 \neq \cP_1 \union \cP_2$.
Consider now the case when $g_1$ is a ray in $\cG_1$.
Since $\cP_1 \neq \emptyset$,
there exist a point $\vect{p}'_1 \in \topclosure(\cP_1)$ and
a constraint
\(
  \beta'_1
    \defeq
      \bigl(
        \langle \vect{a}, \vect{x} \rangle
           \leq \langle \vect{a}, \vect{p}'_1 \rangle
      \bigr)
\)
such that $\cP_1$ satisfies $\beta'_1$;
note that, by definition,
$\beta'_1$ is saturated by the point $\vect{p}'_1$ and the ray $g_1$.%
\footnote{The $\langle \vect{a}, \vect{p}'_1 \rangle$ may differ from
$b$ because we made no minimality assumption on
the constraint system $\cC_1$, so that $\beta_1$ may be redundant.%
}
Therefore, for some $\rho \in \nonnegRset$, the point
\(
  \vect{p}_1
    \defeq
      \vect{p}'_1 + \rho g_1
        \notin \topclosure(\cP_2)
\);
hence, as $\vect{p}_1 \in \topclosure(\cP_1)$ and saturates $\beta'_1$,
by letting $\beta \defeq \beta'_1$ and $\vect{p} \defeq \vect{p}_1$
in Lemma~\ref{lem:union-of-nnc-polyhedra-is-not-convex}, it follows that
$\cP_1 \uplus \cP_2 \neq \cP_1 \union \cP_2$.
If condition~(2) holds,
then $g_1$ is a point in $\cG_1$
(so that $g_1 \in \cP_1$)
and $g_1 \notin \topclosure(\cP_2)$.
Then, by letting $\beta \defeq \beta_1$ and $\vect{p} \defeq g_1$
in Lemma~\ref{lem:union-of-nnc-polyhedra-is-not-convex}, it follows that
$\cP_1 \uplus \cP_2 \neq \cP_1 \union \cP_2$.
Finally suppose that condition~(3) holds.
In this case $\beta_1$ is strict, so that $\vect{p} \notin \cP_1$, and hence
$\vect{p} \in \bigl(\cP_1 \uplus \cP_2\bigr) \setdiff (\cP_1 \union \cP_2)$;
therefore $\cP_1 \uplus \cP_2 \neq \cP_1 \union \cP_2$.
\qed
\end{proof}

Observe that the conditions stated for the NNC case
in Theorem~\ref{thm:union-of-nnc-polyhedra-iff-hull}
are more involved than the conditions stated for the topologically closed
case in Theorem~\ref{thm:union-of-closed-polyhedra-is-convex}.
In particular, a direct correspondence can only be found for
condition~(2) of Theorem~\ref{thm:union-of-nnc-polyhedra-iff-hull}.
The added complexity, which naturally propagates to the corresponding
implementation, is justified by the need to properly capture special
cases where, as said above, convexity alone is not sufficient.

In particular, the check for condition~(3)
in Theorem~\ref{thm:union-of-nnc-polyhedra-iff-hull}
is more expensive than the other checks
and hence should be delayed as much as possible.
Writing $\cH(\beta)$ to denote
the affine hyperplane induced by constraint $\beta$,%%
\footnote{Namely, if
$\beta = \bigl( \langle \vect{a}, \vect{x} \rangle \relop b \bigr)$,
then
\(
  \cH(\beta)
    = \con\Bigl(
            \bigl\{
              \langle \vect{a}, \vect{x} \rangle = b
            \bigr\}
          \Bigr)
\).
}
condition~(3) can be implemented by checking that
\(
  (\cP_1 \uplus \cP_2) \inters \cH(\beta_i)
    \sseq
      \cP_j \inters \cH(\beta_i)
\)
does not hold.
Even though it is possible to identify cases where optimizations apply,
in the general case the inclusion test above will require the application
of the (incremental) conversion procedure for NNC polyhedra representations.

In the following, we provide a few examples showing cases when
conditions~(1) and~(3)
of Theorem~\ref{thm:union-of-nnc-polyhedra-iff-hull} come into play.

\begin{example}[Condition~(1)
of Theorem~\ref{thm:union-of-nnc-polyhedra-iff-hull}]
We first show how condition~(1) of
Theorem~\ref{thm:union-of-nnc-polyhedra-iff-hull}
where $g_1$ is a closure point can properly discriminate
between the two cases illustrated in Figures~\ref{subfig:nnc-example-P1-P2}
and~\ref{subfig:nnc-example-P1-P3}.

Consider the polyhedra $\cP$ and $\cQ$
in Figure~\ref{subfig:nnc-example-P1-P2} and
assume that the line segment $(B,C)$ satisfies
the constraint $x_1 = 4$.
In the statement of
Theorem~\ref{thm:union-of-nnc-polyhedra-iff-hull},
let $\cP_1 = \cP$, $\cP_2 = \cQ$, $i = 1$, $j = 2$,
$\beta_1 = (x_1 < 4) \in \cC_1$ and $g_1 = B$
be a closure point in $\cG_1$.
Then $\beta_1$ is violated by $\cP_2$ and
saturated by $g_1$, but $g_1$ is not subsumed by $\cP_2$.
Hence condition~(1) of Theorem~\ref{thm:union-of-nnc-polyhedra-iff-hull}
holds and we correctly conclude that $\cP \uplus \cQ \neq \cP \union \cQ$.

On the other hand, if we consider polyhedra $\cP$ and $\cQ'$ in
Figure~\ref{subfig:nnc-example-P1-P3}
and let $\cP_1 = \cP$ and $\cP_2 = \cQ'$,
then the closure point $g_1 = B$ is subsumed by $\cP_2$ so that
condition~(1)
of Theorem~\ref{thm:union-of-nnc-polyhedra-iff-hull} does not hold.

Note that such a discrimination could not be obtained by checking
only condition~(2)
of Theorem~\ref{thm:union-of-nnc-polyhedra-iff-hull}.
If we swap the indices $i$ and $j$ so that $i = 2$, $j = 1$;
letting $\beta_2 = (x_1 \geq 4) \in \cC_2$ and
$g_2 = E$ be a point in $\cG_2$,
then  $g_2 \in \topclosure(\cP)$ and
$\beta_2$ is a non-strict constraint of both $\cQ$ and $\cQ'$
violated by  $\cP$ and saturated by point $g_2$;
hence condition~(2) does not hold for both $\cP_2 = \cQ$ and
for $\cP_2 = \cQ'$.

For an example of an application of
condition~(1) of Theorem~\ref{thm:union-of-nnc-polyhedra-iff-hull}
when $g_1$ is a ray,
consider $\cQ_1$ and $\cQ_2$
in Figure~\ref{subfig:nnc-examples2-P1-P2},
where $\cQ_1 = \con\bigl( \{ 2 \leq x_1 < 4 \} \bigr)$ is an unbounded
strip and $\cQ_2 = \{ A \}$ is a singleton, with $A = (4, 2)^\transpose$.
It can be seen that $\cQ_1 \uplus \cQ_2$, the polyhedron in
Figure~\ref{subfig:nnc-examples2-P1-uplus-P2},
contains the point $B = (4, 0)^\transpose$
which is not a point in $\cQ_1$ or $\cQ_2$,
so that $\cQ_1 \uplus \cQ_2 \neq \cQ_1 \union \cQ_2$.
In the statement of
Theorem~\ref{thm:union-of-nnc-polyhedra-iff-hull},
let $\cP_1 = \cQ_1$, $\cP_2 = \cQ_2$,
$i = 1$, $j = 2$, $\beta_1 = (x_1 < 4) \in \cC_1$ and
$g_1 = (0, 1)^\transpose$ be a ray in $\cG_1$.
Then $\beta_1$ is violated by $\cP_2$ and
saturated by the ray $g_1$;
but $g_1$ is not subsumed by $\cP_2$ so that
condition~(1)
of Theorem~\ref{thm:union-of-nnc-polyhedra-iff-hull} holds.
\end{example}

\begin{figure}
\centering
\mbox{
\scriptsize{
\subfigure[$\cQ_1, \cQ_2 \in \Pset_2$]{
\label{subfig:nnc-examples2-P1-P2}
\begin{picture}(100,90)(0,-8)
%% First pair of NNC polyhedron
\put(0,0){
\setlength{\unitlength}{0.7pt}%
\psset{xunit=1cm,yunit=1cm,runit=1cm}
\psset{origin={0,0}}
%% \pspicture*[](-0.6,-0.3)(3,2.8)
\psline{->}(-0.5,0)(3,0)
\psline{->}(0,-0.3)(0,2.8)
\rput(-0.2,-0.2){O}
\rput(2.7,-0.2){$x_1$}
\rput(-0.2,2.4){$x_2$}
\pspolygon[linecolor=lightgreen,fillcolor=lightgreen,fillstyle=solid](1,-0.25)(2,-0.3)(2,2.8)(1,2.8)
\psline[linecolor=red](1,-0.3)(1,2.8)
\psline[linecolor=red,linestyle=dashed](2,-0.3)(2,2.8)
\pscircle*[linecolor=white](2,1){2pt}
\pscircle[linecolor=red](2,1){2pt}
\pscircle*[linecolor=blue](2,1){1pt}
\rput(2.2,1){$A$}
\rput(1.55,1.25){$\cQ_1$}
\rput(2.6,1.55){$\cQ_2$}
\pscurve[linewidth=0.5pt]{->}(2.4,1.55)(2.25,1.38)(2.05,1.07)
%% \endpspicture
}
\end{picture}
}
\
\subfigure[$\cQ_3, \cQ_4 \in \Pset_2$]{
\label{subfig:nnc-examples2-P3-P4}
\begin{picture}(100,90)(0,-8)
%% Second pair of NNC polyhedra
\put(0,0){
\setlength{\unitlength}{0.7pt}%
\psset{xunit=1cm,yunit=1cm,runit=1cm}
\psset{origin={0,0}}
%% \pspicture*[](-0.6,-0.3)(3,2.8)
\psline{->}(-0.5,0)(3,0)
\psline{->}(0,-0.3)(0,2.8)
\rput(-0.2,-0.2){O}
\rput(2.7,-0.2){$x_1$}
\rput(-0.2,2.4){$x_2$}
\pspolygon[linecolor=lightgreen,fillcolor=lightgreen,fillstyle=solid](0.5,0.5)(2.5,0.5)(2.5,2)(0.5,2)
\psline[linecolor=red,linestyle=dashed,linewidth=0.5pt](0.5,0.5)(1.5,0.5)
\psline[linecolor=red,linestyle=dashed,linewidth=0.5pt](0.5,0.5)(0.5,2)
\psline[linecolor=red,linestyle=dashed,linewidth=0.5pt](0.5,2)(1.5,2)
\psline[linecolor=red,linestyle=dashed,linewidth=0.5pt](1.47,0.5)(1.47,2)
\psline[linecolor=blue,linestyle=dashed,linewidth=0.5pt](1.52,0.5)(2.5,0.5)
\psline[linecolor=blue,linestyle=dashed,linewidth=0.5pt](2.5,0.5)(2.5,2)
\psline[linecolor=blue,linestyle=dashed,linewidth=0.5pt](1.52,2)(2.5,2)
\psline[linecolor=blue,linestyle=dashed,linewidth=0.5pt](1.53,0.5)(1.53,2)
\pscircle*[linecolor=red](0.5,0.5){2pt}
\pscircle*[linecolor=white](0.5,0.5){1.2pt}
\rput(0.3,0.3){$A$}
\pscircle*[linecolor=white](1.5,0.5){2pt}
\pscircle[linecolor=red,linewidth=0.3pt](1.5,0.5){2pt}
\pscircle[linecolor=blue,linewidth=0.3pt](1.5,0.5){1.2pt}
\pscircle*[linecolor=white](1.5,0.5){0.5pt}
\rput(1.5,0.3){$B$}
\pscircle*[linecolor=white](1.5,2){2pt}
\pscircle[linecolor=red,linewidth=0.3pt](1.5,2){2pt}
\pscircle[linecolor=blue,linewidth=0.3pt](1.5,2){1.2pt}
\pscircle*[linecolor=white](1.5,2){0.5pt}
\rput(1.5,2.2){$C$}
\pscircle*[linecolor=red](0.5,2){2pt}
\pscircle*[linecolor=white](0.5,2){1.2pt}
\rput(0.3,2.2){$D$}
\pscircle*[linecolor=blue](2.5,0.5){2pt}
\pscircle*[linecolor=white](2.5,0.5){1.2pt}
\rput(2.7,0.3){$E$}
\pscircle*[linecolor=blue](2.5,2){2pt}
\pscircle*[linecolor=white](2.5,2){1.2pt}
\rput(2.7,2.2){$F$}
\rput(1,1.25){$\cQ_3$}
\rput(2,1.25){$\cQ_4$}
}
\end{picture}
}
\
\subfigure[$\cQ_5, \cQ_6 \in \Pset_2$]{
\label{subfig:nnc-examples2-P5-P6}
\begin{picture}(100,90)(0,-8)
%% Second pair of NNC polyhedra
\put(0,0){
\setlength{\unitlength}{0.7pt}%
\psset{xunit=1cm,yunit=1cm,runit=1cm}
\psset{origin={0,0}}
%% \pspicture*[](-0.6,-0.3)(3,2.8)
\psline{->}(-0.5,0)(3,0)
\psline{->}(0,-0.3)(0,2.8)
\rput(-0.2,-0.2){O}
\rput(2.7,-0.2){$x_1$}
\rput(-0.2,2.4){$x_2$}
\pspolygon[linecolor=lightgreen,fillcolor=lightgreen,fillstyle=solid](0.5,0.5)(2.5,0.5)(2.5,1.25)(1.5,2)(0.5,1.25)
\psline[linecolor=red,linestyle=solid,linewidth=0.3pt](0.5,0.51)(2,0.51)
\psline[linecolor=red,linestyle=solid](0.5,0.5)(0.5,1.25)
\psline[linecolor=red,linestyle=solid](0.5,1.25)(1.5,2)
\psline[linecolor=red,linestyle=dashed,linewidth=0.5pt](2,0.5)(1.47,2)
\psline[linecolor=blue,linestyle=solid,linewidth=0.3pt](1,0.49)(2.5,0.49)
\psline[linecolor=blue,linestyle=solid](2.5,0.5)(2.5,1.25)
\psline[linecolor=blue,linestyle=solid](1.52,2)(2.5,1.25)
\psline[linecolor=blue,linestyle=dashed,linewidth=0.5pt](1,0.5)(1.53,2)
\pscircle*[linecolor=red](0.5,0.5){2pt}
\rput(0.3,0.3){$A$}
\pscircle*[linecolor=white](2,0.5){2pt}
\pscircle[linecolor=red,linewidth=0.3pt](2,0.5){2pt}
\rput(2,0.3){$B$}
\pscircle*[linecolor=white](1,0.5){2pt}
\pscircle[linecolor=blue,linewidth=0.3pt](1,0.5){2pt}
\rput(1,0.3){$E$}
\pscircle*[linecolor=white](1.5,2){2pt}
\pscircle[linecolor=red,linewidth=0.5pt](1.5,2){2pt}
\pscircle[linecolor=blue,linewidth=0.5pt](1.5,2){1.2pt}
\rput(1.5,2.2){$C$}
\pscircle*[linecolor=red](0.5,1.25){2pt}
\rput(0.3,1.45){$D$}
\pscircle*[linecolor=blue](2.5,0.5){2pt}
\rput(2.7,0.3){$F$}
\pscircle*[linecolor=blue](2.5,1.25){2pt}
\rput(2.7,1.45){$G$}
\rput(1.27,1.25){$\cQ_5$}
\rput(1.85,1.25){$\cQ_6$}
%% \pscircle*[linecolor=blue](2,1.25){2pt}
}
\end{picture}
}
}
}
\mbox{
\scriptsize{
\subfigure[$\cQ_1 \uplus \cQ_2$]{
\label{subfig:nnc-examples2-P1-uplus-P2}
\begin{picture}(100,90)(0,-8)
%% First pair of NNC polyhedron
\put(0,0){
\setlength{\unitlength}{0.7pt}%
\psset{xunit=1cm,yunit=1cm,runit=1cm}
\psset{origin={0,0}}
%% \pspicture*[](-0.6,-0.3)(3,2.8)
\psline{->}(-0.5,0)(3,0)
\psline{->}(0,-0.3)(0,2.8)
\rput(-0.2,-0.2){O}
\rput(2.7,-0.2){$x_1$}
\rput(-0.2,2.4){$x_2$}
\pspolygon[linecolor=lightgreen,fillcolor=lightgreen,fillstyle=solid](1,-0.25)(2,-0.3)(2,2.8)(1,2.8)
\psline[linecolor=dgreen](1,-0.3)(1,2.8)
\psline[linecolor=dgreen](2,-0.3)(2,2.8)
\pscircle*[linecolor=dgreen](2,1){2pt}
\rput(2.2,1){$A$}
\pscircle*[linecolor=dgreen](2,0){2pt}
\rput(2.2,0.2){$B$}
\rput(1.5,1.25){$\cQ_1 \uplus \cQ_2$}
%% \endpspicture
}
\end{picture}
}
\
\subfigure[$\cQ_3 \uplus \cQ_4$]{
\label{subfig:nnc-examples2-P3-uplus-P4}
\begin{picture}(100,90)(0,-8)
%% Second pair of NNC polyhedra
\put(0,0){
\setlength{\unitlength}{0.7pt}%
\psset{xunit=1cm,yunit=1cm,runit=1cm}
\psset{origin={0,0}}
%% \pspicture*[](-0.6,-0.3)(3,2.8)
\psline{->}(-0.5,0)(3,0)
\psline{->}(0,-0.3)(0,2.8)
\rput(-0.2,-0.2){O}
\rput(2.7,-0.2){$x_1$}
\rput(-0.2,2.4){$x_2$}
\pspolygon[linecolor=lightgreen,fillcolor=lightgreen,fillstyle=solid](0.5,0.5)(2.5,0.5)(2.5,2)(0.5,2)
\psline[linecolor=dgreen,linestyle=dashed,linewidth=0.5pt](0.5,0.5)(2.5,0.5)
\psline[linecolor=dgreen,linestyle=dashed,linewidth=0.5pt](0.5,0.5)(0.5,2)
\psline[linecolor=dgreen,linestyle=dashed,linewidth=0.5pt](0.5,2)(2.5,2)
\psline[linecolor=dgreen,linestyle=dashed,linewidth=0.5pt](2.5,0.5)(2.5,2)
\psline[linecolor=dgreen,linestyle=solid](1.5,0.5)(1.5,2)
\pscircle[linecolor=dgreen](0.5,0.5){2pt}
\pscircle[linecolor=white](0.5,0.5){1.2pt}
\rput(0.3,0.3){$A$}
\pscircle*[linecolor=dgreen](1.5,0.5){2pt}
\pscircle*[linecolor=white](1.5,0.5){1.2pt}
\rput(1.5,0.3){$B$}
\pscircle*[linecolor=dgreen](1.5,2){2pt}
\pscircle*[linecolor=white](1.5,2){1.2pt}
\rput(1.5,2.2){$C$}
\pscircle*[linecolor=dgreen](0.5,2){2pt}
\pscircle*[linecolor=white](0.5,2){1.2pt}
\rput(0.3,2.2){$D$}
\pscircle*[linecolor=dgreen](2.5,0.5){2pt}
\pscircle*[linecolor=white](2.5,0.5){1.2pt}
\rput(2.7,0.3){$E$}
\pscircle*[linecolor=dgreen](2.5,2){2pt}
\pscircle*[linecolor=white](2.5,2){1.2pt}
\rput(2.7,2.2){$F$}
\pscircle*[linecolor=dgreen](1.5,0.8){2pt}
\rput(1.75,0.8){$G$}
\rput(1.455,1.25){$\cQ_3 \uplus \cQ_4$}
}
\end{picture}
}
\
\subfigure[$\cQ_5 \uplus \cQ_6$]{
\label{subfig:nnc-examples2-P5-uplus-P6}
\begin{picture}(100,90)(0,-8)
%% Second pair of NNC polyhedra
\put(0,0){
\setlength{\unitlength}{0.7pt}%
\psset{xunit=1cm,yunit=1cm,runit=1cm}
\psset{origin={0,0}}
%% \pspicture*[](-0.6,-0.3)(3,2.8)
\psline{->}(-0.5,0)(3,0)
\psline{->}(0,-0.3)(0,2.8)
\rput(-0.2,-0.2){O}
\rput(2.7,-0.2){$x_1$}
\rput(-0.2,2.4){$x_2$}
\pspolygon[linecolor=lightgreen,fillcolor=lightgreen,fillstyle=solid](0.5,0.5)(2.5,0.5)(2.5,1.25)(1.5,2)(0.5,1.25)
\psline[linecolor=dgreen,linestyle=solid](0.5,0.5)(2.5,0.5)
\psline[linecolor=dgreen,linestyle=solid](0.5,0.5)(0.5,1.25)
\psline[linecolor=dgreen,linestyle=solid](0.5,1.25)(1.5,2)
\psline[linecolor=dgreen,linestyle=solid](2.5,0.5)(2.5,1.25)
\psline[linecolor=dgreen,linestyle=solid](1.52,2)(2.5,1.25)
\psline[linecolor=dgreen,linestyle=dashed,linewidth=1pt](0.2,2)(2.8,2)
\pscircle*[linecolor=dgreen](0.5,0.5){2pt}
\rput(0.3,0.3){$A$}
\pscircle*[linecolor=white](1.5,2){2pt}
\pscircle[linecolor=dgreen,linewidth=0.5pt](1.5,2){2pt}
\rput(1.5,2.2){$C$}
\pscircle*[linecolor=dgreen](0.5,1.25){2pt}
\rput(0.3,1.45){$D$}
\pscircle*[linecolor=dgreen](2.5,0.5){2pt}
\rput(2.7,0.3){$F$}
\pscircle*[linecolor=dgreen](2.5,1.25){2pt}
\rput(2.7,1.45){$G$}
\rput(1.48,1.25){$\cQ_5 \uplus \cQ_6$}
}
\end{picture}
}
}
}
\caption{More examples for the convex polyhedral hull of NNC polyhedra}
\label{fig:nnc-examples2}
\end{figure}
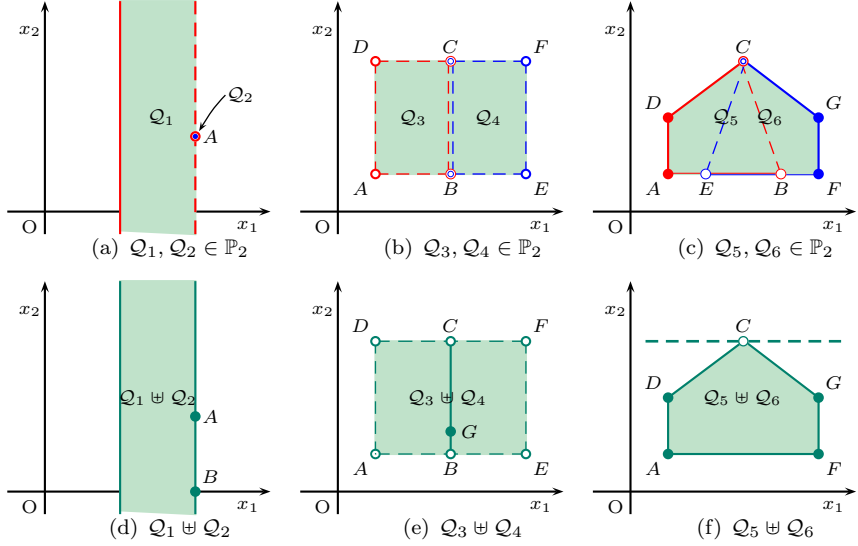

\begin{example}[Condition~(3)
of Theorem~\ref{thm:union-of-nnc-polyhedra-iff-hull}]
This example shows how condition~(3) of
Theorem~\ref{thm:union-of-nnc-polyhedra-iff-hull}
can properly discriminate
between the two cases illustrated in
Figures~\ref{subfig:nnc-examples2-P3-P4}
and~\ref{subfig:nnc-examples2-P5-P6}.

Consider the polyhedra $\cQ_3$ and $\cQ_4$ in
Figure~\ref{subfig:nnc-examples2-P3-P4},
where $\cQ_3$ is the open rectangle $ABCD$,
with the open bound $(B,C)$ defined by the strict constraint $x_1 < 3$,
whereas $\cQ_4$ is the open rectangle $BEFC$.
Then $B = (3,1)^\transpose$ and $C = (3,5)^\transpose$
are closure points for both $\cQ_3$ and $\cQ_4$.
It can be seen that $\cQ_3 \uplus \cQ_4$, the polyhedron in
Figure~\ref{subfig:nnc-examples2-P3-uplus-P4},
contains the open line segment $(B,C)$
so that $\cQ_3 \uplus \cQ_4 \neq \cQ_3 \union \cQ_4$.
In the statement of
Theorem~\ref{thm:union-of-nnc-polyhedra-iff-hull},
let $\cP_1 = \cQ_3$, $\cP_2 = \cQ_4$, $i = 1$, $j = 2$,
$\beta_1 = (x_1 < 3) \in \cC_1$ and $g_1 = B$
be a closure point in $\cG_1$.
Then $\beta_1$ is violated by $\cP_2$ and
saturated by the closure point $g_1$.
Although condition~(1)
does not hold because $g_1$ is subsumed by $\cP_2$,
condition~(3) does hold
since $\beta_1$ is strict and,
taking $\vect{p} = G \in (B, C)$,
we have $\vect{p} \in (\cP_1 \uplus \cP_2) \setdiff \cP_2$.

It is worth stressing that none of the (closure) points in
the open segment $(B,C)$ belong to the generator systems
of $\cP_1$ and $\cP_2$.
The reader is also warned that, even though in this particular example
$\cP_1$, $\cP_2$ and the segment $(B,C)$ are pairwise disjoint
(which trivially implies that the join $\cP_1 \uplus \cP_2$ is inexact),
such a property would not generalize to higher dimensional vector spaces
and hence it cannot be used as a replacement for condition~(3)
in Theorem~\ref{thm:union-of-nnc-polyhedra-iff-hull}.

Consider the polyhedra $\cQ_5$ and $\cQ_6$ in
Figure~\ref{subfig:nnc-examples2-P5-P6},
where $\cQ_5$ is the quadrilateral $ABCD$
and $\cQ_6$ is the quadrilateral $EFGC$.
Then the convex polyhedral hull $\cQ_5 \uplus \cQ_6$ shown in
Figure~\ref{subfig:nnc-examples2-P5-uplus-P6}
is equal to their union $\cQ_5 \union \cQ_6$.
In the statement of
Theorem~\ref{thm:union-of-nnc-polyhedra-iff-hull},
let $\cP_1 = \cQ_5$, $\cP_2 = \cQ_6$, $i = 1$,
$j = 2$, $\beta_1 \in \cC_1$ be the strict constraint
defining the dashed line boundary $(B,C)$ and
$g_1$ be the closure point $C$ in both $\cP_1$ and $\cP_2$.
Then none of the conditions in
Theorem~\ref{thm:union-of-nnc-polyhedra-iff-hull} hold.
\end{example}

\section{Exact Join Detection for Boxes
               and Other Cartesian Products}
\label{sec:boxes}

A rational interval constraint for a dimension $i \in \{ 1, \ldots, n \}$
has the form $x_i \relop b$,
where $\mathord{\relop} \in \{ <, \leq, =, \geq, > \}$
and $b \in \Qset$.
A finite system of rational interval constraints
defines an NNC polyhedron in $\Pset_n$ that we call a \emph{rational box};
the set of all rational boxes in the $n$-dimensional vector space
is denoted $\NNCBox_n$ and is a meet-sublattice of $\Pset_n$.
The domain $\NNCBox_n$ so defined can be seen as the Cartesian product
of $n$ possibly infinite intervals with rational, possibly open boundaries.
If we denote by $\NNCInterval$ the set of such intervals and by
`$\mathord{\oplus}$' the binary join operator over
the bounded join-semilattice
$(\NNCInterval, \mathord{\sseq})$, we have, for each $B_1, B_2 \in \NNCBox$,
\[
  B_1 \uplus B_2
    =
      \bigl(\proj_1(B_1) \oplus \proj_1(B_2)\bigr)
      \times \cdots \times
      \bigl(\proj_n(B_1) \oplus \proj_n(B_2)\bigr).
\]

The following theorem defines a necessary and sufficient condition
that is only based on `$\mathord{\oplus}$' and on the subset ordering
over $\NNCInterval$.  Notice, in particular, that convexity does
not play any role, neither in the statement, nor in the proof.
\begin{theorem}
\label{thm:union-of-boxes-iff-hull}
Let $B_1, B_2 \in \NNCBox_n$.
Then $B_1 \uplus B_2 \neq B_1 \union B_2$ if and only if
\begin{enumerate}
\item
\(
  \exists i \in \{ 1, \ldots, n \}
    \st
      \proj_i(B_1) \oplus \proj_i(B_2)
        \neq
          \proj_i(B_1) \union \proj_i(B_2)
\); or
\item
\(
  \exists i,j \in \{ 1, \ldots, n \}
    \st
      i \neq j
        \land
      \proj_i(B_1) \Nsseq \proj_i(B_2)
        \land
      \proj_j(B_2) \Nsseq \proj_j(B_1)
\).
\end{enumerate}
\end{theorem}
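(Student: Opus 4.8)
The plan is to prove the two directions separately, exploiting the product structure of boxes throughout. Write $B_k = I^k_1 \times \cdots \times I^k_n$ where $I^k_i = \proj_i(B_k)$, and recall from the discussion preceding the theorem that $B_1 \uplus B_2 = (I^1_1 \oplus I^2_1) \times \cdots \times (I^1_n \oplus I^2_n)$. So the whole argument reduces to comparing, coordinate-by-coordinate, the Cartesian product of the interval joins with the union $B_1 \union B_2$.

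For the ``if'' direction, suppose condition~(1) holds, say $I^1_i \oplus I^2_i \neq I^1_i \union I^2_i$ for some $i$. Pick a scalar $v_i \in (I^1_i \oplus I^2_i) \setdiff (I^1_i \union I^2_i)$. For every coordinate $j \neq i$, pick $v_j \in I^1_j \inters I^2_j$ (this intersection is non-empty because $B_1, B_2$ are non-empty, being boxes with a well-defined product decomposition; actually I should first dispense with the trivial case where one $B_k$ is empty, in which the statement is vacuous or immediate). Then $\vect{v} \in B_1 \uplus B_2$ but $\vect{v} \notin B_1$ (its $i$-th coordinate is outside $I^1_i$) and $\vect{v} \notin B_2$ (likewise), so $B_1 \uplus B_2 \neq B_1 \union B_2$. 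If instead condition~(2) holds, pick $v_i \in I^1_i \setdiff I^2_i$, $v_j \in I^2_j \setdiff I^1_j$, and for the remaining coordinates $v_\ell \in I^1_\ell \inters I^2_\ell$; again $\vect{v}$ lies in every $I^1_m \oplus I^2_m$ hence in $B_1 \uplus B_2$, but $\vect{v} \notin B_1$ (because of coordinate $j$) and $\vect{v} \notin B_2$ (because of coordinate $i$).

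For the ``only if'' direction, assume conditions~(1) and~(2) both fail; I want to conclude $B_1 \uplus B_2 = B_1 \union B_2$. Since $B_1 \union B_2 \sseq B_1 \uplus B_2$ always, it suffices to show the reverse inclusion. Take any $\vect{v} \in B_1 \uplus B_2$, so $v_i \in I^1_i \oplus I^2_i$ for each $i$. Negating~(1), $I^1_i \oplus I^2_i = I^1_i \union I^2_i$ for every $i$, so each $v_i$ lies in $I^1_i$ or in $I^2_i$. Let $S_1 = \{\, i \mid v_i \in I^1_i \,\}$ and $S_2 = \{\, i \mid v_i \in I^2_i \,\}$; then $S_1 \union S_2 = \{1,\dots,n\}$. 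If $S_1 = \{1,\dots,n\}$ then $\vect{v} \in B_1$; symmetrically for $S_2$; otherwise there is $i \notin S_1$ (so $v_i \notin I^1_i$, hence $I^1_i \Nsseq I^2_i$ is not yet forced — I need $v_i \in I^2_i \setdiff I^1_i$, which gives $I^2_i \Nsseq I^1_i$) and $j \notin S_2$ (so $v_j \in I^1_j \setdiff I^2_j$, giving $I^1_j \Nsseq I^2_j$), and necessarily $i \neq j$; this is exactly the instance of condition~(2) (with the roles matching after noting $\proj_i(B_1) \Nsseq \proj_i(B_2)$ comes from coordinate $j$ and the other from coordinate $i$ — I must be careful with which index plays which role and invoke symmetry of the statement in $B_1 \leftrightarrow B_2$), contradicting our assumption. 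Hence $\vect{v} \in B_1 \union B_2$.

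The main obstacle is purely bookkeeping: keeping the index roles straight in the ``only if'' step, making sure that ``$v_i \notin I^1_i$'' is promoted to ``$v_i \in I^2_i \setdiff I^1_i$'' before reading off a strict non-inclusion, and checking that the $i \neq j$ requirement in condition~(2) is genuinely met (if the only ``bad'' coordinate were a single index belonging to neither $S_1 \setdiff S_2$ nor $S_2 \setdiff S_1$ in a way that collapses $i = j$, one has to see this cannot happen, since $i \notin S_1$ forces $i \in S_2$ and $j \notin S_2$ forces $j \in S_1$, so $i \neq j$). The non-emptiness of the coordinate intersections $I^1_m \inters I^2_m$ used in the ``if'' direction should also be justified — it follows because each one-dimensional interval is non-empty and, crucially, when condition~(1) or~(2) is witnessed at some coordinates, the \emph{other} coordinates' intervals must overlap (indeed if some $I^1_m \inters I^2_m = \emptyset$ with, say, $\sup I^1_m < \inf I^2_m$, then $\proj_m(B_1) \Nsseq \proj_m(B_2)$ and $\proj_m(B_2) \Nsseq \proj_m(B_1)$, so pairing $m$ with the coordinate witnessing the failed inclusion already triggers condition~(2)); I would handle this by first observing that if every pair $I^1_m, I^2_m$ is comparable under $\sseq$ then the argument is clean, and the incomparable case is subsumed by condition~(2) directly.
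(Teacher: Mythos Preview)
Your overall strategy matches the paper's: exploit the product structure, and in each direction reduce to coordinate-wise reasoning. The ``only if'' direction is fine; you argue by contrapositive where the paper argues directly, but the content is the same (your $S_1,S_2$ argument is exactly the paper's observation that a witness point $\vect{p}\in(B_1\uplus B_2)\setdiff(B_1\cup B_2)$ must fail some $\proj_i(B_1)$ and some $\proj_j(B_2)$, and if condition~(1) fails these indices are distinct).

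There is, however, a genuine gap in your ``if'' direction. You require $v_\ell \in I^1_\ell \inters I^2_\ell$ for the remaining coordinates, but this intersection may well be empty: take $B_1=[0,1]^n$ and $B_2=[3,4]^n$. Your attempted workaround is circular. You argue that if some $I^1_m\inters I^2_m=\emptyset$ then condition~(2) is triggered, but the task at that point is not to \emph{verify} that (1) or (2) holds --- that is the hypothesis --- it is to \emph{construct} a point of $(B_1\uplus B_2)\setdiff(B_1\cup B_2)$. Knowing that condition~(2) holds at yet another pair of coordinates does not produce such a point; you would still need to fill in the remaining coordinates, and the same obstruction recurs.

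The fix is simpler than your workaround: drop the intersection requirement entirely. For the coordinates $\ell$ not playing a special role, just pick $v_\ell\in I^1_\ell$. Since $I^1_\ell\sseq I^1_\ell\oplus I^2_\ell$, the assembled vector $\vect{v}$ still lies in $B_1\uplus B_2$; and exclusion from $B_1$ and $B_2$ is witnessed solely by the distinguished coordinate(s) $i$ (and $j$), so the values at the other $\ell$ are irrelevant for that. This is exactly what the paper does: it starts from an existing point of $B_1$ and alters only the coordinate(s) where the witness values are needed.
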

\begin{proof}
Suppose that $B_1 = \emptyset$ so that,
for each $i \in \{ 1, \ldots, n \}$, $\proj_i(B_1) = \emptyset$.
Then, neither condition~(1) nor condition~(2) can hold,
so that the lemma holds.
By a symmetric reasoning, the lemma holds if $B_2 = \emptyset$.
Hence, in the following
we assume that both $B_1$ and $B_2$ are non-empty boxes.

Suppose first that $B_1 \uplus B_2 \neq B_1 \union B_2$;
then there exists a point $\vect{p} \in B_1 \uplus B_2$
such that $\vect{p} \notin B_1$ and $\vect{p} \notin B_2$.
Hence, for some $i,j \in \{ 1, \ldots, n\}$, we have that
$p_i \notin \proj_i(B_1)$ and
$p_j \notin \proj_j(B_2)$.
Note that as $\vect{p} \in B_1 \uplus B_2$, we also have
$p_i \in \proj_i(B_1) \oplus \proj_i(B_2)$ and
$p_j \in \proj_j(B_1) \oplus \proj_j(B_2)$.
Suppose that condition~(1) does not hold.
Then $p_i \in \proj_i(B_2)$ and
$p_j \in \proj_j(B_1)$;
hence we must have $i \neq j$
and $p_i \in \proj_i(B_1) \setdiff \proj_i(B_2)$ and
$p_j \in \proj_j(B_2) \setdiff \proj_j(B_1)$;
implying that
$\proj_i(B_1) \Nsseq \proj_i(B_2)$ and
$\proj_j(B_2) \Nsseq \proj_j(B_1)$,
so that condition~(2) holds.

Assuming that condition~(1) or~(2) holds,
we now prove $B_1 \uplus B_2 \neq B_1 \union B_2$.
First, suppose that condition~(1) holds.
Then there exists
$v \in \proj_i(B_1 \uplus B_2)$
such that
$v \notin \proj_i(B_1)$ and $v \notin \proj_i(B_2)$.
By definition of $\proj_i$,
there exist a point
$\vect{p} \in B_1 \uplus B_2$ such that
$\proj_i(\vect{p}) = v$, so that
$\vect{p} \notin B_1$ and $\vect{p} \notin B_2$;
therefore $B_1 \uplus B_2 \neq B_1 \union B_2$.
Secondly, suppose that condition~(2) holds.
Then there exist values
$v_i \in \proj_i(B_1) \setdiff \proj_i(B_2)$ and
$v_j \in \proj_j(B_2) \setdiff \proj_j(B_1)$;
hence, there exist points
$\vect{p}_i \in B_1$ and $\vect{p}_j \in B_2$ such that
$\proj_i(\vect{p}_i) = v_i$ and $\proj_j(\vect{p}_j) = v_j$.
Let $\vect{p}$ be such that
$\proj_k(\vect{p}) = \proj_k(\vect{p}_i)$,
for all $k \in \{ 1, \ldots, n \} \setdiff \{j\}$,
and $\proj_j(\vect{p}) = v_j$;
then $\vect{p} \notin B_1 \union B_2$.
By definition of the `$\mathord{\uplus}$' operator,
$\vect{p} \in B_1 \uplus B_2$, so that
$B_1 \uplus B_2 \neq B_1 \union B_2$.
\qed
\end{proof}

\begin{example}
\label{ex:box-examples}
Consider the topologically closed boxes
\begin{align*}
  B_1 &= \con\bigl(
               \{ 0 \leq x_1 \leq 1, 0 \leq x_2 \leq 2 \}
             \bigr), \\
  B_2 &= \con\bigl(
               \{ 3 \leq x_1 \leq 4, 0 \leq x_2 \leq 2 \}
             \bigr), \\
  B_3 &= \con\bigl(
               \{ 0 \leq x_1 \leq 4, 1 \leq x_2 \leq 2 \}
             \bigr).
\end{align*}
Then we obtain
\[
  B_1 \uplus B_2
    = B_1 \uplus B_3
      = \con\bigl(
              \{ 0 \leq x_1 \leq 4, 0 \leq x_2 \leq 2 \}
            \bigr).
\]
Letting $\vect{p} = (2,0)^\transpose$,
we have $\vect{p} \in B_1 \uplus B_2$
although $\vect{p} \notin B_1 \union B_2 \union B_3$;
hence $B_1 \uplus B_2 \neq B_1 \union B_2$ and
$B_1 \uplus B_3 \neq B_1 \union B_3$,
i.e., both join computations are inexact.
Observe that
\[
  \proj_1(B_1) \oplus \proj_1(B_2) \neq \proj_1(B_1) \union \proj_1(B_2),
\]
so that, for boxes $B_1$ and $B_2$, condition~(1) holds;
on the other hand we have
\[
  \proj_1(B_3) \Nsseq \proj_1(B_1)
  \quad\text{and}\quad
  \proj_2(B_1) \Nsseq \proj_2(B_3),
\]
so that, for boxes $B_1$ and $B_3$, condition~(2) holds.
\end{example}

This result has been introduced for rational boxes for simplicity only.
Indeed, it trivially generalizes to any Cartesian product of 1-dimensional
numerical abstractions, including:
the well-known abstract domain of multi-dimensional, integer-valued intervals
\cite{CousotC76};
1-dimensional congruence equations like $x = 0 \pmod 2$;
\emph{modulo intervals} \cite{NakanishiF01,NakanishiJPF99}; and
\emph{circular linear progressions} \cite{SenS07}.
For full generality, for each $i \in \{ 1, \ldots, n \}$,
let $(\Aiset{i}, \sseq)$, with $\emptyset \in \Aiset{i} \sseq \wp(\Rset)$,
be a bounded join-semilattice where the binary join operator
is denoted by `$\mathord{\oplus_i}$'.
$(\Aiset{i}, \mathord{\sseq})$ is thus an abstract domain suitable for
approximating $\wp(\Rset)$.
Then, the trivial combination of the $n$ domains $\Aiset{i}$ by means
of Cartesian product,
\(
  \Aset_n \defeq \Aiset{1} \times \dots \times \Aiset{n}
\),
is an abstract domain suitable for approximating
$\wp(\Rset^n)$.\footnote{This construction is called a \emph{direct product}
in the field of abstract interpretation.  The resulting domain is
said to be \emph{attribute-independent}, in the sense that
relational information is not captured.  In other words,
the constraints on space dimension
$i$ are unrelated to those on space dimension $j$ whenever $i \neq j$.}
Theorem~\ref{thm:union-of-boxes-iff-hull} immediately generalizes
to any domain $\Aset_n$ so obtained.

An algorithm for the exact join detection on $\Aset_n$
based on Theorem~\ref{thm:union-of-boxes-iff-hull} will compute,
in the worst case, a linear number of 1-dimensional joins
(applying the `$\mathord{\oplus_i}$' operators)
and a linear number of 1-dimensional inclusion tests.
Since these 1-dimensional operations take constant time,
the worst-case complexity bound for $n$-dimensional boxes is $\bigO(n)$.

\section{Exact Join Detection for Bounded Difference Shapes}
\label{sec:bd-shapes}

A (rational) bounded difference is a non-strict inequality constraint
having one of the forms $\pm x_i \leq b$ or $x_i - x_j \leq b$,
where $i,j \in \{ 1, \ldots, n \}$, $i \neq j$ and $b \in \Qset$.
A finite system of bounded differences defines
a \emph{bounded difference shape} (BD shape);
the set of all BD shapes in the $n$-dimensional vector space
is denoted $\BDshapes_n$ and it is a meet-sublattice of $\CPset_n$.
In this section we specialize the result on topologically closed polyhedra
to the case of BD shapes,
which can be efficiently represented and manipulated as weighted graphs.

\subsection{BD Shapes and their Graph Representation}

We first introduce some notation and terminology
(see also~\cite{BagnaraHMZ05,BagnaraHZ09FMSD,Mine01a,Mine05th}).

Let $\Qset_\infty \defeq \Qset \union \{ +\infty \}$ be totally
ordered by the extension of `$\mathord{<}$' such that $d < +\infty$ for
each $d \in \Qset$.
Let $\cN$ be a finite set of \emph{nodes}.
A \emph{weighted directed graph} (graph, for short)
$G$ in $\cN$ is a pair $(\cN, w)$,
where $\fund{w}{\cN \times \cN}{\Qset_\infty}$ is the
weight function for $G$.
A pair $(n_i,n_j) \in \cN \times \cN$ is an \emph{arc} of $G$
if $w(n_i,n_j) < +\infty$;
the arc is \emph{proper} if $n_i \neq n_j$.
A \emph{path} $\path = n_0 \cdots n_p$ in a graph $G = (\cN, w)$
is a non-empty and finite sequence of nodes such that,
for all $i \in \{1, \ldots, p\}$,
$(n_{i-1}, n_i)$ is an arc of $G$;
each arc $(n_{i-1}, n_i)$ is said to be \emph{in} the path $\path$.
If $\path_1 = n_0 \cdots n_h$
and $\path_2 = n_h \cdots n_p$ are paths in $G$,
where $0 \leq h \leq p$,
then the path concatenation $\path = n_0 \cdots n_h \cdots n_p$ of
$\path_1$ and $\path_2$ is denoted by $\path_1 \pathconc \path_2$;
if $\path_1 = n_0 n_1$ (so that $h = 1$),
then $\path_1 \pathconc \path_2$ will also be denoted by $n_0 \cdot \path_2$.
Note that path concatenation is not the same as sequence concatenation.
The path $\path$ is \emph{simple} if each node
occurs at most once in $\path$;
it is \emph{proper} if all the arcs in it are proper;
it is a \emph{proper cycle} if it is a proper path
and $n_0 = n_p$ (so that $p \geq 2$).
%%
%%The \emph{length} of the path $\path$ is the number $p$ of occurrences
%%of arcs in $\path$ and denoted by $\length{\path}$.
The path $\path$ has \emph{weight}
$w(\path) \defeq \sum_{i=1}^p w(n_{i-1}, n_i)$.
%% The path $\path$ is a \emph{zero-cycle}
%% if it is a proper cycle with $0$ weight.
%%
A graph is \emph{consistent} if it has no negative weight cycles.
%% it is \emph{zero-cycle free} if all its proper cycles have strictly
%% positive weights.
%%
The set $\Graphs$ of consistent graphs in $\cN$ is partially ordered
by the relation `$\mathord{\graphleq}$' defined,
for all $G_1 = (\cN, w_1)$ and $G_2 = (\cN, w_2)$, by
\[
  G_1 \graphleq G_2
    \quad\iff\quad
      \forall i, j \in \cN \itc w_1(i,j) \leq w_2(i,j).
\]
When augmented with a bottom element $\bot$ representing inconsistency,
this partially ordered set becomes a (non-complete) lattice
\(
  \Graphs_\bot
    = \bigl\langle
        \Graphs \union \{ \bot \},
        \graphleq,
        \graphglb,
        \graphlub
      \bigr\rangle
\),
where `$\mathord{\graphglb}$' and `$\mathord{\graphlub}$'
denote the (finitary) greatest lower bound and least upper bound
operators, respectively.

\begin{definition}
\label{def:closure}
\summary{(Graph closure/reduction.)}
A consistent graph $G = (\cN, w)$ is \emph{(shortest-path) closed}
if the following properties hold:
\begin{align}
\label{rule:zero-weight-self-loops}
  \forall i \in \cN
    &\itc w(i,i) = 0; \\
\label{rule:transitivity}
  \forall i,j,k \in \cN
    &\itc w(i,j) \leq w(i,k) + w(k,j).
\end{align}
The \emph{closure} of a consistent graph $G$ in $\cN$ is
\[
  \closure(G)
    \defeq
      \biggraphlub
        \bigl\{\,
          G^\rc \in \Graphs
        \bigm|
          \text{$G^\rc \graphleq G$ and $G^\rc$ is closed\/}
        \,\bigr\}.
\]
A consistent graph $R$ in $\cN$ is \emph{(shortest-path) reduced} if,
for each graph $G \neq R$ such that $R \graphleq G$,
$\closure(R) \neq \closure(G)$.
A \emph{reduction} for the consistent graph $G$ is
any reduced graph $R$ such that $\closure(R) = \closure(G)$.
\end{definition}
Note that a reduction $R$ for a closed graph $G$
is a \emph{subgraph} of $G$, meaning that all the arcs in $R$
are also arcs in $G$ and have the same finite weight.

%% Let $\BDshapes_n$ denote the domain of topologically closed
%% bounded difference (BD) shapes in an $n$-dimension vector space.
%% Then,
Any system of bounded differences in $n$ dimensions defining a
non-empty element $\bd \in \BDshapes_n$ can be represented
by a consistent graph $G = (\cN, w)$
where $\cN = \{ 0, \ldots, n \}$ is the set of graph nodes;
each node $i > 0$ corresponds to the space dimension $x_i$
of the vector space, while $0$ (the \emph{special node})
represents a further space dimension whose value is fixed to zero.
Each arc $(i, j)$ of $G$ denotes the bounded difference
$x_i - x_j \leq w(i, j)$ if $i, j > 0$,
$x_i \leq w(i, 0)$ if $j = 0$ and
$-x_j \leq w(0, j)$ if $i = 0$.
Conversely, it can be seen that, by inverting the above mapping,
each consistent graph $G = (\cN, w)$
where $\cN = \{ 0, \ldots, n \}$
represents a non-empty element $\bd \in \BDshapes_n$.
Graph closure provides a normal form for non-empty BD shapes.
Informally, a closed (resp., reduced) graph encodes
a system of bounded difference constraints
which is closed by entailment (resp., contains no redundant constraint).

If the non-empty BD shapes $\bd_1, \bd_2 \in \BDshapes_n$
are represented by closed graphs
$G_1 = (\cN, w_1)$ and $G_2 = (\cN, w_2)$, respectively,
then the BD shape join $\bd_1 \uplus \bd_2$
is represented by the graph least upper bound
$G_1 \graphlub G_2 = (\cN, w)$,
where $w(i,j) \defeq \max\bigl(w_1(i,j), w_2(i,j))\bigr)$
for each $i,j \in \cN$;
$G_1 \graphlub G_2$ is also closed.
Observe too that the set intersection $\bd_1 \inters \bd_2$
is represented by the graph greatest lower bound $G_1 \graphglb G_2$.

\subsection{Exact Join Detection for Rational BD Shapes}

The following result can be used as the specification of an
exact join decision procedure specialized for rational BD shapes.

\begin{theorem}
\label{thm:union-of-bds-iff-hull}
For each $h \in \{ 1, 2 \}$,
let $\bd_h \in \BDshapes_n$ be a non-empty BD shape
represented by the closed graph $G_h = (\cN, w_h)$ and
let $R_h$ be a subgraph of $G_h$ such that $\closure(R_h) = G_h$.
Let also $G_1 \graphlub G_2 = (\cN, w)$.
Then $\bd_1 \uplus \bd_2 \neq \bd_1 \union \bd_2$ if and only if
there exist arcs $(i, j)$ of $R_1$ and $(k, \ell)$ of $R_2$
such that
\begin{itemize}
\item[\textup{(1)}]
$w_1(i, j) < w_2(i, j)$ and
$w_2(k, \ell) < w_1(k, \ell)$;
and
\item[\textup{(2)}]
$w_1(i, j) + w_2(k, \ell) < w(i, \ell) + w(k, j)$.
\end{itemize}
\end{theorem}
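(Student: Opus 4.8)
The plan is to argue directly with the graph representation rather than to specialize Theorem~\ref{thm:union-of-closed-polyhedra-is-convex}: the convex polyhedral hull of two BD shapes need not be a BD shape, so the equivalence is not a literal instance of that result. First I would record the following. By the remarks preceding the theorem, $\bd_1 \uplus \bd_2$ is represented by the closed graph $G \defeq G_1 \graphlub G_2 = (\cN, w)$, where $w(a,b) = \max\bigl(w_1(a,b), w_2(a,b)\bigr)$; and each $R_h$ represents $\bd_h$, since $\closure(R_h) = G_h$ and closure does not change the represented shape. As $\bd_1 \union \bd_2 \sseq \bd_1 \uplus \bd_2$ always holds, the statement to prove is that $\bd_1 \uplus \bd_2 \neq \bd_1 \union \bd_2$ exactly when conditions~(1) and~(2) hold for some arc $(i,j)$ of $R_1$ and some arc $(k,\ell)$ of $R_2$.

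For the forward implication I would fix a point $\vect p \in (\bd_1 \uplus \bd_2) \setdiff (\bd_1 \union \bd_2)$. Because $R_1$ represents $\bd_1$ and $\vect p \notin \bd_1$, some arc $(i,j)$ of $R_1$ is violated by $\vect p$, i.e., $p_i - p_j > w_1(i,j)$; likewise there is an arc $(k,\ell)$ of $R_2$ with $p_k - p_\ell > w_2(k,\ell)$. Since $\vect p \in \bd_1 \uplus \bd_2$ we also have $p_i - p_j \le w(i,j)$, so $\max\bigl(w_1(i,j),w_2(i,j)\bigr) = w_2(i,j) > w_1(i,j)$, and symmetrically $w(k,\ell) = w_1(k,\ell) > w_2(k,\ell)$; this is condition~(1). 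For condition~(2) I would write $w_1(i,j) + w_2(k,\ell) < (p_i - p_j) + (p_k - p_\ell) = (p_i - p_\ell) + (p_k - p_j) \le w(i,\ell) + w(k,j)$, using closedness of $G$ once more.

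For the converse I would construct a separating point by perturbing $G$ slightly. Pick a rational $\epsilon > 0$ and let $G'$ be $G$ with two extra arcs added: $(j,i)$ with weight $-(w_1(i,j)+\epsilon)$, and $(\ell,k)$ with weight $-(w_2(k,\ell)+\epsilon)$. I would prove that $G'$ is consistent for all sufficiently small $\epsilon$ by bounding the weights of the simple cycles that use the new arcs, noting that a simple cycle traverses each new arc at most once: a cycle through $(j,i)$ only has weight at least $-(w_1(i,j)+\epsilon) + w(i,j) = \bigl(w_2(i,j) - w_1(i,j)\bigr) - \epsilon$; a cycle through $(\ell,k)$ only has weight at least $\bigl(w_1(k,\ell) - w_2(k,\ell)\bigr) - \epsilon$; and a cycle through both has weight at least $\bigl(w(i,\ell) + w(k,j) - w_1(i,j) - w_2(k,\ell)\bigr) - 2\epsilon$, where the convention $w(a,a) = 0$ takes care of the subcases in which some of $i,j,k,\ell$ coincide. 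By conditions~(1) and~(2) the three parenthesized quantities are strictly positive, so every $\epsilon$ below their minimum keeps $G'$ consistent; such a $G'$ then represents a non-empty BD shape contained in $\bd_1 \uplus \bd_2$, and every point $\vect p$ of that shape satisfies $p_i - p_j \ge w_1(i,j)+\epsilon > w_1(i,j)$ and $p_k - p_\ell \ge w_2(k,\ell)+\epsilon > w_2(k,\ell)$, hence belongs to neither $\bd_1$ nor $\bd_2$. Therefore $\bd_1 \uplus \bd_2 \neq \bd_1 \union \bd_2$.

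I expect the converse to be the main obstacle. The naive approach --- produce a point of $\bd_1$ that saturates a constraint of $\bd_1$ violated by $\bd_2$ and invoke Lemma~\ref{lem:union-of-closed-polyhedra-is-not-convex} --- fails, because condition~(2) only bounds the \emph{joined} weights $w(i,\ell)$ and $w(k,j)$: the separating point generally must lie strictly outside \emph{both} shapes, and it is found by relaxing $x_i - x_j$ just beyond $w_1(i,j)$ rather than pushing it all the way to the hull bound $w(i,j)$, which is precisely the amount of slack condition~(2) provides. The only remaining care is with the coincidences among $i,j,k,\ell$; these are innocuous, since only simple cycles are relevant and a simple cycle uses at most one arc into and at most one arc out of any node, so those subcases are already covered by the cycle-weight estimates above.
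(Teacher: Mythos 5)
Your proof is correct and follows essentially the same route as the paper: the forward direction is identical, and the converse builds the same $\epsilon$-perturbed graph $G'$ and certifies its consistency by bounding the simple cycles that pass through the perturbed arcs, exactly as in the paper's case analysis. The only minor differences are that the paper treats the degenerate case $i = \ell$, $j = k$ separately (deducing $\bd_1 \inters \bd_2 = \emptyset$ from a negative cycle in $G_1 \graphglb G_2$) while you absorb it into the uniform cycle estimate via $w(i,i) = w(j,j) = 0$, which works, and that your ``add two arcs'' formulation makes the containment of the separating shape in $\bd_1 \uplus \bd_2$ immediate, sparing the explicit verification of $G' \graphleq G$ that the paper carries out using consistency of $G_1$ and $G_2$.
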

\begin{proof}
Suppose that $\bd_1 \uplus \bd_2 \neq \bd_1 \union \bd_2$,
so that there exists $\vect{p} \in \bd_1 \uplus \bd_2$
such that $\vect{p} \notin \bd_1$ and $\vect{p} \notin \bd_2$.
Hence, there exist $i,j,k,\ell \in \cN$ such that
$(i,j)$ is an arc of $R_1$ satisfying%
\footnote{We extend notation by letting $\proj_0(\vect{v}) \defeq 0$,
for each vector $\vect{v} = (v_1, \ldots, v_n)^\transpose$.}
$\proj_i(\vect{p}) - \proj_j(\vect{p}) > w_1(i, j)$
and $(k, \ell)$ is an arc of $R_2$ satisfying
$\proj_k(\vect{p}) - \proj_\ell(\vect{p}) > w_2(k, \ell)$.
However, as $\vect{p} \in \bd_1 \uplus \bd_2$,
$\proj_i(\vect{p}) - \proj_j(\vect{p}) \leq w(i,j)$ and
$\proj_k(\vect{p}) - \proj_\ell(\vect{p}) \leq w(k, \ell)$
so that, by definition of $G_1 \graphlub G_2$,
we have $w_1(i, j) < w_2(i, j)$ and $w_2(k, \ell) < w_1(k, \ell)$;
hence condition (1) holds.
Since $\vect{p} \in \bd_1 \uplus \bd_2$,
\begin{align*}
  w(i, \ell) + w(k, j)
    &\geq \proj_i(\vect{p}) - \proj_\ell(\vect{p})
      + \proj_k(\vect{p}) - \proj_j(\vect{p}) \\
    &= \proj_i(\vect{p}) - \proj_j(\vect{p})
         + \proj_k(\vect{p}) - \proj_\ell(\vect{p}) \\
    &> w_1(i, j) + w_2(k, \ell).
\end{align*}
Therefore, condition (2) also holds.

We now suppose that there exist
arcs $(i,j)$ of $R_1$ and $(k, \ell)$ of $R_2$
such that conditions (1) and (2) hold.
As $G_1$ and $G_2$ are closed, $w_1(i, i) = w_2(i, i) = 0$ and
$w_1(k, k) = w_2(k, k) = 0$
so that condition (1) implies $i \neq j$ and $k \neq \ell$.
As $G_1 \graphlub G_2$ is closed, $w(i, i) = w(k, k) = 0$
so that, if $i = \ell$ and $j = k$ both hold,
condition (2) implies $w_1(i, j) + w_2(j, i) < 0$;
hence, the graph greatest lower bound $G_1 \graphglb G_2$
contains the negative weight proper cycle $i \cdot j \cdot i$
and thus is inconsistent;
hence $\bd_1 \inters \bd_2 = \emptyset$;
and hence $\bd_1 \uplus \bd_2 \neq \bd_1 \union \bd_2$.
Therefore, in the following we assume that $i \neq \ell$ or $j \neq k$ hold.
If the right hand side of the inequalities in
conditions (1) and (2) are all unbounded,
let $\epsilon \defeq 1$;
otherwise let
\begin{equation*}
%% \label{eq:thm:union-of-bds-iff-hull:epsilon}
  \epsilon
    \defeq
      \min\left\{
        \begin{aligned}
          &w(i, j) - w_1(i, j), \\
          &w(k, \ell) - w_2(k, \ell), \\
          &\frac{1}{2}
             \bigl(
               w(i, \ell) + w(k, j) - w_1(i, j) - w_2(k, \ell)
             \bigr)
        \end{aligned}
      \right\}.
\end{equation*}
Then, by conditions (1) and (2), $\epsilon > 0$.
Consider the graph $G' = (\cN, w')$ where,
for each $r, s \in \cN$,
\begin{equation*}
  w'(r,s)
    \defeq
      \begin{cases}
        - w_1(i, j) - \epsilon,
          &\text{if $(r, s) = (j, i)$;} \\
        - w_2(k, \ell) - \epsilon,
          &\text{if $(r, s) = (\ell, k)$;} \\
        w(r, s),
          &\text{otherwise.}
      \end{cases}
\end{equation*}
We show that $G'$ is a consistent graph;
to this end, since $G \defeq G_1 \graphlub G_2$ is known to be consistent,
it is sufficient to consider the proper cycles of $G'$ that contain
arcs $(j, i)$ or $(\ell, k)$.
Let $\path_{ij} = i\cdots j$ and $\path_{k\ell} = k\cdots \ell$ be
arbitrary simple paths from $i$ to $j$ and from $k$ to $\ell$, respectively.
Then $G'$ is consistent if and only if
$w'(\path_{ij} \cdot i) \geq 0$ and
$w'(\path_{k\ell} \cdot k) \geq 0$.
We only prove $w'(\path_{ij} \cdot i) \geq 0$
since the proof that
$w'(\path_{k\ell} \cdot k) \geq 0$ follows by a symmetrical argument.
As $\path_{ij}$ is simple, it does not contain the arc $(j, i)$.
Suppose first that $\path_{ij}$ does not contain the arc $(\ell, k)$.
Then
\begin{align*}
  w'(\path_{ij}\cdot i)
    &= w'(\path_{ij}) + w'(j, i) \\
    &= w(\path_{ij}) - w_1(i, j) - \epsilon
&\just{def.\ of $w'$} \\
    &\geq w(i, j) - w_1(i, j) - \epsilon
&\just{$G$ closed} \\
    &\geq 0
&\just{def.\ of $\epsilon$}.
\end{align*}
Suppose now that
$\path_{ij} = \path_{i\ell} \pathconc (\ell, k) \pathconc \path_{kj}$,
where $\path_{i\ell} = i \cdots \ell$ and $\path_{kj} = k \cdots j$
do not contain the arcs $(j, i)$ and $(k, \ell)$.
Then
%%by the assumption that $i \neq \ell$ or $j \neq k$,
\begin{align*}
  w'(\path_{ij}\cdot i)
    &= w'(\path_{i\ell}) + w'(\ell, k) + w'(\path_{kj}) + w'(j, i) \\
    &= w(\path_{i\ell}) - w_2(k, \ell) - \epsilon
       + w(\path_{kj}) - w_1(i, j) - \epsilon
&\just{def.\ of $w'$} \\
    &\geq w(i, \ell) - w_2(k, \ell) - \epsilon
       + w(k, j) - w_1(i, j) - \epsilon
&\just{$G$ closed} \\
    &= \bigl(
         w(i, \ell) + w(k, j) - w_1(i, j) - w_2(k, \ell)
       \bigr)
         - 2 \epsilon \\
    &\geq 0
&\just{def.\ of $\epsilon$}.
\end{align*}
Therefore $G'$ is consistent.
Moreover, $G' \graphleq G$ since
\begin{align*}
  w'(j, i)
    &= - w_1(i, j) - \epsilon
&\just{def.\ of $w'$} \\
    &\leq - w_1(i, j)
&\just{$\epsilon \geq 0$} \\
    &\leq w_1(j, i)
&\just{$G_1$ consistent} \\
    &\leq w(j, i)
&\just{def.\ $G$};
\end{align*}
similarly, $w'(\ell, k) \leq w(\ell, k)$;
hence, %%by definition of $w'$,
for all $r, s \in \cN$, $w'(r,s) \leq w(r,s)$.

Let $\bd' \in \BDshapes_n$ be represented by $G'$,
so that $\emptyset \neq \bd' \sseq \bd_1 \uplus \bd_2$.
Since $w'(j, i) + w_1(i, j) < 0$, we obtain
$\bd' \inters \bd_1 = \emptyset$;
since $w'(\ell, k) + w_2(k, \ell) < 0$,
we obtain $\bd' \inters \bd_2 = \emptyset$.
Hence, $\bd_1 \uplus \bd_2 \neq \bd_1 \union \bd_2$.
\qed
\end{proof}

An algorithm for the exact join detection on $\BDshapes_n$
based on Theorem~\ref{thm:union-of-bds-iff-hull}
will have a worst-case complexity bound in $\bigO(n^4)$.
Noting that the computation of graph closure and reduction
are both in
$\bigO(n^3)$~\cite{BagnaraHMZ05,BagnaraHZ09FMSD,LarsenLPY97,Mine05th},
a more detailed complexity bound is $\bigO(n^3 + r_1r_2)$,
where $r_h$ is the number of arcs in the subgraph $R_h$;
hence, a good choice is to take each $R_h$ to be a graph reduction
for $G_h$, as it will have a minimal number of arcs.

\begin{example}
Consider the 2-dimensional BD shapes
\begin{align*}
  \bd_1 &= \con\bigl(
                 \{ 0 \leq x_1 \leq 3,
                    0 \leq x_2 \leq 2,
                 \}
               \bigr), \\
  \bd_2 &= \con\bigl(
                 \{ 0 \leq x_2 \leq 2, 0 \leq x_1 - x_2 \leq 3 \}
               \bigr)
\end{align*}
shown in Figure~\ref{subfig:shapes:bd1-bd2}.
Then the join $\bd_1 \uplus \bd_2$ is exact.
Note that both conditions~(1) and (2)
in Theorem~\ref{thm:union-of-bds-iff-hull}
play an active role in the decision procedure.
For instance, when taking $i = 1$, $j = 0$, $k = 2$ and $\ell = 1$,
condition~(1) is satisfied but condition~(2) does not hold:
\begin{gather*}
  w_1(1, 0)
    = 3
    < 5
    = w_2(1, 0),
  \qquad
  w_2(2, 1)
    = 0
    < 2
    = w_1(2, 1),\\
  w_1(1, 0) + w_2(2, 1)
    = 3 + 0
    > 0 + 2
    = w(1, 1) + w(2, 0).
\end{gather*}
On the other hand, taking  $i = 1$, $j = 1$, $k = 0$ and $\ell = 2$,
it can be seen that condition~(2) is satisfied but condition~(1)
does not hold:
\begin{gather*}
  w_1(1, 1)
    = 0
    = w_2(1, 1),
  \qquad
  w_2(0, 2)
    = 0
    = w_1(0, 2),\\
  w_1(1, 1) + w_2(0, 2)
    = 0 + 0
    < 3 + 0
    = w(1, 2) + w(0, 1).
\end{gather*}
\end{example}

\subsection{Exact Join Detection for Integer BD Shapes}

We now consider the case of integer BD shapes, i.e.,
subsets of $\Zset^n$ that are delimited by BD constraints
where the bounds are all integral.
As for the rational case,
these numerical abstractions can be encoded using weighted graphs,
but restricting the codomain of the weight function to
$\Zset_\infty \defeq \Zset \union \{ +\infty \}$.
Since the set of all integer graphs is a sub-lattice of
the set of rational graphs,
the conditions in Theorem~\ref{thm:union-of-bds-iff-hull}
can be easily strengthened so as to obtain the corresponding
result for the domain $\IBDshapes_n$ of integer BD shapes.
The complexity bound for the algorithm for the domain of integer BD shapes
is the same as for the rational domain.

\begin{theorem}
\label{thm:union-of-int-bds-iff-hull}
For each $h \in \{ 1, 2 \}$,
let $\bd_h \in \IBDshapes_n$ be a non-empty integer BD shape
represented by the closed integer graph $G_h = (\cN, w_h)$ and
let $R_h$ be a subgraph of $G_h$ such that $\closure(R_h) = G_h$.
Let also $G_1 \graphlub G_2 = (\cN, w)$.
Then $\bd_1 \uplus \bd_2 \neq \bd_1 \union \bd_2$ if and only if
there exist arcs $(i, j)$ of $R_1$ and $(k, \ell)$ of $R_2$
such that
\begin{itemize}
\item[\textup{(1)}]
$w_1(i, j) < w_2(i, j)$ and $w_2(k, \ell) < w_1(k, \ell)$;
and
\item[\textup{(2)}]
$w_1(i, j) + w_2(k, \ell) + 2 \leq w(i, \ell) + w(k, j)$.
\end{itemize}
\end{theorem}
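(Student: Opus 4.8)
The plan is to follow the proof of Theorem~\ref{thm:union-of-bds-iff-hull} almost verbatim, replacing the rational slack parameter $\epsilon$ by the integrality gap: whenever an integer point strictly violates an integer bounded-difference constraint it violates it by at least $1$, and this single observation is exactly what upgrades the strict inequality ``$<$'' of condition~(2) in the rational theorem to the inequality ``$w_1(i,j) + w_2(k,\ell) + 2 \leq w(i,\ell) + w(k,j)$'' in the integer case.

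For the \emph{only if} direction I would suppose $\bd_1 \uplus \bd_2 \neq \bd_1 \union \bd_2$ and, since the join is also the set-theoretic hull of integer shapes, pick $\vect{p} \in \Zset^n$ with $\vect{p} \in \bd_1 \uplus \bd_2$, $\vect{p} \notin \bd_1$ and $\vect{p} \notin \bd_2$. Exactly as in the rational proof (using $\closure(R_h) = G_h$), there are arcs $(i,j)$ of $R_1$ and $(k,\ell)$ of $R_2$ witnessing the violations; but because $\vect{p}$, $w_1$ and $w_2$ are integral these read $\proj_i(\vect{p}) - \proj_j(\vect{p}) \geq w_1(i,j) + 1$ and $\proj_k(\vect{p}) - \proj_\ell(\vect{p}) \geq w_2(k,\ell) + 1$. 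Condition~(1) follows as before from $\vect{p} \in \bd_1 \uplus \bd_2$ and the definition of $G_1 \graphlub G_2$; condition~(2) follows by chaining $w(i,\ell) + w(k,j) \geq (\proj_i(\vect{p}) - \proj_\ell(\vect{p})) + (\proj_k(\vect{p}) - \proj_j(\vect{p})) = (\proj_i(\vect{p}) - \proj_j(\vect{p})) + (\proj_k(\vect{p}) - \proj_\ell(\vect{p})) \geq w_1(i,j) + w_2(k,\ell) + 2$, using only that $\vect{p}$ satisfies $G_1 \graphlub G_2$ together with the two integral violations.

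For the \emph{if} direction I would assume conditions~(1) and~(2) hold for arcs $(i,j)$ of $R_1$ and $(k,\ell)$ of $R_2$. Closedness of $G_1$, $G_2$ and $G \defeq G_1 \graphlub G_2$ gives $i \neq j$ and $k \neq \ell$ as before, and if $i = \ell$ and $j = k$ then condition~(2) forces $w_1(i,j) + w_2(j,i) \leq -2 < 0$, so $G_1 \graphglb G_2$ contains a negative-weight proper cycle, whence $\bd_1 \inters \bd_2 = \emptyset$ and we are done. Otherwise I mimic the rational construction, but now take the slack $\epsilon \defeq 1$ baked in from the start: define $G' = (\cN, w')$ by $w'(j,i) \defeq -w_1(i,j) - 1$, $w'(\ell,k) \defeq -w_2(k,\ell) - 1$, and $w'(r,s) \defeq w(r,s)$ otherwise, which is again an integer graph. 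Consistency is checked exactly as in the rational proof by bounding $w'(\path_{ij}\cdot i)$ and $w'(\path_{k\ell}\cdot k)$ over simple paths: the no-detour branch needs $w(i,j) \geq w_1(i,j) + 1$, which holds because condition~(1) and integrality give $w_2(i,j) \geq w_1(i,j) + 1$ and $w(i,j) = \max(w_1(i,j), w_2(i,j))$; the two-detour branch needs $w(i,\ell) + w(k,j) - w_1(i,j) - w_2(k,\ell) \geq 2$, which is precisely condition~(2). Since $w'(j,i) \leq w_1(j,i) \leq w(j,i)$ and similarly for $(\ell,k)$, we get $G' \graphleq G$, so the integer BD shape $\bd'$ represented by the consistent integer graph $G'$ is non-empty with $\bd' \sseq \bd_1 \uplus \bd_2$, while $w'(j,i) + w_1(i,j) = -1 < 0$ and $w'(\ell,k) + w_2(k,\ell) = -1 < 0$ give $\bd' \inters \bd_1 = \bd' \inters \bd_2 = \emptyset$; hence $\bd_1 \uplus \bd_2 \neq \bd_1 \union \bd_2$.

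The only genuinely new point relative to the rational argument is the integrality bookkeeping, so I expect the main (small) obstacle to be confirming that the fixed choice $\epsilon = 1$ suffices in every branch of the consistency check for $G'$ — the no-detour branch relying on the integer gap supplied by condition~(1), and the detour branch relying on the explicit ``$+2$'' in condition~(2) — together with the remark that a consistent integer difference graph does represent a non-empty \emph{integer} BD shape, so that $\bd'$ is a legitimate witness in $\IBDshapes_n$.
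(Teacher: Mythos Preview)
Your proposal is correct and follows essentially the same approach as the paper: both directions mirror the rational argument (Theorem~\ref{thm:union-of-bds-iff-hull}) with the integrality gap replacing~$\epsilon$, and the ``if'' direction builds the same witness graph~$G'$ with $\epsilon \defeq 1$. If anything, you spell out more than the paper does---the paper simply says ``by using the same reasoning as in the proof of Theorem~\ref{thm:union-of-bds-iff-hull}'' for the consistency of~$G'$, whereas you explicitly verify that condition~(1) with integrality supplies the $+1$ needed in the no-detour branch and that condition~(2) supplies the $+2$ needed in the detour branch.
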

\begin{proof}
Suppose first that $\bd_1 \uplus \bd_2 \neq \bd_1 \union \bd_2$,
so that there exists $\vect{p} \in \Zset^n$ such that
$\vect{p} \in \bd_1 \uplus \bd_2$ but
$\vect{p} \notin \bd_1$ and $\vect{p} \notin \bd_2$.
Hence, there exist $i,j,k,\ell \in \cN$ such that
$(i,j)$ is an arc of $R_1$ satisfying
$\proj_i(\vect{p}) - \proj_j(\vect{p}) > w_1(i, j)$
and $(k, \ell)$ is an arc of $R_2$ satisfying
$\proj_k(\vect{p}) - \proj_\ell(\vect{p}) > w_2(k, \ell)$.
However, as $\vect{p} \in \bd_1 \uplus \bd_2$,
$\proj_i(\vect{p}) - \proj_j(\vect{p}) \leq w(i,j)$ and
$\proj_k(\vect{p}) - \proj_\ell(\vect{p}) \leq w(k, \ell)$
so that, by definition of $G_1 \graphlub G_2$,
we have $w_1(i, j) < w_2(i, j)$ and
$w_2(k, \ell) < w_1(k, \ell)$;
hence condition (1) holds.
Note also that $w_1(i, j)$ and $w_2(k, \ell)$ are both finite
and hence in $\Zset$ so that $w_1(i, j) + 1 \leq w_2(i, j)$ and
$w_2(k, \ell) + 1 \leq w_1(k, \ell)$.
Since $\vect{p} \in \bd_1 \uplus \bd_2$,
\begin{align*}
  w(i, \ell) + w(k, j)
    &\geq \proj_i(\vect{p}) - \proj_\ell(\vect{p})
      + \proj_k(\vect{p}) - \proj_j(\vect{p}) \\
    &= \proj_i(\vect{p}) - \proj_j(\vect{p})
         + \proj_k(\vect{p}) - \proj_\ell(\vect{p}) \\
    &\geq w_1(i, j) + w_2(k, \ell) + 2.
\end{align*}
Therefore, condition (2) also holds.

We now suppose that there exist
arcs $(i,j)$ of $R_1$ and $(k, \ell)$ of $R_2$
such that conditions (1) and (2) hold.
Let $G' = (\cN, w')$ be a graph defined as in the proof of
Theorem~\ref{thm:union-of-bds-iff-hull},
where however we just define $\epsilon \defeq 1$,
so that $G'$ is an integer graph.
By using the same reasoning as in the proof of
Theorem~\ref{thm:union-of-bds-iff-hull}, it can be seen that
$G'$ is consistent and $G' \graphleq G_1 \graphlub G_2$.
Let $\bd' \in \IBDshapes_n$ be represented by $G'$,
so that $\emptyset \neq \bd' \sseq \bd_1 \uplus \bd_2$.
Since $w'(j, i) + w_1(i, j) < 0$,
we obtain
$\bd' \inters \bd_1 = \emptyset$;
since $w'(\ell, k) + w_2(k, \ell) < 0$,
we obtain $\bd' \inters \bd_2 = \emptyset$.
Hence, $\bd_1 \uplus \bd_2 \neq \bd_1 \union \bd_2$.
\qed
\end{proof}

\begin{example}
Consider the 2-dimensional BD shapes
\begin{align*}
  \bd_3 &= \con\bigl(
                 \{ 0 \leq x_1 \leq 3,
                    0 \leq x_2 \leq 2,
                    x_1 - x_2 \leq 2
                 \}
               \bigr), \\
  \bd_4 &= \con\bigl(
                 \{ 3 \leq x_1 \leq 6, 0 \leq x_2 \leq 2 \}
               \bigr)
\end{align*}
shown in Figure~\ref{subfig:shapes:bd3-bd4}.
Then, in the case of rational BD shapes,
the join $\bd_3 \uplus \bd_4$ is not exact;
for instance, letting $\vect{p} = (2.5, 0)^\transpose$
be the point highlighted in Figure~\ref{subfig:shapes:bd3-bd4},
we have $\vect{p} \in \bd_3 \uplus \bd_4$
although $\vect{p} \notin \bd_3 \union \bd_4$.
Taking $i = 1$, $j = 2$, $k = 0$ and $\ell = 1$,
it can be seen that
both conditions in Theorem~\ref{thm:union-of-bds-iff-hull} are satisfied;
in particular, for the second condition we have
\[
  w_1(1, 2) + w_2(0, 1) = 2 - 3
    \leq
      0 + 0 = w(1, 1) + w(0, 2).
\]
By contrast, in the case of integer BD shapes, the join is exact;
all the integral points belonging to the join $\bd_3 \uplus \bd_4$,
denoted by small crosses in Figure~\ref{subfig:shapes:bd3-bd4},
also belong to the union $\bd_3 \union \bd_4$.
In particular, with the above choice for indices $i, j, k, \ell$,
the second condition of Theorem~\ref{thm:union-of-int-bds-iff-hull}
does not hold:
\[
  w_1(1, 2) + w_2(0, 1) + 2 = 2 - 3 + 2
    > 0 + 0 = w(1, 1) + w(0, 2).
\]
\end{example}

\begin{figure}
\centering
\mbox{
\scriptsize{
\subfigure[$\bd_1, \bd_2 \in \BDshapes_2$]{
\label{subfig:shapes:bd1-bd2}
\begin{picture}(120,55)(0,-10)
%% BD shapes
\put(0,0){
\setlength{\unitlength}{0.7pt}%
\psset{xunit=1cm,yunit=1cm,runit=1cm}
\psset{origin={0,0}}
\psline{->}(-0.5,0)(3.5,0)
\psline{->}(0,-0.3)(0,1.5)
\rput(-0.2,-0.2){O}
\rput(3.2,-0.2){$x_1$}
\rput(-0.2,1.2){$x_2$}
\pspolygon[linecolor=lightgreen,fillcolor=lightgreen,fillstyle=solid](0,0)(1.5,0)(2.5,1)(0,1)
\psline[linecolor=red,linestyle=solid](0,0.01)(1.5,0.01)(1.5,1.01)(0,1.01)(0,0)
\psline[linecolor=blue,linestyle=solid](0,-0.01)(1.5,-0.01)(2.5,0.99)(1,0.99)(0,-0.01)
\rput(0.35,0.75){$\bd_1$}
\rput(1.85,0.75){$\bd_2$}
}
\end{picture}
}
\qquad
\subfigure[$\bd_3, \bd_4 \in \BDshapes_2$]{
\label{subfig:shapes:bd3-bd4}
\begin{picture}(120,55)(0,-10)
%% BD shapes
\put(0,0){
\setlength{\unitlength}{0.7pt}%
\psset{xunit=1cm,yunit=1cm,runit=1cm}
\psset{origin={0,0}}
\psline{->}(-0.5,0)(3.5,0)
\psline{->}(0,-0.3)(0,1.5)
\rput(-0.2,-0.2){O}
\rput(3.2,-0.2){$x_1$}
\rput(-0.2,1.2){$x_2$}
\pspolygon[linecolor=lightgreen,fillcolor=lightgreen,fillstyle=solid](0,0)(1,0)(1.5,0.5)(1.5,1)(0,1)
\pspolygon[linecolor=lightgreen,fillcolor=lightgreen,fillstyle=solid](1.5,0)(3,0)(3,1)(1.5,1)
\psline[linecolor=red,linestyle=solid](0,0)(1,0)
\psline[linecolor=red,linestyle=solid](1,0)(1.5,0.5)
\psline[linecolor=red,linestyle=solid,linewidth=0.5pt](1.49,0.5)(1.49,1)
\psline[linecolor=red,linestyle=solid](1.49,1)(0,1)
\psline[linecolor=red,linestyle=solid](0,1)(0,0)
\psline[linecolor=blue,linestyle=solid](1.51,0)(3,0)
\psline[linecolor=blue,linestyle=solid](3,0)(3,1)
\psline[linecolor=blue,linestyle=solid](3,1)(1.5,1)
\psline[linecolor=blue,linestyle=solid,linewidth=0.5pt](1.51,1)(1.51,0)
\psdot[linecolor=dgreen,dotstyle=BoldMul](0,0)
\psdot[linecolor=dgreen,dotstyle=BoldMul](0,0)
\psdot[linecolor=dgreen,dotstyle=BoldMul](0,0.5)
\psdot[linecolor=dgreen,dotstyle=BoldMul](0,1)
\psdot[linecolor=dgreen,dotstyle=BoldMul](0.5,0)
\psdot[linecolor=dgreen,dotstyle=BoldMul](0.5,0.5)
\psdot[linecolor=dgreen,dotstyle=BoldMul](0.5,1)
\psdot[linecolor=dgreen,dotstyle=BoldMul](1,0)
\psdot[linecolor=dgreen,dotstyle=BoldMul](1,0.5)
\psdot[linecolor=dgreen,dotstyle=BoldMul](1,1)
\psdot[linecolor=dgreen,dotstyle=BoldMul](1.5,0)
\psdot[linecolor=dgreen,dotstyle=BoldMul](1.5,0.5)
\psdot[linecolor=dgreen,dotstyle=BoldMul](1.5,1)
\psdot[linecolor=dgreen,dotstyle=BoldMul](2,0)
\psdot[linecolor=dgreen,dotstyle=BoldMul](2,0.5)
\psdot[linecolor=dgreen,dotstyle=BoldMul](2,1)
\psdot[linecolor=dgreen,dotstyle=BoldMul](2.5,0)
\psdot[linecolor=dgreen,dotstyle=BoldMul](2.5,0.5)
\psdot[linecolor=dgreen,dotstyle=BoldMul](2.5,1)
\psdot[linecolor=dgreen,dotstyle=BoldMul](3,0)
\psdot[linecolor=dgreen,dotstyle=BoldMul](3,0.5)
\psdot[linecolor=dgreen,dotstyle=BoldMul](3,1)
\pscircle*[linecolor=dgreen](1.25,0){1.5pt}
\rput(1.25,-0.2){$\vect{p}$}
\rput(0.75,0.75){$\bd_3$}
\rput(2.25,0.75){$\bd_4$}
}
\end{picture}
}
}
}
\caption{Examples for the join of rational and integer BD shapes}
\label{fig:shapes-examples}
\end{figure}
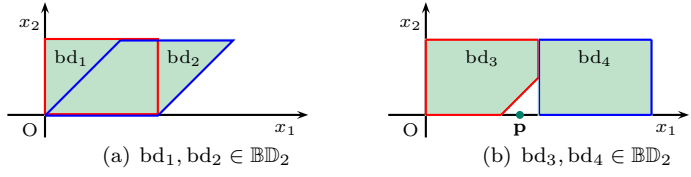

\subsection{Generalizing to $k$ BD shapes}

We conjecture that the above results for the exact join
detection of two (rational or integer) BD shapes can be generalized
to any number of component BD shapes.
That is, given $k$ BD shapes $\bd_1, \ldots, \bd_k \in \BDshapes_n$,
it is possible to provide a suitable set of conditions that determine
whether or not
$\bd_1 \uplus \cdots \uplus \bd_k = \bd_1 \union \cdots \union \bd_k$.
Here we just present the conjecture, for the rational case, when $k = 3$.

\begin{conjecture}
\label{conj:union-of-3-bds-iff-hull}
For each $h \in \{ 1, 2, 3 \}$,
let $\bd_h \in \BDshapes_n$ be a non-empty BD shape
represented by the closed graph $G_h = (\cN, w_h)$ and
let $R_h$ be a subgraph of $G_h$ such that $\closure(R_h) = G_h$.
Let also $G_1 \graphlub G_2 \graphlub G_3 = (\cN, w)$.
Then $\bd_1 \uplus \bd_2 \uplus \bd_3 \neq \bd_1 \union \bd_2 \union \bd_3$
if and only if there exist arcs $(i_1, j_1)$ of $R_1$,
 $(i_2, j_2)$ of $R_2$ and $(i_3, j_3)$ of $R_3$, respectively,
such that
\begin{itemize}
\item[\textup{(1)}]
for each $h \in \{ 1, 2, 3 \}$,
$w_h(i_h, j_h) < w(i_h, j_h)$;
\item[\textup{(2a)}]
$w_1(i_1, j_1) + w_2(i_2, j_2) < w(i_1, j_2) + w(i_2, j_1)$;
\item[\textup{(2b)}]
$w_2(i_2, j_2) + w_3(i_3, j_3) < w(i_2, j_3) + w(i_3, j_2)$;
\item[\textup{(2c)}]
$w_3(i_3, j_3) + w_1(i_1, j_1) < w(i_3, j_1) + w(i_1, j_3)$;
\item[\textup{(3a)}]
\(
  w_1(i_1, j_1) + w_2(i_2, j_2) + w_3(i_3, j_3)
    < w(i_1, j_2) + w(i_2, j_3) + w(i_3, j_1)
\);
\item[\textup{(3b)}]
\(
  w_1(i_1, j_1) + w_2(i_2, j_2) + w_3(i_3, j_3)
    < w(i_1, j_3) + w(i_2, j_1) + w(i_3, j_2)
\).
\end{itemize}
\end{conjecture}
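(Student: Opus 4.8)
The plan is to follow the proof of Theorem~\ref{thm:union-of-bds-iff-hull} almost verbatim, replacing the single auxiliary ``reverse arc'' by three reverse arcs $(j_1,i_1)$, $(j_2,i_2)$, $(j_3,i_3)$ and enlarging the cycle analysis accordingly. Since `$\mathord{\uplus}$' and `$\mathord{\graphlub}$' are associative, $\bd_1 \uplus \bd_2 \uplus \bd_3$ is represented by the closed graph $G := G_1 \graphlub G_2 \graphlub G_3 = (\cN, w)$. For \textbf{necessity}, suppose $\bd_1 \uplus \bd_2 \uplus \bd_3 \neq \bd_1 \union \bd_2 \union \bd_3$ and pick $\vect{p} \in \bd_1 \uplus \bd_2 \uplus \bd_3$ with $\vect{p} \notin \bd_h$ for $h \in \{1,2,3\}$. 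A constraint of $R_h$ violated by $\vect{p}$ provides an arc $(i_h,j_h)$ of $R_h$ with $\proj_{i_h}(\vect{p}) - \proj_{j_h}(\vect{p}) > w_h(i_h,j_h)$, while $\vect{p} \in \bd_1 \uplus \bd_2 \uplus \bd_3$ gives $\proj_r(\vect{p}) - \proj_s(\vect{p}) \leq w(r,s)$ for all $r,s \in \cN$. Condition~(1) is then immediate, and each of (2a)--(2c) and (3a)--(3b) is obtained by summing the inequalities $w(i_h,j_{h'}) \geq \proj_{i_h}(\vect{p}) - \proj_{j_{h'}}(\vect{p})$ along the relevant cyclic pattern: the mixed projection terms telescope, leaving $\sum_{h\in T} w_h(i_h,j_h) < \sum_{h\in T}\bigl(\proj_{i_h}(\vect{p}) - \proj_{j_h}(\vect{p})\bigr) \leq (\text{stated right-hand side})$, with $T$ a pair for (2a)--(2c) and $T = \{1,2,3\}$ for (3a)--(3b). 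This is exactly the telescoping used for condition~(2) of Theorem~\ref{thm:union-of-bds-iff-hull}, repeated once per pattern.

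For \textbf{sufficiency}, assume all the conditions hold. Mimicking Theorem~\ref{thm:union-of-bds-iff-hull}, set $\epsilon$ to the minimum of the quantities $w(i_h,j_h) - w_h(i_h,j_h)$, of $\frac{1}{2}\bigl(w(i_h,j_{h'}) + w(i_{h'},j_h) - w_h(i_h,j_h) - w_{h'}(i_{h'},j_{h'})\bigr)$ over the patterns of (1) and (2a)--(2c), and of $\frac{1}{3}$ of the two differences appearing in (3a)--(3b), discarding any such quantity equal to $+\infty$ and taking $\epsilon := 1$ if all are discarded; by the hypotheses, $\epsilon > 0$. Define $G' := (\cN, w')$ from $G$ by putting $w'(j_h,i_h) := -w_h(i_h,j_h) - \epsilon$ for $h \in \{1,2,3\}$ (using the smaller value if two of these arcs coincide) and $w'(r,s) := w(r,s)$ otherwise. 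As in the two-shape proof, $\epsilon \geq 0$ and consistency of $G_h$ yield $w'(j_h,i_h) \leq -w_h(i_h,j_h) \leq w_h(j_h,i_h) \leq w(j_h,i_h)$, so $G' \graphleq G$ and the BD shape $\bd'$ represented by $G'$ satisfies $\bd' \sseq \bd_1 \uplus \bd_2 \uplus \bd_3$, with $\bd' \neq \emptyset$ once $G'$ is shown consistent. Moreover condition~(1) forces $i_h \neq j_h$, so the arc $(j_h,i_h)$ contributes a constraint of $\bd'$ contradicting the arc $(i_h,j_h)$ of $G_h$; hence $\bd' \inters \bd_h = \emptyset$ for each $h$, and such a $\bd'$ would witness $\bd_1 \uplus \bd_2 \uplus \bd_3 \neq \bd_1 \union \bd_2 \union \bd_3$. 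It thus remains only to verify that $G'$ is consistent.

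\textbf{Consistency of $G'$ is the crux.} Because $G$ is consistent and every negative closed walk contains a negative simple cycle, it suffices to bound from below the weight of each simple cycle of $G'$, classified by the set $S \sseq \{1,2,3\}$ of reverse arcs it uses. If $S = \emptyset$ the cycle lies in $G$. If $S = \{h\}$ the cycle is $(j_h,i_h)$ followed by a path of $G$ from $i_h$ to $j_h$, of weight at least $w(i_h,j_h)$ by closedness of $G$, so the total is at least $w(i_h,j_h) - w_h(i_h,j_h) - \epsilon \geq 0$ by the choice of $\epsilon$ and condition~(1). If $S = \{h,h'\}$, up to rotation the cycle is $(j_h,i_h)$, a path of $G$ from $i_h$ to $j_{h'}$, $(j_{h'},i_{h'})$, a path of $G$ from $i_{h'}$ to $j_h$; bounding the two $G$-paths by $w(i_h,j_{h'})$ and $w(i_{h'},j_h)$ and invoking the matching inequality among (2a)--(2c), the total is at least $w(i_h,j_{h'}) + w(i_{h'},j_h) - w_h(i_h,j_h) - w_{h'}(i_{h'},j_{h'}) - 2\epsilon \geq 0$. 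If $S = \{1,2,3\}$ there are exactly two cyclic orders of the three reverse arcs, whose lower bounds are controlled by (3a) and (3b) respectively, with $-3\epsilon$ absorbed by the choice of $\epsilon$. Thus every simple cycle is non-negative, $G'$ is consistent, and the proof is complete.

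The only genuinely delicate point --- and the reason this calls for care rather than being immediate --- is making the cycle enumeration exhaustive: three extra arcs give the $1 + 3 + 2$ cyclic patterns listed, which is precisely why conditions (2a)--(2c) and (3a)--(3b) appear. The analogous statement for general $k$ would need one inequality per cyclic subsequence of the $k$ reverse arcs, which is why it is harder to state and is left here as a conjecture. Two accompanying matters are routine: a pattern whose right-hand side is $+\infty$ corresponds to a cycle type that cannot occur, so it imposes no constraint; and coincidences among the nodes $i_1, j_1, \ldots, i_3, j_3$ are harmless --- for instance a fully collapsed $2$-cycle makes the matching slack equal $-\frac{1}{2}\bigl(w_h(i_h,j_h) + w_{h'}(i_{h'},j_{h'})\bigr)$, kept positive exactly by the relevant condition among (2a)--(2c).
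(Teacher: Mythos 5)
There is no proof of this statement in the paper for you to be measured against: the authors state it explicitly as a conjecture and prove only the two-shape case (Theorem~\ref{thm:union-of-bds-iff-hull}). What you have written is the natural extension of that proof to three shapes, which is clearly the route the authors envisage, and in its main line I find no gap. The necessity half is indeed just the telescoping argument of Theorem~\ref{thm:union-of-bds-iff-hull} repeated once per pattern, and in the sufficiency half your classification of the simple cycles of $G'$ by the subset of reverse arcs they use is exhaustive in the generic case: the $1+3+2$ cycle shapes are exactly what conditions (1), (2a)--(2c) and (3a)--(3b) control, empty connecting subpaths are harmless because $w(r,r)=0$ for the closed graph $G$, and a pattern whose right-hand side is $+\infty$ indeed corresponds to a cycle shape that cannot occur since any existing connecting path would force the corresponding $w$-entry to be finite. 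The place you dispatch in a parenthesis is the place that genuinely needs to be written out, namely the degenerate identifications among the six nodes and the resulting coincidences of modified arcs. Note in particular that the shortcut used in the proof of Theorem~\ref{thm:union-of-bds-iff-hull} for the case $i=\ell$, $j=k$ --- there condition~(2) makes $G_1 \graphglb G_2$ inconsistent, so $\bd_1 \inters \bd_2 = \emptyset$ and inexactness of the join of the \emph{two} shapes is immediate --- does not transfer to three shapes, because $\bd_3$ may cover the gap between $\bd_1$ and $\bd_2$; so when, say, $(i_2,j_2)=(j_1,i_1)$, you are right that the case must be absorbed into the $\epsilon$-construction rather than handled by a disjointness argument, and one must then also re-verify consistency of $G'$, the relation $G' \graphleq G$, and $\bd' \inters \bd_h = \emptyset$ under your ``smaller value'' convention when two reverse arcs coincide. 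These verifications do go through (for instance, the collapsed two-cycle has weight at least the slack of the matching condition among (2a)--(2c) minus $2\epsilon$, and a coinciding pair of reverse arcs is covered because $\epsilon$ is bounded by both individual slacks of condition~(1)), but they are a finite case analysis that must appear in the proof, not a remark; written out in full, your argument would promote the $k=3$ instance of Conjecture~\ref{conj:union-of-3-bds-iff-hull} from conjecture to theorem.
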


Even though the generalization is straightforward from a mathematical
point of view, for larger values of $k$ this will result in having to
check a rather involved combinatorial combination of all the conditions.

\section{Exact Join Detection for Octagonal Shapes}
\label{sec:oct-shapes}

Octagonal constraints generalize BD constraints by also
allowing for non-strict inequalities having the form
$x_i + x_j \leq b$ or $-x_i - x_j \leq b$.
This class of constraints was first proposed in~\cite{BalasundaramK89}
and further elaborated in \cite{Mine01b}.

\subsection{Octagonal Shapes and their Graph Representation}

We first introduce the required notation and terminology
(see also~\cite{BagnaraHMZ05,BagnaraHZ09FMSD,Mine05th}).

Octagonal constraints can be encoded using BD constraints
by splitting each variable $x_i$ into two forms:
a positive form $x_i^+$, interpreted as $+x_i$;
and a negative form $x_i^-$, interpreted as $-x_i$.
For instance, an octagonal constraint such as $x_i + x_j \leq b$ can be
translated into the potential constraint
$x_i^+ - x_j^- \leq b$; alternatively, the same octagonal constraint
can be translated into $x_j^+ - x_i^- \leq b$.
Unary (octagonal) constraints such as
$x_i \leq b$ and $-x_i \leq b$
are encoded as $x_i^+ - x_i^- \leq 2b$
and $x_i^- - x_i^+ \leq -2b$, respectively.

From now on, we assume that the set of nodes
is $\Vpm \defeq \{ 0, \ldots, 2n-1 \}$.
These will denote the positive and negative forms
of the vector space dimensions $x_1$, \ldots, $x_n$:
for all $i \in \Vpm$, if $i = 2k$, then
$i$ represents the positive form $x_{k+1}^+$ and, if $i = 2k+1$, then
$i$ represents the negative form $x_{k+1}^-$ of the dimension $x_{k+1}$.
To simplify the presentation,
we let $\bari$ denote $i+1$, if $i$ is even,
and $i-1$, if $i$ is odd, so that, for all $i \in \Vpm$, we also have
$\bari \in \Vpm$ and $\bar{\bari} = i$.
%% Then we can rewrite a potential constraint $v - v' \leq d$
%% where $v \in \{v_k^+, v_k^-\}$
%% and $v' \in \{v_l^+, v_l^-\}$
%% as the potential constraint $i - j \leq d$ in $\Vpm$ where, if
%% $v = v^+_k$, $i = 2k$ and if $v = v^-_k$, $i = 2k+1$;
%% similarly, if
%% $v' = v^+_l$, $j = 2l$ and if $v' = v^-_l$, $j = 2l+1$.

It follows from the above translations that
any finite system of octagonal constraints,
translated into a set of potential constraints in $\Vpm$ as above,
can be encoded by a graph $G$ in $\Vpm$.
In particular, any finite \emph{satisfiable} system
of octagonal constraints can be encoded by a \emph{consistent}
graph in $\Vpm$.
However, the converse does not hold
since in any valuation $\rho$ of an encoding of
a set of octagonal constraints
we must also have $\rho(i) = -\rho(\bari)$, so that
the arcs $(i,j)$ and $(\barj, \bari)$ should have the same weight.
Therefore, to encode rational octagonal constraints,
we restrict attention to consistent graphs over $\Vpm$
where the arcs in all such pairs are \emph{coherent}.

\begin{definition}
\label{def:octagonal-graph}
\summary{(Octagonal graph.)}
A (rational) \emph{octagonal graph} is
any consistent graph $G =(\Vpm, w)$ that
satisfies the coherence assumption:
\begin{equation}
\label{rule:coherence}
  \forall i,j \in \Vpm
    \itc w(i,j) = w(\barj, \bari).
\end{equation}
\end{definition}
The set $\Octgraphs$ of all octagonal graphs
(with the usual addition of the bottom element,
representing an unsatisfiable system of constraints)
is a sub-lattice of $\Graphs_\bot$, sharing the same
least upper bound and greatest lower bound operators.
Note that, at the implementation level, coherence can be
automatically and efficiently enforced by letting arc $(i,j)$
and arc $(\barj, \bari)$ share the same representation.
%% This also implies that an octagonal constraint such as
%% $x_i + x_j \leq b$ will always be translated into both
%% $x_i^+ - x_j^- \leq b$ and $x_j^+ - x_i^- \leq b$.

%% When dealing with octagonal graphs, observe that the coherence assumption links
%% the positive and negative forms of variables.
%% A closure by entailment procedure
%% should consider, besides transitivity, the following inference rule:
%% \begin{equation}
%% \label{eq:strong-coherence-inference-rule}
%%   \prooftree
%%     i - \bari \leq d_1
%%       \qquad
%%     \barj - j \leq d_2
%%   \justifies
%%     i - j \leq \frac{d_1 + d_2}{2}
%%   \endprooftree
%% \end{equation}
%% Thus, the standard shortest-path closure algorithm is not enough
%% to obtain a canonical form for octagonal graphs.

The standard shortest-path closure algorithm is not enough to obtain
a canonical form for octagonal graphs.

\begin{definition}
\label{def:strong-closure}
\summary{(Graph strong closure/reduction.)}
An octagonal graph $G = (\Vpm, w)$ is \emph{strongly closed}
if it is closed and the following property holds:
\begin{equation}
\label{rule:strong-coherence}
  \forall i,j \in \Vpm
    \itc 2 w(i,j) \leq w(i, \bari) + w(\barj, j).
\end{equation}
The \emph{strong closure} of an octagonal graph $G$ in $\Vpm$ is
\[
  \strongclosure(G)
    \defeq
      \biggraphlub
        \bigl\{\,
          G' \in \Octgraphs
        \bigm|
          \text{$G' \graphleq G$ and $G'$ is strongly closed\/}
        \,\bigr\}.
\]
An octagonal graph $R$ is \emph{strongly reduced} if,
for each octagonal graph $G \neq R$ such that $R \graphleq G$,
we have $\strongclosure(R) \neq \strongclosure(G)$.
A \emph{strong reduction} for the octagonal graph $G$ is
any strongly reduced octagonal graph $R$
such that $\strongclosure(R) = \strongclosure(G)$.
\end{definition}
Observe that, as was the case for shortest-path reduction,
a strong reduction for a strongly closed graph $G$
is a subgraph of $G$.

We denote by $\Octshapes_n$ the domain of octagonal shapes,
whose non-empty elements can be represented by octagonal graphs:
$\BDshapes_n$ is a meet-sublattice of $\Octshapes_n$,
which in turn is a meet-sublattice of $\CPset_n$.
A strongly closed (resp., strongly reduced) graph encodes
a system of octagonal constraints which is closed by entailment
(resp., contains no redundant constraint).

\subsection{Exact Join Detection for Rational Octagonal Shapes}

An exact join decision procedure specialized for rational octagonal shapes
can be based on the following result.

\begin{theorem}
\label{thm:union-of-octs-iff-hull}
For each $h \in \{ 1, 2 \}$,
let $\oct_h \in \Octshapes_n$ be a non-empty octagonal shape
represented by the strongly closed graph $G_h = (\cN, w_h)$ and
let $R_h$ be a subgraph of $G_h$ such that $\strongclosure(R_h) = G_h$.
Let also $G_1 \graphlub G_2 = (\Vpm, w)$.
Then $\oct_1 \uplus \oct_2 \neq \oct_1 \union \oct_2$ if and only if
there exist arcs $(i, j)$ of $R_1$ and $(k, \ell)$ of $R_2$
such that
\begin{itemize}
\item[\textup{(1a)}]
$w_1(i, j) < w_2(i, j)$;
\item[\textup{(1b)}]
$w_2(k, \ell) < w_1(k, \ell)$;
\item[\textup{(2a)}]
$w_1(i, j) + w_2(k, \ell) < w(i, \ell) + w(k, j)$;
\item[\textup{(2b)}]
$w_1(i, j) + w_2(k, \ell) < w(i, \bark) + w(\barj, \ell)$;
\item[\textup{(3a)}]
$2w_1(i, j) + w_2(k, \ell) < w(i, \ell) + w(i, \bark) + w(\barj, j)$;
\item[\textup{(3b)}]
$2w_1(i, j) + w_2(k, \ell) < w(k, j) + w(\barj, \ell) + w(i, \bari)$;
\item[\textup{(4a)}]
$w_1(i, j) + 2w_2(k, \ell) < w(i, \ell) + w(\barj, \ell) + w(k, \bark)$;
\item[\textup{(4b)}]
$w_1(i, j) + 2w_2(k, \ell) < w(k, j) + w(i, \bark) + w(\barl, \ell)$.
\end{itemize}
\end{theorem}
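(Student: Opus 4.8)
The plan is to mirror the proof of Theorem~\ref{thm:union-of-bds-iff-hull}, the additional structure coming entirely from the coherence \eqref{rule:coherence} and strong coherence \eqref{rule:strong-coherence} of octagonal graphs. Throughout I extend the projection notation to the nodes of $\Vpm$ by setting $\proj_{2k}(\vect p) \defeq p_{k+1}$ and $\proj_{2k+1}(\vect p) \defeq -p_{k+1}$, so that $\proj_{\bari}(\vect p) = -\proj_i(\vect p)$ and a point $\vect p$ satisfies the potential constraint encoded by an arc $(a,b)$ of a graph $(\Vpm, w_H)$ exactly when $\proj_a(\vect p) - \proj_b(\vect p) \le w_H(a,b)$. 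I will use that $\oct_1 \uplus \oct_2$ is represented by $G = G_1 \graphlub G_2$, that $\oct_h$ is the shape of $R_h$ (since $\strongclosure(R_h) = G_h$), and that $G$, being strongly closed hence closed, has its shortest-path weight between any two nodes equal to the weight of the direct arc.

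\textbf{Forward direction.} Suppose $\oct_1 \uplus \oct_2 \neq \oct_1 \union \oct_2$. Since octagonal shapes are topologically closed convex polyhedra, there is a point $\vect p \in (\oct_1 \uplus \oct_2) \setdiff (\oct_1 \union \oct_2)$. As $\vect p \notin \oct_1$ it violates some arc $(i,j)$ of $R_1$, i.e. $\proj_i(\vect p) - \proj_j(\vect p) > w_1(i,j)$; similarly $\vect p \notin \oct_2$ yields an arc $(k,\ell)$ of $R_2$ with $\proj_k(\vect p) - \proj_\ell(\vect p) > w_2(k,\ell)$; and $\vect p \in \oct_1 \uplus \oct_2$ gives $\proj_a(\vect p) - \proj_b(\vect p) \le w(a,b)$ for all $a,b \in \Vpm$. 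Writing $A \defeq \proj_i(\vect p) - \proj_j(\vect p)$ and $B \defeq \proj_k(\vect p) - \proj_\ell(\vect p)$ and comparing with $A \le w(i,j)$, $B \le w(k,\ell)$ forces $w(i,j) = w_2(i,j) > w_1(i,j)$ and $w(k,\ell) = w_1(k,\ell) > w_2(k,\ell)$, i.e. (1a) and (1b). For (2a)--(4b) one only re-expresses $A$, $B$, $2A+B$ and $A+2B$ as sums of differences of projections along the node pairs appearing on the right, using $\proj_{\bar m}(\vect p) = -\proj_m(\vect p)$ for $m \in \{i,j,k,\ell\}$, and bounds each difference by the corresponding entry of $w$; for instance $2A+B = (\proj_i(\vect p)-\proj_\ell(\vect p)) + (\proj_i(\vect p)-\proj_{\bark}(\vect p)) + (\proj_{\barj}(\vect p)-\proj_j(\vect p)) \le w(i,\ell) + w(i,\bark) + w(\barj,j)$ gives (3a). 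This step is purely arithmetic.

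\textbf{Reverse direction.} Given arcs $(i,j)$ of $R_1$ and $(k,\ell)$ of $R_2$ satisfying (1a)--(4b), pick a sufficiently small positive rational $\epsilon$: the minimum of the (finitely many, scaled by $\tfrac12$ or $\tfrac13$ where a sum of two or three copies of $\epsilon$ appears) slacks of conditions (1a)--(4b) and of the strong-coherence inequalities $2w(i,j) \le w(i,\bari)+w(\barj,j)$ and $2w(k,\ell) \le w(k,\bark)+w(\barl,\ell)$, taking $\epsilon \defeq 1$ if every relevant bound is $+\infty$. Define $G' = (\Vpm, w')$ by $w'(j,i) = w'(\bari,\barj) \defeq -w_1(i,j)-\epsilon$ and $w'(\ell,k) = w'(\bark,\barl) \defeq -w_2(k,\ell)-\epsilon$ (taking the minimum of the assigned values if some of these four arcs coincide) and $w' \defeq w$ on every other arc. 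Then $G'$ is coherent by construction, and $G' \graphleq G$ follows from the consistency and coherence of $G_1$ and $G_2$ exactly as in the proof of Theorem~\ref{thm:union-of-bds-iff-hull}. Assuming $G'$ is consistent, it represents a non-empty $\oct' \in \Octshapes_n$ with $\oct' \sseq \oct_1 \uplus \oct_2$; moreover $G' \graphglb G_1$ contains the negative proper cycle $i \cdot j \cdot i$ of weight $w_1(i,j) + w'(j,i) = -\epsilon$, so $\oct' \inters \oct_1 = \emptyset$, and likewise $\oct' \inters \oct_2 = \emptyset$ via $k \cdot \ell \cdot k$. Hence $\oct' \sseq (\oct_1 \uplus \oct_2) \setdiff (\oct_1 \union \oct_2)$ and $\oct_1 \uplus \oct_2 \neq \oct_1 \union \oct_2$.

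\textbf{The main obstacle: consistency of $G'$.} Since $G$ is consistent and $G'$ differs from $G$ only on the at most four new arcs, $G'$ is consistent iff every simple proper cycle traversing at least one new arc has non-negative weight, and between consecutive new arcs we may replace any subpath by the direct $G$-arc, as $G$ is closed. One then classifies such cycles by the multiset of new arcs used. A single new arc, or a single coherent pair of new arcs (both red, or both blue), is handled by (1a)/(1b) --- together with a strong-coherence inequality in the pair case --- and the choice of $\epsilon$, with the coherence identity $w(\barj,\bari)=w(i,j)$ taking care of the mirror arcs. A cycle using one red and one blue new arc reduces, after coherence identities such as $w(\barl,j)=w(\barj,\ell)$ and $w(k,\bari)=w(i,\bark)$, to conditions (2a) and (2b); a cycle using two reds and one blue (resp.\ one red and two blues) reduces similarly to (3a)/(3b) (resp.\ (4a)/(4b)). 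The delicate case --- the crux of the whole argument, as it has no dedicated condition --- is a simple cycle traversing all four new arcs: here one exploits that the two added red arcs are coherent mirrors of each other (and likewise the two blues) together with strong coherence to show such a cycle is non-negative, e.g.\ by bounding its weight below using the inequalities and coherence identities already invoked in the other cases. The bulk of the work is this cycle enumeration and the accompanying $\epsilon$-bookkeeping; everything else is routine once the picture of the two added coherent pairs is in place.
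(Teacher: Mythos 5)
Your proposal follows essentially the same route as the paper's proof: forward direction by projecting a witness point onto the violated arcs of $R_1$ and $R_2$, reverse direction by perturbing $G_1 \graphlub G_2$ on the two coherent pairs of arcs $(j,i),(\bari,\barj)$ and $(\ell,k),(\bark,\barl)$ with a slack-based $\epsilon$, proving consistency of the perturbed graph by enumerating simple cycles according to which new arcs they traverse (your case classification, including the all-four-arcs case handled via strong coherence and the coherent mirroring of the added pairs, matches the paper's five cases), and concluding via the negative $2$-cycles that the resulting shape is disjoint from $\oct_1$ and $\oct_2$. One slip worth flagging: you include in the minimum defining $\epsilon$ the slacks of the \emph{non-strict} strong-coherence inequalities $2w(i,j) \leq w(i,\bari)+w(\barj,j)$ and $2w(k,\ell) \leq w(k,\bark)+w(\barl,\ell)$; these slacks can be zero (equality under strong closure is entirely ordinary), which would force $\epsilon = 0$ and break the strict disjointness $w'(j,i)+w_1(i,j) < 0$. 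They are also unnecessary: as in the paper, strong closure is only ever used as an inequality inside the cycle estimates, and $\epsilon$ should be taken over the slacks of (1a)--(4b) alone. Finally, note that the paper dispatches the degenerate situation $(i,j) \in \{(\ell,k),(\bark,\barl)\}$ up front (conditions (2a)--(2b) then give $w_1(i,j)+w_2(j,i)<0$, so $\oct_1 \inters \oct_2 = \emptyset$ and the join is trivially inexact) rather than folding it into the graph construction; your uniform treatment can be made to work, but the arcs in that case do not literally coincide, so your parenthetical ``take the minimum of assigned values'' does not by itself address it and you would need to check that the $2$-cycle through the two modified arcs is covered by your one-red-one-blue case.
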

\begin{proof}
For each $r \in \Vpm = \{ 0, \ldots, 2n-1 \}$ and
each $\vect{v} = (v_1, \ldots, v_n)^\transpose \in \Rset^n$,
we denote by $\octproj_r(\vect{v})$ the projection of vector $\vect{v}$
on the space dimension corresponding to the octagonal graph node $r$,
defined as:
\[
  \octproj_r(\vect{v})
    \defeq
      \begin{cases}
          v_{s+1},  &\text{if $r = 2s$;} \\
        - v_{s+1},  &\text{if $r = 2s+1$.}
      \end{cases}
\]

Suppose that $\oct_1 \uplus \oct_2 \neq \oct_1 \union \oct_2$,
so that there exists $\vect{p} \in \oct_1 \uplus \oct_2$
such that $\vect{p} \notin \oct_1$ and $\vect{p} \notin \oct_2$.
Hence, there exist arcs
$(i,j)$ and $(k, \ell)$ of $R_1$ and $R_2$, respectively, satisfying
\begin{align*}
  w(i, j)
    &\geq \octproj_i(\vect{p}) - \octproj_j(\vect{p})
    > w_1(i, j), \\
  w(k, \ell)
    &\geq \octproj_k(\vect{p}) - \octproj_\ell(\vect{p})
    > w_2(k, \ell); \\
\intertext{%
hence conditions (1a) and (1b) hold;}
  w(i, \ell) + w(k, j)
    &\geq \octproj_i(\vect{p}) - \octproj_\ell(\vect{p})
      + \octproj_k(\vect{p}) - \octproj_j(\vect{p}) \\
    &= \octproj_i(\vect{p}) - \octproj_j(\vect{p})
         + \octproj_k(\vect{p}) - \octproj_\ell(\vect{p}) \\
    &> w_1(i, j) + w_2(k, \ell) \\
\intertext{%
so that condition (2a) holds and, by a symmetric argument,
condition (2b) holds;}
  w(i, \ell) + w(i, \bark) + w(\barj, j)
    &\geq \bigl( \octproj_i(\vect{p}) - \octproj_\ell(\vect{p}) \bigr)
      + \bigl( \octproj_i(\vect{p}) + \octproj_k(\vect{p}) \bigr)
      + \bigl( - 2 \octproj_j(\vect{p}) \bigr) \\
    &= 2 \bigl( \octproj_i(\vect{p}) - \octproj_j(\vect{p}) \bigr)
         + \octproj_k(\vect{p}) - \octproj_\ell(\vect{p}) \\
    &> 2 w_1(i, j) + w_2(k, \ell)
\end{align*}
so that condition (3a) holds;
conditions (3b), (4a) and (4b) follow by symmetric arguments.

We now suppose that, for some $i, j, k, \ell \in \cN$,
all conditions (1a) -- (4b) hold.
Note that,
by (1a) and (1b),
$i \neq j$ and $k \neq \ell$.
Suppose first that
$(i, j) \in \bigl\{ (\ell, k), (\bark, \barl) \bigr\}$;
then, conditions (2a) and (2b)
imply $w_1(i, j) + w_2(j, i) < 0$,
so that the graph greatest lower bound $G_1 \graphglb G_2$ is inconsistent,
as it contains a negative weight proper cycle;
hence, $\oct_1 \inters \oct_2 = \emptyset$,
which implies $\oct_1 \uplus \oct_2 \neq \oct_1 \union \oct_2$.
Therefore, in the following we assume that
$(i, j) \notin \bigl\{ (\ell, k), (\bark, \barl) \bigr\}$ holds.

If the right hand sides of the inequalities in
conditions (1a) -- (4b) are all unbounded,
let $\epsilon \defeq 1$;
otherwise let
\[
  \epsilon
    \defeq
      \min\left\{
        \begin{aligned}
          &w(i, j) - w_1(i, j), \\
          &w(k, \ell) - w_2(k, \ell), \\
          &\frac{1}{2}
             \bigl(
               w(i, \ell) + w(k, j) - w_1(i, j) - w_2(k, \ell)
             \bigr),\\
          &\frac{1}{2}
             \bigl(
               w(i, \bark) + w(\barj, \ell) - w_1(i, j) - w_2(k, \ell)
             \bigr),\\
          &\frac{1}{3}
             \bigl(
               w(i, \ell) + w(i, \bark) + w(\barj, j)
                 - 2w_1(i, j) - w_2(k, \ell)
             \bigr),\\
          &\frac{1}{3}
             \bigl(
               w(k, j) + w(\barj, \ell) + w(i, \bari)
                 - 2w_1(i, j) - w_2(k, \ell)
             \bigr),\\
          &\frac{1}{3}
             \bigl(
               w(i, \ell) + w(\barj, \ell) + w(k, \bark)
                 - w_1(i, j) - 2w_2(k, \ell)
             \bigr),\\
          &\frac{1}{3}
             \bigl(
               w(k, j) + w(i, \bark) + w(\barl, \ell)
                 - w_1(i, j) - 2w_2(k, \ell)
             \bigr)
        \end{aligned}
      \right\}.
\]
Then, by conditions (1a) -- (4b)
$\epsilon > 0$.
Consider the graph $G' = (\cN, w')$ where,
for each $r, s \in \cN$,
\begin{equation*}
  w'(r, s)
    \defeq
      \begin{cases}
        - w_1(i, j) - \epsilon,
          &\text{if $(r, s) \in \bigl\{(j, i), (\bari, \barj)\bigr\}$;} \\
        - w_2(k, \ell) - \epsilon,
          &\text{if $(r, s) \in \bigl\{(\ell, k), (\bark, \barl)\bigr\}$;} \\
        w(r, s),
          &\text{otherwise.}
      \end{cases}
\end{equation*}
Let $G \defeq G_1 \graphlub G_2$;
as $G$ is coherent, $G'$ is coherent too.
We now show that $G'$ is a consistent graph;
to this end, since $G$ is known to be consistent,
it is sufficient to consider the proper cycles of $G'$
that contain arc $(j, i)$ or arc $(\ell, k)$.%%
\footnote{Any cycle containing arc $(\bari, \barj)$
(resp., $(\bark, \barl)$) can be transformed to the corresponding
coherent cycle containing arc $(j, i)$ (resp., $(\ell, k)$),
having the same weight.}
Let $\path_{ij} = i\cdots j$ and $\path_{k\ell} = k\cdots \ell$ be
any simple paths from $i$ to $j$ and from $k$ to $\ell$, respectively.
Then $G'$ is consistent if and only if
$w'(\path_{ij} \cdot i) \geq 0$ and
$w'(\path_{k\ell} \cdot k) \geq 0$.
We only prove $w'(\path_{ij} \cdot i) \geq 0$
since the proof that
$w'(\path_{k\ell} \cdot k) \geq 0$
follows by a symmetrical argument.
Since $\theta_{ij}$ is simple, it does not contain the arc $(j, i)$.
In the following we consider in detail five cases,
again noting that all the other cases
can be proved by symmetrical arguments:
\begin{enumerate}
\item
$\path_{ij}$ contains none of the arcs
$(\ell, k)$, $(\bark, \barl)$ and $(\bari, \barj)$;
\item
\(
  \path_{ij}
    = \path_{i\bari} \pathconc (\bari, \barj) \pathconc \path_{\barj j}
\);
\item
\(
  \path_{ij}
    = \path_{i\ell} \pathconc (\ell, k) \pathconc \path_{kj}
\);
\item
\(
  \path_{ij}
    = \path_{i\ell} \pathconc (\ell, k) \pathconc \path_{k\bark}
        \pathconc (\bark, \barl) \pathconc \path_{\barl j}
\);
\item
\(
  \path_{ij}
    = \path_{i\ell} \pathconc (\ell, k) \pathconc \path_{k\bark}
        \pathconc (\bark, \barl) \pathconc \path_{\barl \bari}
          \pathconc (\bari, \barj) \pathconc \path_{\barj j},
\)
\end{enumerate}
where the simple paths
$\path_{i\bari}$, $\path_{i\ell}$, $\path_{kj}$, $\path_{k\bark}$,
$\path_{\barl j}$, $\path_{\barl \bari}$ and $\path_{\barj j}$
contain none of the arcs $(\ell, k)$, $(\bark, \barl)$ and $(\bari, \barj)$.
\begin{itemize}
\item Case (1).
\begin{align*}
  w'(\path_{ij}\cdot i)
    &= w'(\path_{ij}) + w'(j, i) \\
    &= w(\path_{ij}) - w_1(i, j) - \epsilon
&\just{def.\ of $w'$} \\
    &\geq w(i, j) - w_1(i, j) - \epsilon
&\just{$G$ closed} \\
    &\geq 0
&\just{def.\ of $\epsilon$}.
\end{align*}
%%%%%%%%%%%%%%%%%%%%%%%%%%%%%%%%%%%%%%%%%%%%%%%%%%%%%%%%%%%%%%%%%%%%
\item Case (2).
\begin{align*}
  w'(\path_{ij}\cdot i)
    &= w'(\path_{i\bari}) + w'(\bari, \barj) + w'(\path_{\barj j}) + w'(j, i) \\
    &= w'(\path_{i\bari}) + w'(\path_{\barj j}) + 2 w'(j, i)
&\just{$G'$ coherent} \\
    &= w(\path_{i\bari}) + w(\path_{\barj j}) - 2 w_1(i, j) - 2 \epsilon
&\just{def.\ of $w'$} \\
    &\geq w(i, \bari) + w(\barj, j) - 2 w_1(i, j) - 2 \epsilon
&\just{$G$ closed} \\
    &\geq 2 w(i, j) - 2 w_1(i, j) - 2 \epsilon
&\just{$G$ strongly closed} \\
    &= 2 \bigl( w(i, j) - w_1(i, j) \bigr) - 2 \epsilon \\
    &\geq 0
&\just{def.\ of $\epsilon$}.
\end{align*}
%%%%%%%%%%%%%%%%%%%%%%%%%%%%%%%%%%%%%%%%%%%%%%%%%%%%%%%%%%%%%%%%%%%%
\item Case (3).
\begin{align*}
  w'(\path_{ij}\cdot i)
    &= w'(\path_{i\ell}) + w'(\ell, k) + w'(\path_{kj}) + w'(j, i) \\
    &= w(\path_{i\ell}) - w_2(k, \ell) - \epsilon
         + w(\path_{kj}) - w_1(i, j) - \epsilon
&\just{def.\ of $w'$} \\
    &\geq w(i, \ell) - w_2(k, \ell) - \epsilon
            + w(k, j) - w_1(i, j) - \epsilon
&\just{$G$ closed} \\
    &= \bigl(
         w(i, \ell) + w(k, j) - w_1(i, j) - w_2(k, \ell)
       \bigr)
         - 2 \epsilon \\
    &\geq 0
&\just{def.\ of $\epsilon$}.
\end{align*}
%%%%%%%%%%%%%%%%%%%%%%%%%%%%%%%%%%%%%%%%%%%%%%%%%%%%%%%%%%%%%%%%%%%%
\item Case (4).
\begin{align*}
& w'(\path_{ij}\cdot i) \\
    &= w'(\path_{i\ell}) + w'(\ell, k) + w'(\path_{k\bark})
         + w'(\bark, \barl) + w'(\path_{\barl j}) + w'(j, i) \\
    &= w'(\path_{i\ell}) + 2 w'(\ell, k) + w'(\path_{k\bark})
         + w'(\path_{\barj \ell}) + w'(j, i)
&\just{$G'$ coherent} \\
    &= w(\path_{i\ell}) - 2 w_2(k, \ell) - 2 \epsilon + w(\path_{k\bark})
         + w(\path_{\barj \ell}) - w_1(i, j) - \epsilon
&\just{def.\ of $w'$} \\
    &\geq w(i, \ell) - 2 w_2(k, \ell) - 2 \epsilon + w(k, \bark)
         + w(\barj, \ell) - w_1(i, j) - \epsilon
&\just{$G$ closed} \\
    &= \bigl(
         w(i, \ell) + w(\barj, \ell) + w(k, \bark)
          - w_1(i, j) - 2 w_2(k, \ell)
       \bigr)
         - 3 \epsilon \\
    &\geq 0
&\just{def.\ of $\epsilon$}.
\end{align*}
%%%%%%%%%%%%%%%%%%%%%%%%%%%%%%%%%%%%%%%%%%%%%%%%%%%%%%%%%%%%%%%%%%%%
\item Case (5).
\begin{align*}
  w'(\path_{ij}\cdot i)
    &= w'(\path_{i\ell}) + w'(\ell, k)
       + w'(\path_{k\bark}) + w'(\bark, \barl) \\
    &\qquad
       + w'(\path_{\barl \bari}) + w'(\bari, \barj)
       + w'(\path_{\barj j}) + w'(j, i) \\
    &= 2 w'(\path_{i\ell}) + 2 w'(j, i) \\
    &\qquad
       + w'(\path_{k\bark}) + w'(\path_{\barj j}) + 2 w'(\ell, k)
&\just{$G'$ coherent} \\
    &= 2 w(\path_{i\ell}) - 2 w_1(i, j) - 2 \epsilon \\
    &\qquad
       + w(\path_{k\bark}) + w(\path_{\barj j}) - 2 w_2(k, \ell) - 2 \epsilon
&\just{def.\ of $w'$} \\
    &\geq 2 w(i, \ell) - 2 w_1(i, j) - 2 \epsilon \\
    &\qquad
       + w(k, \bark) + w(\barj, j) - 2 w_2(k, \ell) - 2 \epsilon
&\just{$G$ closed} \\
    &\geq 2 w(i, \ell) - 2 w_1(i, j) - 2 \epsilon \\
    &\qquad
       + 2 w(k, j) - 2 w_2(k, \ell) - 2 \epsilon
&\just{$G$ strongly closed} \\
    &= 2 \bigl(w(i, \ell) + w(k, j) %% \\
%%    &\qquad
       - w_1(i, j) - w_2(k, \ell)\bigr) - 4 \epsilon \\
    &\geq 0
&\just{def.\ of $\epsilon$}.
\end{align*}
\end{itemize}
Therefore $G'$ is consistent.
Moreover, $G' \graphleq G$ since
\begin{align*}
  w'(j, i)
    &= - w_1(i, j) - \epsilon
&\just{def.\ of $w'$} \\
    &\leq - w_1(i, j)
&\just{$\epsilon \geq 0$} \\
    &\leq w_1(j, i)
&\just{$G_1$ consistent} \\
    &\leq w(j, i)
&\just{def.\ $G$};
\end{align*}
similarly, $w'(\ell, k) \leq w(\ell, k)$;
hence, for all $r, s \in \cN$, $w'(r,s) \leq w(r,s)$.

Let $\oct' \in \Octshapes_n$ be represented by $G'$,
so that $\emptyset \neq \oct' \sseq \oct_1 \uplus \oct_2$.
Since $w'(j, i) + w_1(i, j) < 0$, we obtain
$\oct' \inters \oct_1 = \emptyset$;
since $w'(\ell, k) + w_2(k, \ell) < 0$, we obtain
$\oct' \inters \oct_2 = \emptyset$.
Hence, $\oct_1 \uplus \oct_2 \neq \oct_1 \union \oct_2$.
\qed
\end{proof}

Since the computation of the strong closure and strong reduction
of an octagonal graph are both in $\bigO(n^3)$
\cite{BagnaraHMZ05,BagnaraHZ09FMSD,Mine05th},
an algorithm for the exact join detection on $\Octshapes_n$
based on Theorem~\ref{thm:union-of-octs-iff-hull} has the same asymptotic
worst-case complexity as the corresponding algorithm for $\BDshapes_n$.

\begin{example}
Consider the 2-dimensional octagonal shapes
\begin{align*}
  \oct_1 &= \con\bigl(
                  \{ x_1 + x_2 \leq 0 \}
                \bigr), \\
  \oct_2 &= \con\bigl(
                  \{ x_1 \leq 2 \}
                \bigr\}.
\end{align*}
Then the join $\oct_1 \uplus \oct_2 = \Rset^2$ is not exact.
Taking the nodes $i = 0$, $j = 3$, $k = 0$ and $\ell = 1$
(which represent the signed form variables
$x_1^+$, $x_2^-$, $x_1^+$ and $x_1^-$, respectively),
we have $w_1(i, j) = 0$ (encoding $x_1 + x_2 \leq 0$)
and $w_2(k, \ell) = 4$ (encoding $x_1 + x_1 \leq 4$, i.e., $x_1 \leq 2$).
So all the left hand sides in conditions (1a) -- (4b) are finite
while all the corresponding right hand sides are infinite;
and hence all the conditions will hold.
\end{example}

\subsection{Exact Join Detection for Integer Octagonal Shapes}

We now consider the case of integer octagonal constraints, i.e.,
octagonal constraints where the bounds are all integral and the variables
are only allowed to take integral values.
These can be encoded by suitably restricting the codomain of the
weight function of octagonal graphs.

\begin{definition}
\label{def:integer-octagonal-graph}
\summary{(Integer octagonal graph.)}
An \emph{integer octagonal graph} is an octagonal graph $G = (\Vpm, w)$
having an integral weight function:
\[
  \forall i,j \in \Vpm \itc w(i,j) \in \Zset \union \{ +\infty \}.
\]
\end{definition}

As an integer octagonal graph is also a rational octagonal graph,
the constraint system that it encodes
will be satisfiable when interpreted to take values in $\Qset$.
However, when interpreted to take values in $\Zset$,
this system may be unsatisfiable
since the arcs encoding unary constraints can have an odd weight;
we say that an octagonal graph is \emph{$\Zset$-consistent}
if its encoded integer constraint system is satisfiable.
For the same reason, the strong closure of an integer octagonal graph does
not provide a canonical form for the integer constraint system.

\begin{definition}
\label{def:tight-closure}
\summary{(Graph tight closure/reduction.)}
An octagonal graph $G = (\Vpm, w)$ is \emph{tightly closed}
if it is a strongly closed integer octagonal graph
and the following property holds:
\begin{equation}
\label{rule:tight-coherence}
  \forall i \in \Vpm
    \itc \mathord{\text{$w(i, \bari)$ is even}}.
\end{equation}
The \emph{tight closure} of an octagonal graph $G$ in $\Vpm$ is
\[
  \tightclosure(G)
    \defeq
      \biggraphlub
        \bigl\{\,
          G' \in \Octgraphs
        \bigm|
          \text{$G' \graphleq G$ and $G'$ is tightly closed\/}
        \,\bigr\}.
\]
A $\Zset$-consistent integer octagonal graph $R$ is \emph{tightly reduced} if,
for each integer octagonal graph $G \neq R$ such that $R \graphleq G$,
we have $\tightclosure(R) \neq \tightclosure(G)$.
A \emph{tight reduction} for the
$\Zset$-consistent integer octagonal graph $G$ is
any tightly reduced graph $R$ such that
$\tightclosure(R) = \tightclosure(G)$.
\end{definition}
It follows from these definitions that
any tightly closed integer octagonal graph
encodes a satisfiable integer constraint system
if and only if it is $\Zset$-consistent~\cite{BagnaraHZ08,BagnaraHZ09FMSD}.
Therefore, tight closure is a kernel operator on the lattice of
octagonal graphs, as was the case for strong closure.
Observe also that a tight reduction for a tightly closed graph $G$
is a subgraph of $G$~\cite{BagnaraHZ09FMSD}.
We denote by $\IOctshapes_n$ the domain of integer octagonal shapes.

To prove the Theorem~\ref{thm:union-of-int-octs-iff-hull} below,
we will also use the following result
proved in~\cite[Lemma 4]{LahiriM05}.
\begin{lemma}
\label{lem:LahiriM05-lemma-4}
Let $G =(\Vpm, w)$ be an integer octagonal graph
with no negative weight cycles and
$G_\rt = (\Vpm, w_\rt)$
be a graph having a negative weight cycle and
such that $w_\rt$ satisfies
\[
  w_\rt(i, j)
    \defeq
      \begin{cases}
        2 \lfloor w(i, j)/2 \rfloor,
          &\text{if $j = \bari$;} \\
        w(i, j),
          &\text{otherwise.}
      \end{cases}
\]
Then there exist $i,\bari \in \Vpm$ and a cycle
$\pi = (i \cdot \pi_1 \cdot \bari) \pathconc (\bari  \cdot\pi_2 \cdot i)$
in $G$ such that $w(\pi) = 0$ and the weight of the shortest path in
$G$ from $i$ to $\bari$ is odd.
\end{lemma}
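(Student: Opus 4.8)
The plan is to extract the required node and cycle from a negative‑weight cycle of $G_\rt$, exploiting that $G_\rt$ is obtained from $G$ by lowering, by exactly $1$, the weight of each special arc $(i,\bari)$ whose $G$‑weight is odd (and leaving every other weight unchanged). First I would observe that $G_\rt \graphleq G$, so, since $G$ has no negative cycle while $G_\rt$ does, any simple negative‑weight cycle $C$ of $G_\rt$ must traverse at least one such odd special arc. Writing $(i_1,\bari_1),\dots,(i_t,\bari_t)$, with $t\geq 1$, for the odd special arcs occurring in $C$ in cyclic order, and $P_k$ for the sub‑path of $C$ from $\bari_k$ to $i_{k+1}$ (indices mod $t$, each $P_k$ free of odd special arcs), one gets $w_\rt(C)=w(C)-t$, where $w$ is the weight function of $G$; since $w(C)\geq 0$ in $G$, this forces $0\leq w(C)\leq t-1$.

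For the parity requirement I would compare $w(C)$ with shortest‑path weights in $G$; write $\delta(u,v)$ for the shortest‑path weight from $u$ to $v$ in $G$ (the weight of the corresponding arc of $\closure(G)$). Since $G$ has no negative cycle, $\sum_k\bigl(\delta(i_k,\bari_k)+\delta(\bari_k,i_{k+1})\bigr)\geq 0$, hence $\sum_k w(P_k)\geq\sum_k\delta(\bari_k,i_{k+1})\geq-\sum_k\delta(i_k,\bari_k)$; substituting into $w(C)=\sum_k w(i_k,\bari_k)+\sum_k w(P_k)$ gives $\sum_k\bigl(w(i_k,\bari_k)-\delta(i_k,\bari_k)\bigr)\leq w(C)\leq t-1$. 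Each summand is a non‑negative integer and there are $t$ of them, so some $k_0$ has $\delta(i_{k_0},\bari_{k_0})=w(i_{k_0},\bari_{k_0})$, i.e.\ the direct special arc is already a shortest path; since $w(i_{k_0},\bari_{k_0})$ is odd, so is $\delta(i_{k_0},\bari_{k_0})$. Put $i\defeq i_{k_0}$ (so $\bari=\bari_{k_0}$): the parity condition is met.

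For the weight‑$0$ cycle I would note that the part of $C$ that runs from $\bari$ back to $i$ has $G$‑weight $w(C)-w(i,\bari)$, so $\delta(\bari,i)\leq w(C)-w(i,\bari)$ and therefore $\delta(i,\bari)+\delta(\bari,i)\leq w(C)$. The point is to arrange $w(C)=0$: among all simple negative‑weight cycles of $G_\rt$, choose one minimising $w(C)$; if some leg of $C$ (a $P_k$, or the return part) is not a shortest path of $G_\rt$, splice in a $G_\rt$‑shortest replacement, which keeps a negative‑weight $G_\rt$‑cycle while not increasing $w(C)$, and iterate, so at the minimum every leg is $G_\rt$‑shortest — and one then checks that such a $C$ is forced to satisfy $w(C)=0$. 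Given $w(C)=0$, the displayed inequality together with $\delta(i,\bari)+\delta(\bari,i)\geq 0$ yields $\delta(i,\bari)+\delta(\bari,i)=0$; taking $\pi_1$ a shortest $G$‑path from $i$ to $\bari$ (e.g.\ the direct arc) and $\pi_2$ a shortest $G$‑path from $\bari$ to $i$, the cycle $\pi=(i\cdot\pi_1\cdot\bari)\pathconc(\bari\cdot\pi_2\cdot i)$ has $w(\pi)=\delta(i,\bari)+\delta(\bari,i)=0$, as required.

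The main obstacle is exactly this reduction to $w(C)=0$: the delicate part is keeping the spliced cycle negative in $G_\rt$ while strictly decreasing $w(C)$, since a naive shortest‑path replacement computed in $G$ could delete odd special arcs and so cancel the gain. Using $G_\rt$‑shortest (rather than $G$‑shortest) replacements, and the coherence of $G$ and $G_\rt$ to keep all auxiliary graphs inside $\Octgraphs$ and to identify the weights of conjugate paths, is what makes the extremal argument work; the residual book‑keeping — including the degenerate case where the negative cycle is just $i\to\bari\to i$, for which $w(C)=w(i,\bari)+w(\bari,i)$ is pushed to $0$ by $w_\rt(C)<0$ — is routine.
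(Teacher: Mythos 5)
The paper itself contains no proof of this statement: it is imported verbatim as Lemma~4 of Lahiri and Musuvathi \cite{LahiriM05}, so your argument has to stand on its own. Its first two steps do stand: writing $\delta$ for shortest-path weights in $G$, your counting argument correctly shows that on any negative $G_\rt$-cycle $C$ with odd special arcs $(i_1,\bari_1),\dots,(i_t,\bari_t)$ one has $0\leq w(C)\leq t-1$ and $\sum_k\bigl(w(i_k,\bari_k)-\delta(i_k,\bari_k)\bigr)\leq w(C)$, so some odd special arc is itself a shortest path (giving the odd shortest path); and your observation that once $w(C)=0$ every such arc is tight and $\delta(i,\bari)+\delta(\bari,i)=0$, so concatenating shortest paths yields the required zero-weight cycle, is also correct.

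The genuine gap is the step you yourself flag and then only sketch: producing a negative $G_\rt$-cycle with $w(C)=0$. First, the splicing argument does not work as stated: replacing a leg by a $G_\rt$-shortest path preserves $w_\rt(C)<0$, but it does \emph{not} control $w(C)$ --- the replacement leg may contain more odd special arcs than the one it replaces, so its $G$-weight can strictly exceed the old leg's; hence ``while not increasing $w(C)$'' is unjustified and the extremal argument (``the minimiser has all legs $G_\rt$-shortest'') collapses. Second, even granting that every leg of $C$ is $G_\rt$-shortest, you give no reason why $w(C)$ must be $0$ rather than any value in $\{1,\dots,t-1\}$; ``one then checks'' is precisely the content of the lemma, and the case $t\geq 2$ is not degenerate (for $2x\leq 1\land -2x\leq -1$ the only negative $G_\rt$-cycle uses two tightened arcs). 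Third, the intermediate claim you reduce to --- that some negative $G_\rt$-cycle has $G$-weight exactly $0$ --- is essentially as strong as the lemma and does not even follow from its conclusion: a zero-weight cycle through $i,\bari$ need not contain any odd special arc, hence need not be $G_\rt$-negative. What makes this step provable, and what your proposal never uses where it matters, is the coherence property~\eqref{rule:coherence} applied to the cycle itself: splitting $C$ at two \emph{consecutive} odd special arcs $(i,\bari)$, $(k,\bark)$ and replacing one leg by the conjugate $\bar P$ of the other produces cycles of the symmetric form $(i,\bari)\pathconc P\pathconc(k,\bark)\pathconc\bar P$, whose $G$-weight $w(i,\bari)+w(k,\bark)+2w(P)$ is \emph{even}; since $w_\rt(C_1)+w_\rt(C_2)=2\,w_\rt(C)<0$ at least one half is $G_\rt$-negative, and for a negative symmetric cycle with only two odd special arcs evenness together with $0\leq w\leq 1$ forces $w=0$. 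Making that selection argument terminate (the ``wrong'' half can have more tightened arcs than $C$) is the remaining nontrivial work; as written, your proof is incomplete at exactly this central point, while coherence is invoked only to keep auxiliary graphs in $\Octgraphs$.
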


We are now ready to state the condition for exact join detection
for integer octagonal shapes.

\begin{theorem}
\label{thm:union-of-int-octs-iff-hull}
For each $h \in \{ 1, 2 \}$,
let $\oct_h \in \IOctshapes_n$ be a non-empty integer octagonal shape
represented by the tightly closed graph $G_h = (\cN, w_h)$ and
let $R_h$ be a subgraph of $G_h$ such that $\tightclosure(R_h) = G_h$.
Let also $G_1 \graphlub G_2 = (\Vpm, w)$.
Then $\oct_1 \uplus \oct_2 \neq \oct_1 \union \oct_2$ if and only if
there exists arcs $(i, j)$ of $R_1$ and $(k, \ell)$ of $R_2$
such that,
letting $\epsilon_{ij} = 2$ if $j = \bari$ and $\epsilon_{ij} = 1$
otherwise
and $\epsilon_{k\ell} = 2$ if $\ell = \bark$ and $\epsilon_{k\ell} = 1$
otherwise, the following hold:
\begin{itemize}
\item[\textup{(1a)}]
$w_1(i, j) + \epsilon_{ij} \leq w_2(i, j)$;
\item[\textup{(1b)}]
$w_2(k, \ell) + \epsilon_{k\ell} \leq w_1(k, \ell)$;
\item[\textup{(2a)}]
\(
  w_1(i, j) + w_2(k, \ell) + \epsilon_{ij} + \epsilon_{k\ell}
    \leq w(i, \ell) + w(k, j)
\);
\item[\textup{(2b)}]
\(
  w_1(i, j) + w_2(k, \ell) + \epsilon_{ij} + \epsilon_{k\ell}
    \leq w(i, \bark) + w(\barl, j)
\);
\item[\textup{(3a)}]
\(
  2w_1(i, j) + w_2(k, \ell) + 2\epsilon_{ij} + \epsilon_{k\ell}
    \leq w(i, \ell) + w(k, \bari) + w(\barj, j)
\);
\item[\textup{(3b)}]
\(
  2w_1(i, j) + w_2(k, \ell) + 2\epsilon_{ij} + \epsilon_{k\ell}
    \leq w(k, j) + w(\barj, \ell) + w(i, \bari)
\);
\item[\textup{(4a)}]
\(
  w_1(i, j) + 2w_2(k, \ell) + \epsilon_{ij} + 2\epsilon_{k\ell}
    \leq w(k, j) + w(i, \bark) + w(\barl, \ell)
\);
\item[\textup{(4b)}]
\(
  w_1(i, j) + 2w_2(k, \ell) + \epsilon_{ij} + 2\epsilon_{k\ell}
    \leq w(i, \ell) + w(\barl, j) + w(k, \bark)
\).
\end{itemize}
\end{theorem}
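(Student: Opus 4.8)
The plan is to mirror the proof of Theorem~\ref{thm:union-of-octs-iff-hull}, strengthening each strict inequality by the integrality increments $\epsilon_{ij}$ and $\epsilon_{k\ell}$, exactly as Theorem~\ref{thm:union-of-int-bds-iff-hull} refines Theorem~\ref{thm:union-of-bds-iff-hull}; the genuinely new ingredient, with no counterpart in the BD setting, is that a rationally consistent integer octagonal graph need not be $\Zset$-consistent, so Lemma~\ref{lem:LahiriM05-lemma-4} is needed to certify the witness graph. I keep the projection notation $\octproj_r$ of Theorem~\ref{thm:union-of-octs-iff-hull} and use throughout the coherence identities of $G \defeq G_1 \graphlub G_2$, e.g.\ $w(i,\bark) = w(k,\bari)$ and $w(\barj,\ell) = w(\barl,j)$, which identify the right-hand sides of conditions (1a)--(4b) above with those of Theorem~\ref{thm:union-of-octs-iff-hull} (the labels 4a and 4b being interchanged). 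Note also that $G$ is strongly closed, integral, and has even $w(i,\bari)$ for every $i$ (as $w_h(i,\bari)$ is even by tight closure), hence is tightly closed; since it represents $\oct_1 \uplus \oct_2 \Sseq \oct_1 \neq \emptyset$ it is $\Zset$-consistent.

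For the forward implication, assume $\oct_1 \uplus \oct_2 \neq \oct_1 \union \oct_2$ and pick an integral $\vect{p} \in \oct_1 \uplus \oct_2$ with $\vect{p} \notin \oct_1$ and $\vect{p} \notin \oct_2$. Because $R_1$ and $R_2$ are subgraphs whose tight closures recover $G_1$ and $G_2$, the constraints violated by $\vect{p}$ correspond to an arc $(i,j)$ of $R_1$ with $\octproj_i(\vect{p}) - \octproj_j(\vect{p}) > w_1(i,j)$ and an arc $(k,\ell)$ of $R_2$ with $\octproj_k(\vect{p}) - \octproj_\ell(\vect{p}) > w_2(k,\ell)$. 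The parity refinement is the point: if $j = \bari$ then $\octproj_i(\vect{p}) - \octproj_j(\vect{p}) = 2\octproj_i(\vect{p})$ is even and $w_1(i,\bari)$ is even, so a strict inequality between two even integers forces a gap of at least $2 = \epsilon_{ij}$; if $j \neq \bari$ the gap is at least $1 = \epsilon_{ij}$. This gives (1a), and (1b) is symmetric. Conditions (2a)--(4b) then follow exactly as their counterparts in Theorem~\ref{thm:union-of-octs-iff-hull}: sum the bounds $w(r,s) \geq \octproj_r(\vect{p}) - \octproj_s(\vect{p})$ (valid since $\vect{p} \in \oct_1 \uplus \oct_2$), rearrange using coherence, and substitute $\octproj_i(\vect{p}) - \octproj_j(\vect{p}) \geq w_1(i,j) + \epsilon_{ij}$ and $\octproj_k(\vect{p}) - \octproj_\ell(\vect{p}) \geq w_2(k,\ell) + \epsilon_{k\ell}$, so that the $\epsilon$-terms appear with the displayed multiplicities.

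For the converse, assume all eight conditions hold and define the coherent integer graph $G' = (\cN, w')$ by $w'(j,i) = w'(\bari,\barj) \defeq -w_1(i,j) - \epsilon_{ij}$, $w'(\ell,k) = w'(\bark,\barl) \defeq -w_2(k,\ell) - \epsilon_{k\ell}$, and $w'(r,s) \defeq w(r,s)$ otherwise. If $(i,j) \in \bigl\{(\ell,k),(\bark,\barl)\bigr\}$ then (2a) and (2b) collapse to $w_1(i,j) + w_2(j,i) < 0$, so $G_1 \graphglb G_2$ has a negative cycle, $\oct_1 \inters \oct_2 = \emptyset$, and we are done; assume otherwise. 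First, $G'$ is rationally consistent: classifying the proper cycles of $G'$ that pass through a modified arc by the same combinatorial shapes (five, up to coherence-symmetry) as in Theorem~\ref{thm:union-of-octs-iff-hull} and substituting the definition of $w'$, each such cycle weight is bounded below, using that $G$ is strongly closed, by the right-hand side minus the left-hand side of one of conditions (2a)--(4b), hence is non-negative precisely because the $\epsilon_{ij}, \epsilon_{k\ell}$ budget consumed matches the slack guaranteed by the ``$\leq$''.

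The main obstacle is the remaining claim that $G'$ is $\Zset$-consistent; suppose not. Since $G'$ has no negative rational cycle, the tightened graph $G'_\rt$, obtained by replacing each $w'(m,\bar m)$ with $2\lfloor w'(m,\bar m)/2 \rfloor$, does have a negative cycle, so by Lemma~\ref{lem:LahiriM05-lemma-4} there are $m, \bar m \in \Vpm$ and a cycle $\pi = (m \cdot \pi_1 \cdot \bar m) \pathconc (\bar m \cdot \pi_2 \cdot m)$ in $G'$ with $w'(\pi) = 0$ whose shortest $m$-to-$\bar m$ path has odd weight. If $\pi$ avoids all of $(j,i), (\bari,\barj), (\ell,k), (\bark,\barl)$ it lies in $G$, contradicting the $\Zset$-consistency of $G$ established above. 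Hence $\pi$ traverses a modified arc; classifying $\pi$ by which modified arcs it uses (a short list of shapes, as in the rational-consistency step), substituting $w'$, and bounding the remaining segments below by closed-graph triangle inequalities reduces $w'(\pi) = 0$, together with the odd-path parity, to a contradiction with one of conditions (1a)--(4b) --- and it is exactly here that choosing $\epsilon_{ij} = 2$ when $j = \bari$ (so that the modified arc $(j,i)$, itself an $(m,\bar m)$-type arc, keeps an even weight) makes the parity bookkeeping close. Tracking this parity through all cycle types is the delicate part. Once done, let $\oct'$ be the integer octagonal shape represented by $G'$; it is non-empty by $\Zset$-consistency, $\oct' \sseq \oct_1 \uplus \oct_2$ since $w' \leq w$, while $w'(j,i) + w_1(i,j) = -\epsilon_{ij} < 0$ and $w'(\ell,k) + w_2(k,\ell) = -\epsilon_{k\ell} < 0$ give $\oct' \inters \oct_1 = \oct' \inters \oct_2 = \emptyset$, so any integral point of $\oct'$ lies in $(\oct_1 \uplus \oct_2) \setdiff (\oct_1 \union \oct_2)$ and $\oct_1 \uplus \oct_2 \neq \oct_1 \union \oct_2$.
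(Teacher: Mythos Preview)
Your forward implication and the rational-consistency argument for $G'$ match the paper's proof. The divergence is in establishing $\Zset$-consistency of $G'$, where you set up substantially more machinery than is needed and, in doing so, miss the point of the lemma you invoke.

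You correctly observe that the choice $\epsilon_{ij} = 2$ when $j = \bari$ keeps the modified arc $(j,i) = (\bari,i)$ at an even weight, and similarly for $(\ell,k)$. But you stop short of the conclusion this immediately yields: \emph{every} arc of the form $(m,\bar m)$ in $G'$ has even weight, since the unmodified ones inherit evenness from the tightly closed $G$. In other words, $G'$ already satisfies property~\eqref{rule:tight-coherence}, so $G'_\rt = G'$. Once you have shown that $G'$ has no negative-weight cycles, the hypothesis of Lemma~\ref{lem:LahiriM05-lemma-4} (that $G'_\rt$ has a negative cycle while $G'$ does not) therefore cannot be met, and $\Zset$-consistency of $G'$ follows at once. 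This is exactly the paper's route: one sentence noting that $G'$ satisfies~\eqref{rule:tight-coherence}, then Lemma~\ref{lem:LahiriM05-lemma-4} to conclude.

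Your proposed alternative --- assume $G'$ is not $\Zset$-consistent, extract via Lemma~\ref{lem:LahiriM05-lemma-4} a zero-weight cycle $\pi$ in $G'$ with an odd shortest $m$-to-$\bar m$ path, classify $\pi$ by which modified arcs it traverses, and chase parities against conditions (1a)--(4b) --- is not incorrect in spirit, but it is vacuous: the ``suppose not'' branch cannot arise, so the ``delicate part'' you flag never needs to be carried out. The entire cycle-and-parity classification can be deleted.
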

\begin{proof}
We will use the notation $\octproj$ as defined in the proof of
Theorem~\ref{thm:union-of-octs-iff-hull}.
Suppose that $\oct_1 \uplus \oct_2 \neq \oct_1 \union \oct_2$,
so that there exists $\vect{p} \in \oct_1 \uplus \oct_2$
such that $\vect{p} \notin \oct_1$ and $\vect{p} \notin \oct_2$.
Hence, letting
$\tilde{p}_{ij} \defeq \octproj_i(\vect{p}) - \octproj_j(\vect{p})$ and
$\tilde{p}_{k\ell} \defeq \octproj_k(\vect{p}) - \octproj_\ell(\vect{p})$,
there exist arcs
$(i,j)$ and $(k, \ell)$ of $R_1$ and $R_2$, respectively, satisfying
\(
  \tilde{p}_{ij} > w_1(i, j)
\)
and
\(
  \tilde{p}_{k\ell} > w_2(k, \ell)
\);
as $\vect{p} \in \oct_1 \uplus \oct_2$, we also have
\(
  w_2(i, j) \geq \tilde{p}_{ij}
\)
and
\(
  w_1(k, \ell) \geq \tilde{p}_{k\ell}
\).
Note that $w_1(i, j)$ and $w_2(k, \ell)$ are both finite
and hence in $\Zset$ so that $\tilde{p}_{ij} \geq w_1(i, j) + 1$ and
$\tilde{p}_{k\ell} \geq w_2(k, \ell) + 1$;
also, by the tight coherence rule~\eqref{rule:tight-coherence},
if $j = \bari$, $\tilde{p}_{ij} \geq w_1(i, j) + 2$ and, if $k = \barl$,
$\tilde{p}_{k\ell}\geq w_2(k, \ell) + 2 $.
Therefore, by definition of $\epsilon_{ij}$ and $\epsilon_{k\ell}$, we have
\begin{align*}
   w_2(i, j)
    &\geq \octproj_i(\vect{p}) - \octproj_j(\vect{p})\\
    &\geq w_1(i, j) + \epsilon_{ij}, \\
   w_1(k, \ell)
    &\geq \octproj_k(\vect{p}) - \octproj_\ell(\vect{p})\\
    &\geq w_2(k, \ell) + \epsilon_{k\ell}
\intertext{%
so that conditions (1a) and (1b) hold. Moreover,
}
  w(i, \ell) + w(k, j)
    &\geq \octproj_i(\vect{p}) - \octproj_\ell(\vect{p})
      + \octproj_k(\vect{p}) - \octproj_j(\vect{p}) \\
    &= \octproj_i(\vect{p}) - \octproj_j(\vect{p})
         + \octproj_k(\vect{p}) - \octproj_\ell(\vect{p}) \\
    &\geq w_1(i, j) + w_2(k, \ell) + \epsilon_{ij} + \epsilon_{k\ell}
\intertext{%
so that condition (2a) holds
and, by a symmetrical argument,
condition (2b) holds.
Similarly,
}
  w(i, \ell) + w(k, \bari) + w(\barj, j)
    &\geq \bigl( \octproj_i(\vect{p}) - \octproj_\ell(\vect{p}) \bigr)
      + \bigl( \octproj_k(\vect{p}) + \octproj_i(\vect{p}) \bigr)
      + \bigl( - 2 \octproj_j(\vect{p}) \bigr) \\
    &= 2 \bigl( \octproj_i(\vect{p}) - \octproj_j(\vect{p}) \bigr)
         + \octproj_k(\vect{p}) - \octproj_\ell(\vect{p}) \\
    &\geq 2 w_1(i, j) + w_2(k, \ell) + 2 \epsilon_{ij} + \epsilon_{k\ell}
\end{align*}
so that condition (3a) holds;
conditions (3b), (4a) and (4b)
follow by a symmetrical argument.

We now suppose that, for some $i, j, k, \ell \in \cN$,
conditions (1a) -- (4b) hold.
Consider the graph $G' = (\cN, w')$ where,
for each $r, s \in \cN$,
\begin{equation*}
  w'(r, s)
    \defeq
      \begin{cases}
        - w_1(i, j) - \epsilon_{ij},
          &\text{if $(r, s) \in \bigl\{(j, i), (\bari, \barj)\bigr\}$;} \\
        - w_2(k, \ell) - \epsilon_{k\ell},
          &\text{if $(r, s) \in \bigl\{(\ell, k), (\bark, \barl)\bigr\}$;} \\
        w(r, s),
          &\text{otherwise.}
      \end{cases}
\end{equation*}
Let $G \defeq G_1 \graphlub G_2$;
as $G$ is coherent, $G'$ is coherent too;
as $G$ is tightly closed, $G'$ satisfies property~\eqref{rule:tight-coherence}.
Hence it follows from Lemma~\ref{lem:LahiriM05-lemma-4} that
$G'$ is $\Zset$-consistent if it has no negative weight cycles.
By using a reasoning similar to that in the proof of
Theorem~\ref{thm:union-of-octs-iff-hull}, it can be seen that
there are no negative weight cycles in $G'$ so that
$G'$ is $\Zset$-consistent and $G' \graphleq G_1 \graphlub G_2$.
Let $\oct' \in \IOctshapes_n$ be represented by $G'$,
so that $\emptyset \neq \oct' \sseq \oct_1 \uplus \oct_2$.
Since $w'(j, i) + w_1(i, j) < 0$,
we obtain $\oct' \inters \oct_1 = \emptyset$;
since $w'(\ell, k) + w_2(k, \ell) < 0$,
we obtain $\oct' \inters \oct_2 = \emptyset$.
Hence, $\oct_1 \uplus \oct_2 \neq \oct_1 \union \oct_2$.
\qed
\end{proof}

Since the tight closure and tight reduction procedures are both
in $\bigO(n^3)$~\cite{BagnaraHZ09FMSD},
the exact join detection algorithm for integer octagonal shapes
has the same asymptotic worst-case complexity of all the corresponding
algorithms for the other weakly relational shapes.

\section{Conclusion and Future Work}
\label{sec:conclusion}

Several applications dealing with the synthesis, analysis, verification
and optimization of hardware and software systems make use of
numerical abstractions.  These are sets of geometrical objects
---with the structure of a bounded join-semilattice--- that are used to
approximate the numerical quantities occurring in such systems.
In order to improve the precision of the approximation, sets
of such objects are often considered and, to limit redundancy
and its negative effects, it is important to ``merge'' objects whose
lattice-theoretic join corresponds to their set-theoretic union.

For a wide range of numerical abstractions, we have presented results
that state the necessity and sufficiency of relatively simple conditions
for the equivalence between join and union.  These conditions immediately
suggest algorithms that solve the corresponding decision problem.
For the case of convex polyhedra, we improve upon one of the algorithms
presented in \cite{BemporadFT00TR,BemporadFT01} by defining an
algorithm with better worst-case complexity.  For all the other
considered numerical abstractions, we believe the present paper
is breaking new ground.  In particular, for the case of NNC convex
polyhedra, we show that dealing with non-closedness brings significant
extra complications.  For the other abstractions, the algorithms
we propose have worst-case complexities that, in a sense,
\emph{match} the complexity of the abstraction, something that cannot
be obtained, e.g., by applying an algorithm for general convex polyhedra
to octagonal shapes.

All the above mentioned algorithms have been implemented in the Parma
Polyhedra Library \cite{BagnaraHZ08SCP}.\footnote{The Parma Polyhedra Library
is free software distributed under the terms of the GNU General Public License.
See \url{http://www.cs.unipr.it/ppl/} for further details.}
Besides being made directly available to the client applications,
they are used internally in order to implement widening
operators over powerset domains \cite{BagnaraHZ06STTT}.
Our preliminary experimental evaluation, though not extensive,
showed the efficiency of the algorithms is good, also thanks
to a careful coding following the
\emph{``first fail'' principle}.\footnote{This is a heuristics
whereby, in the implementation of a predicate whose success depends
on the success of many tests, those that are most likely to fail
are tried first.}

In this paper we have studied exact join detection for the most
popular abstract domains.  However, due to the importance numerical
domains have in the synthesis, analysis, verification and optimization
of hardware and software systems, due to the need to face the
complexity/precision trade-off in an application-dependent way,
new domains are proposed on a regular basis.  The fact that they may
be not so popular today does not impede that they can prove their strength
tomorrow.  These domains include: the
\emph{two variables per linear inequality} abstract domain
\cite{Simon08,SimonKH02},
\emph{octahedra} \cite{ClarisoC04},
\emph{template polyhedra} \cite{SankaranarayananDI08}, and
\emph{pentagons} \cite{LogozzoF08}.
It will be interesting to study exact join detection for these and
other domains, the objective being the one of finding specializations
with a complexity that matches the ``inherent complexity'' of the domain.

Even though preliminary experimentation suggests that ---in practice,
at least for some applications \cite{BagnaraHZ06STTT,BultanGP99}---
pairwise joins allow the removal of most redundancies,
work is still needed in the definition of efficient algorithms to
decide the exactness of join for $k > 2$ objects.  Moreover, it would
be useful to develop heuristics to mitigate the combinatorial explosion
when attempting full redundancy removal from a set of $m$ objects,
as it is clearly impractical to invoke $2^m-m-1$ times the decision
algorithm on $k = 2$, \dots,~$m$ objects.

\paragraph*{Acknowledgments}
We are grateful to the participants of the \emph{Graphite Workshop}
(AMD's Lone Star Campus, Austin, Texas, November 16--17, 2008) for
stimulating us to add efficient exact join detection algorithms to the
Parma Polyhedra Library, something that led us to the research
described in this paper.

We are also indebted to the anonymous referees for their careful and
detailed reviews, which allowed us to significantly improve the paper.

%\bibliographystyle{elsarticle-harv}
%\bibliography{ppl,ppl_citations,mybib}

\newcommand{\noopsort}[1]{}\hyphenation{ Ba-gna-ra Bie-li-ko-va Bruy-noo-ghe
  Common-Loops DeMich-iel Dober-kat Di-par-ti-men-to Er-vier Fa-la-schi
  Fell-eisen Gam-ma Gem-Stone Glan-ville Gold-in Goos-sens Graph-Trace
  Grim-shaw Her-men-e-gil-do Hoeks-ma Hor-o-witz Kam-i-ko Kenn-e-dy Kess-ler
  Lisp-edit Lu-ba-chev-sky Ma-te-ma-ti-ca Nich-o-las Obern-dorf Ohsen-doth
  Par-log Para-sight Pega-Sys Pren-tice Pu-ru-sho-tha-man Ra-guid-eau Rich-ard
  Roe-ver Ros-en-krantz Ru-dolph SIG-OA SIG-PLAN SIG-SOFT SMALL-TALK Schee-vel
  Schlotz-hauer Schwartz-bach Sieg-fried Small-talk Spring-er Stroh-meier
  Thing-Lab Zhong-xiu Zac-ca-gni-ni Zaf-fa-nel-la Zo-lo }

\end{document}